\documentclass[twocolumn,secnumarabic,nobibnotes,superscriptaddress,aps,longbibliography]{revtex4-2}

\usepackage{color, xcolor, colortbl}
\usepackage{graphicx}
\usepackage{amsmath,amssymb,amsthm,amsfonts}
\usepackage{algorithm}
\usepackage{algorithmic}
\usepackage{dcolumn}% Align table columns on decimal point
\usepackage{epstopdf}
\usepackage{bm}
\usepackage[caption=false]{subfig}
\usepackage{appendix}
\usepackage{multirow}
\usepackage{braket}
\usepackage[english]{babel}
\usepackage[pdfencoding=auto,breaklinks=true]{hyperref}
\usepackage[capitalize]{cleveref}
\usepackage[T1]{fontenc}
\usepackage{xpatch}
\usepackage{tikz}
\usepackage{adjustbox}
\usepackage{xspace}
\usepackage[roman]{complexity}

\newcommand{\bvec}[1]{\mathbf{#1}}

\renewcommand{\vr}{\bvec{r}}

\newcommand{\mc}[1]{\mathcal{#1}}

\newcommand{\abs}[1]{\left\lvert#1\right\rvert}
\newcommand{\norm}[1]{\left\lVert#1\right\rVert}

\newcommand{\ud}{\,\mathrm{d}}
\newcommand{\Or}{\mathcal{O}}

\newcommand{\RR}{\mathbb{R}}
\newcommand{\CC}{\mathbb{C}}

\newtheorem{thm}{\protect\theoremname}
\theoremstyle{plain}

\theoremstyle{plain}
\newtheorem{lem}[thm]{\protect\lemmaname}
\theoremstyle{plain}

\theoremstyle{plain}
\newtheorem*{lem*}{\protect\lemmaname}
\theoremstyle{plain}

\theoremstyle{plain}

\providecommand{\definitionname}{Definition}
\providecommand{\assumptionname}{Assumption}
\providecommand{\corollaryname}{Corollary}
\providecommand{\lemmaname}{Lemma}
\providecommand{\propositionname}{Proposition}
\providecommand{\remarkname}{Remark}
\providecommand{\theoremname}{Theorem}
\providecommand{\conjecturename}{Conjecture}

\newcommand{\REV}[1]{{#1}}

\newcommand{\prlsection}[1]{\paragraph*{#1---}}
\bibliographystyle{apsrev4-2}

%%%%%% BEGIN Extra packages %%%%%%
\usetikzlibrary{quantikz}

%
%\usepackage{hyperref}

% Tikz to highlight matrix elements
\usetikzlibrary{fit}
\tikzset{%
  highlight/.style={rectangle,rounded corners,fill=blue!15,draw,fill opacity=0.3,thick,inner sep=0pt}
}

%

%%%%%% END Extra packages %%%%%%

\usepackage{amsbsy}
\newcommand{\QUICS}{Joint Center for Quantum Information and Computer Science, University of Maryland, MD 20742, USA}
\newcommand{\DeptMath}{Department of Mathematics, University of California, Berkeley, CA 94720, USA}
\newcommand{\LBLMath}{Applied Mathematics and Computational Research Division, Lawrence Berkeley National Laboratory, Berkeley, CA 94720, USA}
\newcommand{\CIQC}{Challenge Institute for Quantum Computation, University of California, Berkeley, CA 94720, USA}
\newcommand{\Simons}{Simons Institute for the Theory of Computing, University of California, Berkeley, CA 94720, USA}
\newcommand{\CTP}{Center for Theoretical Physics, Massachusetts Institute of Technology, Cambridge, MA 02139, USA}

\begin{document}

%+Title
\title{Linear combination of Hamiltonian simulation for nonunitary dynamics with optimal state preparation cost}
\author{Dong An}
\affiliation{\QUICS}
\author{Jin-Peng Liu}
\affiliation{\DeptMath}
\affiliation{\Simons}
\affiliation{\CTP}
\author{Lin Lin}
\email{linlin@math.berkeley.edu}
\affiliation{\DeptMath}
\affiliation{\LBLMath}
\affiliation{\CIQC}

\date{Latest revision: \today}

\begin{abstract}
We propose a simple method for simulating a general class of non-unitary dynamics as a linear combination of Hamiltonian simulation (LCHS) problems. LCHS does \emph{not} rely on converting the problem into a dilated linear system problem, or on the spectral mapping theorem. The latter is the mathematical foundation of many quantum algorithms for solving a wide variety of tasks involving non-unitary processes, such as the quantum singular value transformation. The LCHS method can achieve optimal cost in terms of state preparation. We also demonstrate an application for open quantum dynamics simulation using the complex absorbing potential method with near-optimal dependence on all parameters.

\end{abstract}
\maketitle

\prlsection{Introduction}
Fault-tolerant quantum computers are expected to excel in simulating unitary dynamics, such as the dynamics of a quantum state under a Hamiltonian. Most applications in scientific and engineering computations involve non-unitary dynamics and processes. Therefore, efficient quantum algorithms are the key for unlocking the full potential of quantum computers to achieve comparable speedup in these general tasks. Quantum phase estimation (QPE)~\cite{NielsenChuang2000,KitaevShenVyalyi2002} is the first algorithm to bridge this gap. QPE  stores the eigenvalues of the Hamiltonian in an ancilla quantum register, and can be used to solve a wide range of problems, including amplitude estimation~\cite{BrassardHoyerMoscaEtAl2002}, linear systems~\cite{HarrowHassidimLloyd2009}, and differential equations~\cite{Berry2014,ChildsLiu2020}.
There have been two significant improvements over QPE based methods. The first is  linear combinations of unitaries (LCU)~\cite{ChildsWiebe2012}, which provides  an exponential improvement in precision for tasks such as the Hamiltonian simulations~\cite{BerryChildsCleveEtAl2015,BerryChildsKothari2015} and linear systems~\cite{ChildsKothariSomma2017}. 
Here the exponential improvement refers to the cost of preparing a quantum state in a register, which also directly translates into a polynomial improvement in precision  when the final  measurement cost is taken into account.
The second is quantum signal processing (QSP)~\cite{LowChuang2017}, and its generalization including quantum singular value transformation (QSVT)~\cite{GilyenSuLowEtAl2019}, and quantum eigenvalue transformation of unitary matrices (QETU)~\cite{DongLinTong2022}. Compared to LCU, QSP based approaches can achieve a similar level of accuracy but with a much more compact quantum circuit and a minimal number of ancilla qubits. 

The unifying mathematical argument that underlies many of these approaches is the spectral mapping theorem for Hermitian matrices: Let $A$ be a Hermitian matrix and $f(A)$ be a real-valued matrix function defined on the eigenvalues of $A$. Then, the eigenvalues of the matrix $f(A)$ are equal to the values of the classical function $f$ applied to the eigenvalues of $A$, i.e., if $\lambda_1, \dots, \lambda_N$ are the eigenvalues of $A$, then the the eigenvalues of $f(A)$ are $f(\lambda_1), \dots, f(\lambda_N)$. Furthermore, $f(A)$ is diagonalized by the same unitary matrix that diagonalizes $A$.
For instance,  the quantum linear system problem corresponds to $f(A)=A^{-1}$, Hamiltonian simulation corresponds to $f(A)=\cos(At), \sin(At)$ (for the real and imaginary parts of $e^{-iAt}$), and  Gibbs state preparation corresponds to $f(A)=e^{-A}$. 

The limitation of this matrix-function-based perspective can be readily observed when solving a general differential equation 
\begin{equation}
\partial_t u(t) = -A(t) u(t)+b(t), \quad u(0)=u_0.
\label{eqn:inhom_general_diff_eq}
\end{equation}
When $b(t)=0$ and $A(t)=A\in\CC^{N\times N}$ is a general time-independent matrix, the system has a closed form solution $u(t)=e^{-At}u_0$. Even in this case, the eigenvalues of $A$ may not be real, the eigenvectors of $A$ may not form a unitary matrix, or $A$ may not be diagonalizable at all. These difficulties prevent us from applying techniques such as QSVT to implement $f(A)=e^{-At}$ on a quantum computer, and the situation becomes much more complicated when $A,b$ are not some fixed matrices and vectors, but are time-dependent. To solve \cref{eqn:inhom_general_diff_eq}, most existing quantum algorithms convert the problem into a quantum linear system problem (QLSP) with a fixed and dilated matrix (i.e., a matrix of enlarged size). The resulting QLSP can then be solved using many of the aforementioned techniques based on the spectral mapping argument. However, both the construction of the linear system problem and the solution of the QLSP with near-optimal complexity (in order to achieve desired dependence on parameters such as the precision and the simulation time) can be very involved. We shall compare our new proposals with the QLSP approach later in the paper.

In this work, we propose a significantly simplified solution to the non-unitary process in \cref{eqn:inhom_general_diff_eq}. Our procedure is not based on the spectral mapping theorem, but  on a surprising identity expressing the solution as a linear combination of Hamiltonian simulation (LCHS) problems. LCHS can be viewed as a special case of LCU. Namely, each Hamiltonian simulation problem is described by a unitary operator. Unlike LCU for Hamiltonian simulation or solving linear systems, these unitary operators do not commute with each other. LCHS is a very flexible procedure. The linear combination can be implemented in a hybrid quantum-classical fashion to compute observables related to the solution, using a small amount of quantum resources and is thus suitable for the setting of early fault-tolerant quantum computers. The linear combination can also be coherently implemented to prepare the solution directly in a quantum register and to reduce the complexity. In this case, we show that the cost of LCHS is optimal in terms of state preparation, which is useful when the initial state $u_0$ is difficult to prepare. 

When the anti-Hermitian part of the matrix $A(t)$ is fast-forwardable, we may incorporate the interaction picture Hamiltonian simulation in the LCHS to obtain further improvements. 
A practical example with such a feature is the open quantum system dynamics. 
Many problems in quantum dynamics, such as molecular scattering~\cite{MahapatraSathyamurthy1997}, photodissociation~\cite{VibokBalint-Kurti1992}, and nanotransport~\cite{Datta2005}, are defined in an infinite space.
Unlike solving the ground state of molecules in quantum chemistry, replacing the infinite space by a finite-sized box in these quantum dynamics problems may lead to significant errors at least along certain extended dimensions. 
Therefore boundary conditions need to be carefully designed and implemented to balance the accuracy of the simulation and the computational cost. 
One widely used method in quantum chemistry is the complex absorbing potential (also called the imaginary potential) method~\cite{Child1991,VibokBalint-Kurti1992,MugaPalaoNavarroEtAl2004}. 
In this case, we show that our LCHS algorithm can achieve near-optimal dependence on \textit{all} parameters in both state preparation and matrix input models. 

\medskip
\prlsection{Linear combination of Hamiltonian simulation}
Let us consider the homogeneous problem first with $b(t)=0$.
\begin{thm}[Linear combination of Hamiltonian simulation]
Let $A(t)\in \CC^{N\times N},t\in\mc{I}=[0,T]$ be decomposed into a Hermitian and an anti-Hermitian part, 
$A(t)=L(t)+iH(t)$, where $L(t)=\frac{A(t)+A^{\dag}(t)}{2}$ and $H(t)=\frac{A(t)-A^{\dag}(t)}{2i}$. 
Assume $L(t)\succeq 0$ for all $t\in\mc{I}$.  Then ($\mc{T}$ is the time ordering operator)
\begin{equation}
\mc{T}e^{-\int_0^t A(s)\ud s} = \int_{\RR} \frac{1}{\pi(1+k^2)} \mc{T} e^{- i \int_0^t (H(s)+kL(s)) \ud s} \ud k. 
\label{eqn:lchs}
\end{equation}
\label{thm:lchs}
\end{thm}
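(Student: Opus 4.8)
The plan is to bypass functional calculus entirely and recognize \cref{eqn:lchs} as the evaluation, by the residue theorem, of a single analytic family of unitaries against the Lorentzian kernel. Fix $t\in\mc{I}$ and, for $k\in\CC$, set $U_k(t)=\mc{T}e^{-i\int_0^t(H(s)+kL(s))\ud s}$, the solution of $\partial_t U_k(t)=-i(H(t)+kL(t))U_k(t)$ with $U_k(0)=I$. I would first record two structural properties of $k\mapsto U_k(t)$. \emph{(i) Analyticity:} in the Dyson series for $U_k(t)$ the $n$-th term has norm at most $\big((1+\abs{k})Mt\big)^n/n!$ with $M=\sup_{s\in\mc{I}}\max(\norm{H(s)},\norm{L(s)})$, so the series converges locally uniformly in $k$ and $U_k(t)$ extends to an entire $\CC^{N\times N}$-valued function of $k$. \emph{(ii) Contractivity on the closed lower half-plane:} for $\psi\in\CC^N$, $s\le t$, and $\psi_s:=U_k(s)\psi$,
\[
\tfrac{\ud}{\ud s}\norm{\psi_s}^2 = 2\Re\big\langle\psi_s,\,-i(H(s)+kL(s))\psi_s\big\rangle = 2\,\Im(k)\,\langle\psi_s,L(s)\psi_s\rangle \le 0
\]
whenever $\Im(k)\le 0$, since $H(s)^{\dag}=H(s)$ and $L(s)\succeq 0$; hence $\norm{U_k(t)}\le 1$ for all $k$ in the closed lower half-plane (and $U_k(t)$ is unitary for real $k$). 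This is the step that uses the hypothesis $L\succeq 0$, and it is what makes the lower half-plane the side on which one can close a contour.

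The only nonroutine observation is that the weight $\tfrac{1}{\pi(1+k^2)}=\tfrac{1}{\pi(k-i)(k+i)}$ has a simple pole at $k=-i$, which lies in the open lower half-plane, with residue $-\tfrac{1}{2\pi i}$, and that the integrand evaluated there is exactly the object we want: since $-i\big(H(s)+(-i)L(s)\big)=-iH(s)-L(s)=-(L(s)+iH(s))=-A(s)$, we have $U_{-i}(t)=\mc{T}e^{-\int_0^tA(s)\ud s}$. So the choice of the Lorentzian weight is precisely what is needed to make $k=-i$ reproduce the non-unitary propagator.

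Granting (i) and (ii), the computation is short. I would apply the residue theorem (componentwise in the matrix entries, each of which is meromorphic with poles only at $k=\pm i$) to $k\mapsto\tfrac{U_k(t)}{\pi(1+k^2)}$ on the contour $[-R,R]$ completed by the lower semicircular arc $\{Re^{i\theta}:\theta\in[-\pi,0]\}$ back to $-R$; this encloses only the pole at $k=-i$ and, being clockwise, contributes $-2\pi i$ times its residue. On the arc, (ii) bounds the integrand by $1/\big(\pi(R^2-1)\big)$ while the arc has length $\pi R$, so the arc contribution is $\Or(1/R)\to 0$. Letting $R\to\infty$,
\[
\int_{\RR}\frac{U_k(t)}{\pi(1+k^2)}\ud k = -2\pi i\,\operatorname*{Res}_{k=-i}\frac{U_k(t)}{\pi(1+k^2)} = U_{-i}(t) = \mc{T}e^{-\int_0^tA(s)\ud s},
\]
which is \cref{eqn:lchs}; the improper integral converges absolutely because its integrand is $\Or(\abs{k}^{-2})$ on $\RR$ and $\norm{U_k(t)}\le 1$ there.

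I expect the main obstacle to be technical bookkeeping rather than a conceptual hurdle: establishing (ii) for \emph{time-ordered} exponentials under suitable mild hypotheses on $A(\cdot)$ (e.g.\ boundedness and continuity on $[0,T]$, which should be stated), justifying the analytic continuation in $k$ and the interchange of limits with integration, and phrasing the residue calculus and the arc estimate for matrix-valued contour integrals. Conceptually, everything rests on pairing the dissipativity estimate (ii)---the one place where $L\succeq 0$ is used---with the quadratic decay of the Lorentzian kernel: together they force the integral over $\RR$ to collapse onto the residue at $k=-i$, and that residue is exactly $\mc{T}e^{-\int_0^t A(s)\ud s}$.
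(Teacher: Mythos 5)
Your proposal is correct, and it takes a genuinely different route from the paper. The paper never evaluates the integral directly: it proves a cancellation lemma, $\mathcal{P}\int_{\RR}\frac{1}{1+ik}\,\mc{T}e^{-i\int_0^t(H+kL)\ud s}\ud k=0$ (\cref{lem:cauchy_matrix,lem:cauchy_matrix_td}), by deforming the contour into the half-plane containing no pole of the kernel, then uses that lemma to show the right-hand side $W(t)$ of \cref{eqn:lchs} satisfies $W'(t)=-A(t)W(t)$, $W(0)=I$, and concludes by ODE uniqueness; because the kernel $1/(1+ik)$ decays only like $1/R$, the arc estimate there needs the extra exponential decay $e^{-t\lambda_0 R\cos\theta}$, which forces the regularization $L(t)\succeq\lambda_0>0$ and a final continuity limit $\lambda_0\to0$. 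You instead compute the integral in one stroke by the residue theorem: the Lorentzian's pole at $k=-i$ reproduces exactly $U_{-i}(t)=\mc{T}e^{-\int_0^tA\ud s}$, and since your kernel decays quadratically, the mere contraction bound $\norm{U_k(t)}\le1$ for $\Im k\le0$ (which uses only $L\succeq0$) kills the lower arc. This buys you three simplifications over the paper's argument: no ODE/uniqueness step, no justification of differentiating under the principal-value integral, and no strict-positivity regularization with a limiting argument. What the paper's route buys in exchange is that its lemma-plus-ODE structure dovetails with the rest of the paper (the same $W'=-AW$ characterization feeds directly into the Duhamel treatment of the inhomogeneous term), and it makes explicit the "sign-flip" mechanism behind the identity. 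The only things you should make explicit are the mild regularity hypotheses (e.g.\ $H(\cdot),L(\cdot)$ bounded and continuous on $[0,T]$) needed for the Dyson series to converge locally uniformly in $k$ and define an entire family $k\mapsto U_k(t)$ — the paper assumes the analogous analyticity of $V(t;ik)$ without proof — and the routine entrywise application of the residue theorem with the clockwise orientation, both of which you already flag; neither is a gap in substance.
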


\cref{thm:lchs} can be viewed as a generalization of the Fourier representation of the exponential function  $f(x)= e^{-\abs{x}}$
\begin{equation}
\hat{f}(k)=\frac{1}{2\pi}\int_{\RR} e^{-\abs{x}} e^{-i kx} \ud x= \frac{1}{\pi(1+k^2)}.
\label{eqn:lchs_scalar}
\end{equation}
Note that $\hat{f}(k)\ge 0$ and $\int \hat{f}(k)\ud k=1$. Therefore $\hat{f}(k)$ is the density of a probability distribution, called the Cauchy--Lorentz distribution. If $H(t)=0$ and $L(t)=L$ is time-independent, \cref{thm:lchs} can be readily proved from \cref{eqn:lchs_scalar} and the spectral mapping theorem. 
\REV{This special-case formula has been applied to simulating imaginary time evolution dynamics~\cite{ZengSunYuan2022,HuoLi2023}.
However, our~\cref{thm:lchs} works in a more general setting where the matrix can be time-dependent and non-Hermitian. }
\REV{Our general proof hinges on a special instance of the matrix version of the Cauchy integral theorem, which is a key for avoiding the spectral mapping argument (see the Supplemental Materials).}

The condition that the Hermitian part $L(t)$ is positive semidefinite can always be satisfied without loss of generality. Indeed, by redefining $u(t)=e^{ct} v(t)$, the equation for $v(t)$ is
\begin{equation}
\partial_t v(t) = -(L(t)+cI+iH(t)) v(t)+e^{-ct}b(t), \quad v(0)=u_0.
\end{equation}
By choosing $-c$ to be the minimum of the smallest eigenvalues of $L(t)$ on $t\in \mc{I}$, $L(t)+cI$ is a positive semidefinite matrix.

When $b(t)=0$, the solution to \cref{eqn:inhom_general_diff_eq} becomes $u(t)=\mc{T}e^{-\int_0^t A(s)\ud s} u_0$. By discretizing the integral with respect to $k$ using a grid $k_j$ with quadrature weights $\omega_j$, \cref{eqn:lchs} becomes
\begin{equation}
u(t) \approx  \sum_{j} c_j U_j(t)u_0, 
\end{equation}
where $c_j=\frac{\omega_j}{\pi(1+k_j^2)}$, and $U_j(t)=\mc{T} e^{- i \int_0^t (H(s)+k_j L(s)) \ud s}$ is the propagator for a time-dependent Hamiltonian simulation problem.
 Therefore \cref{eqn:lchs} compactly expresses the solution as a problem of LCHS, which can be coherently implemented using LCU \REV{(see the Supplemental Materials).}

If we are only interested in obtaining observables of the form $u(t)^{*} O u(t)$, we may implement the linear combination in a hybrid quantum-classical fashion. Notice that, since $U_j$'s are unitary, the observable can be expressed as 
\begin{equation}
u(t)^{*} O u(t) \approx \sum_{k,k'} c^*_k c_{k'}\braket{u_0|U^{\dagger}_k(t)OU_{k'}(t)|u_0}.
\label{eqn:observable}
\end{equation} 
We can then use the quantum computer to evaluate a series of correlation functions $\braket{u_0|U^{\dagger}_k(t)OU_{k'}(t)|u_0}$ via \REV{the Hadamard test for non-unitary matrices~\cite{TongAnWiebe2021}} and amplitude estimation~\cite{BrassardHoyerMoscaEtAl2002}, and perform the summation on a classical computer 
\REV{(see the Supplemental Materials). }

In the presence of the source term $b(t)$, we can use the Duhamel's principle (a.k.a. variation of constants) to express the solution as
\begin{equation}\label{eqn:Duhamel}
\begin{split}
    u(t) &=\int_{\RR} \frac{1}{\pi(1+k^2)} \mc{T} e^{- i \int_0^t (H(s)+kL(s)) \ud s} u_0\ud k \\
    & + \int_0^t \int_{\RR} \frac{1}{\pi(1+k^2)} \mc{T} e^{- i \int_s^t (H(s')+kL(s')) \ud s'} b(s) \ud k\ud s.
\end{split}
\end{equation}
We may again use LCU to coherently prepare the state $u(t)$, or use an expression similar to that of \cref{eqn:observable} for hybrid computation of observables. 

\medskip
\prlsection{Implementation}
The LCHS can be implemented in a gate-efficient way by combining LCU with any Hamiltonian simulation algorithms. 
Here we discuss the simplest implementation of LCHS based on the product formula, and we will discuss the one based on the truncated Dyson series method later in the paper. 

We first truncate the integral in~\cref{eqn:lchs} on a finite interval $[-K,K]$ and discretize it by a trapezoidal rule with $(M+1)$ grid points. 
We obtain $\mathcal{T} e^{-\int_0^T A(s) \ud s}u_0 \approx \sum_{j=0}^M c_j \mathcal{T} e^{-i\int_0^T (H(s) + k_jL(s)) \ud s} u_0$. 
Here $c_j = \frac{w_j}{\pi(1+k_j^2)}$, $w_j = \frac{(2-\mathbf{1}_{j=0,M})K}{M}$ and $k_j = -K+\frac{2jK}{M}$ are the weights and nodes of the trapezoidal rule.  
To implement each $U_j(T)=\mathcal{T} e^{-i\int_0^T (H(s)+k_jL(s)) \ud s}$, we use a $p$-th order product formula with a fixed number of steps $r$ for all $j$. 
Then, 
\begin{equation}\label{eqn:LCU_Trotter_homo}
\begin{split}
    &\mathcal{T} e^{-\int_0^T A(s) \ud s} u_0 \approx \sum_{j=0}^M c_j v_j, \\
    &v_j = \prod_{l'=0}^{r-1} \prod_{l=0}^{\Xi_p-1} \left(e^{-i H\left(\frac{(l'+ \delta_l) T}{r}\right)\frac{\beta_l T}{r}} e^{-iL \left(\frac{(l'+ \gamma_l) T}{r}\right) \frac{\alpha_l k_j T}{r}}\right) u_0. 
\end{split}
\end{equation}
Here $\Xi_p$ is the number of the exponentials in the product formula, $\alpha_l$'s and $\beta_l$'s are the corresponding coefficients, and $\gamma_l$'s and $\delta_l$'s determine the discrete times at which the time-dependent Hamiltonians are evaluated (see~\cite{WiebeBerryHoyerEtAl2010} for an example of the product formula via Suzuki recursion). 

Suppose that we are given the state preparation oracle $O_{\text{prep}}: \ket{0}\rightarrow \ket{u_0}$, the Hamiltonian simulation oracles $O_{L}(s,\tau) = e^{-i L(\tau ) s}$ for $|s| \leq 1/\|L\|$ and $O_{H}(s,\tau) = e^{-i H(\tau ) s}$ for $|s| \leq 1/\|H\|$, and the LCU coefficient oracle  $O_{\text{coef}}: \ket{0} \rightarrow \frac{1}{\sqrt{\|c\|_1}} \sum_{j=0}^{M} \sqrt{c_j} \ket{j}$. 
According to the binary representation of $j$'s, we may first construct a coherent encoding of the time evolution as the select oracle $\text{SEL}_L (s,\tau) = \sum_{j=0}^M \ket{j}\bra{j} \otimes e^{-iL (\tau ) k_j s}$ using $\mathcal{O}(\log(M))$ queries to $O_{L}(s,\tau)$ (see Supplemental Materials and, e.g.,~\cite{ChildsKothariSomma2017,LowWiebe2019,AnFangLin2022} for details). 
Then~\cref{eqn:LCU_Trotter_homo} can be implemented following the standard LCU approach by first applying $O_{\text{coef}} \otimes O_{\text{prep}}$, then sequentially applying $\text{SEL}_L(\alpha_l T/r,(l'+\gamma_l)T/r)$ and  $O_H(\beta_l T/r,(l'+\delta_l)T/r)$ for $l\in[\Xi_p]$ and $l'\in[r]$, and finally applying $O_{\text{coef}}^{\dagger}$ on the ancilla register. 
Such a procedure yields the quantum state approximating $\frac{1}{\|c\|_1\|u_0\|}\ket{0}_a \mathcal{T} e^{-\int_0^T A(s) \ud s} u_0 + \ket{\perp}$, encoding the homogeneous part in its first subspace.

For the inhomogeneous term in~\cref{eqn:Duhamel}, after discretizing the integral for both $k$ and $s$ using the multidimensional trapezoidal rule, we obtain $\int_0^T \mathcal{T} e^{-\int_s^T A(s') \ud s'} b(s) \ud s \approx \sum_{j'=0}^{M_t} \sum_{j=0}^M  \widetilde{c}_{j,j'} \mathcal{T} e^{- i \int_{s_{j'}}^T (H(s')+k_j L(s')) \ud s' } \ket{b(s_{j'})}$. 
Here $\widetilde{c}_{j,j'} = \frac{v_{j'}w_j \|b(s_{j'})\|}{\pi(1+k_j^2)}$, $w_j$, $k_j$ are as defined before, $v_{j'} = \frac{(2-\mathbf{1}_{j'=0,M})T}{2M_t}$ and $s_{j'} = \frac{j'T}{M_t}$. 
This can also be implemented by the same Trotterization and LCU approach as the homogeneous case (see Supplemental Materials for details). 
Notice that the evolution time of different Hamiltonians in the LCHS of the inhomogeneous term varies, so we will assume the input model of $H$, $L$ to coherently encode the evolution of different Hamiltonians with different time periods, as $O_L'(s,\tau_0,\tau_1) = \sum_{j'=0}^{M_t} \ket{j'}\bra{j'}\otimes  e^{ -iL(\tau_0 j' + \tau_1 (M_t-j'))s(M_t-j')}$ and 
$O_H'(s,\tau_0,\tau_1) = \sum_{j'=0}^{M_t} \ket{j'}\bra{j'} \otimes e^{-iH(\tau_0 j' + \tau_1 (M_t-j'))s(M_t-j')}$ for $|s|M_t \leq 1/\|H\|$. 
The input for $b(t)$ is also in a time-dependent manner as $O_{b}: \ket{j'}\ket{0} \rightarrow \ket{j'}\ket{b(s_{j'})}$. 
All of these oracles are an extension of the time-dependent encoding proposed in~\cite{LowWiebe2019}. 

To linearly combine $\mathcal{T}e^{- \int_0^T A(s) \ud s}u_0$ and $\int_0^T \mathcal{T}e^{- \int_s^T A(s') \ud s'} b(s) \ud s$, we append an extra ancilla qubit, prepare both states controlled by this ancilla qubit, and implement the LCU again at the outer loop using a single-qubit rotation. 
The final step is to measure all the ancilla registers, and if all the outcomes are $0$, then the resulting state approximately encodes the solution $u(t)$ of the ODE. 

\medskip
\prlsection{Computational cost}
The complexity of our algorithm for solving~\cref{eqn:inhom_general_diff_eq} is given as follows.  Here $\ket{u(T)}$ is the solution to \cref{eqn:inhom_general_diff_eq} at the final time $T$.

\begin{thm}\label{thm:td_inhomo}
    There exists a quantum algorithm that prepares an $\epsilon$-approximation of the state $\ket{u(T)}$  with $\Omega(1)$ success probability and a flag indicating success, using 
    \begin{enumerate}
        \item queries to the aforementioned input models of $H$ and $L$ a total number of times 
        \begin{equation}
            \widetilde{\mathcal{O}}\left( \left( \frac{\|u_0\|+\|b\|_{L^1}}{\norm{u(T)} } \right)^{2+2/p} \frac{ \Gamma_p^{1+1/p} T^{1+1/p}}{\epsilon^{1+2/p}} \right)
        \end{equation}
        where $\|b\|_{L_1} = \int_0^T \|b(s)\| \ud s$ and $\Gamma_p = \max_{0\leq q \leq p, \tau \in [0,T]} \left( \|H^{(q)}(\tau)\|+ \|L^{(q)}(\tau)\|\right)^{1/(q+1)}$, 
        \item queries to the state preparation oracle $O_{\text{prep}}$ and the source term input model $O_b$ for $\mathcal{O}\left( \frac{ \|u_0\|+ \|b\|_{L^1}}{\norm{u(T)} } \right)$ times,  
        \item $\mathcal{O}(\log(\Gamma_1 \|b\|_{C^2} T/\epsilon))$ ancilla qubits, where $\|b\|_{C^2} = \sum_{q=0}^2 \max_{\tau \in [0,T]} \|b^{(q)}(\tau)\| $, 
        \item $\mathcal{O}\left( \frac{ \|u_0\|+ \|b\|_{L^1}}{\norm{u(T)} } \right)$ additional one-qubit gates. 
    \end{enumerate}
\end{thm}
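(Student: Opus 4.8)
Here is the argument I would run. The plan is to write $\ket{u(T)}$ via~\cref{eqn:Duhamel}, introduce three independent approximations — truncating the $k$-integral to $[-K,K]$, applying composite trapezoidal quadrature to the resulting finite integrals over $k$ and over $s$, and replacing every Hamiltonian-simulation propagator by a $p$-th order product formula with a single uniform step count $r$ (as in~\cref{eqn:LCU_Trotter_homo}) — and then fix $K$, $M$, $M_t$, $r$ by balancing all error contributions against the budget $\epsilon\norm{u(T)}$. The reduction to the unnormalized vector is the usual one: an error $\delta$ in the unnormalized solution induces an $\mathcal{O}(\delta/\norm{u(T)})$ error in the normalized state $\ket{u(T)}$, and amplitude amplification within the success subspace preserves this. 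Throughout, write $g=(\norm{u_0}+\norm{b}_{L^1})/\norm{u(T)}$. For the truncation, the Cauchy--Lorentz tail obeys $\int_{\abs{k}>K}\frac{\ud k}{\pi(1+k^2)}=\mathcal{O}(1/K)$, and since every $\mc{T}e^{-i\int(H+kL)}$ is unitary the truncation error of~\cref{eqn:Duhamel} is $\mathcal{O}((\norm{u_0}+\norm{b}_{L^1})/K)$; hence $K=\widetilde{\mathcal{O}}(g/\epsilon)$ suffices.

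For the quadrature I bound the $k$-derivatives of the propagator by variation of constants, $\norm{\partial_k^m\,\mc{T}e^{-i\int_0^t(H(s)+kL(s))\ud s}}\le (t\max_s\norm{L(s)})^m$ up to combinatorial constants, and pair this with the decay of the kernel so that $\int_{-K}^{K}\norm{\partial_k^2(\,\cdot\,)}\ud k$ stays $\mathrm{poly}(T\Gamma_1)$; the composite-trapezoidal estimate then gives $M=\mathrm{poly}(K,T\Gamma_1,1/\epsilon)$ nodes for the discretization error to be within budget, and an analogous argument for the $s$-integral — using $s$-derivatives of the Duhamel integrand, which bring in $\norm{b}_{C^2}$ — makes $M_t$ polynomial in all parameters. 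Since the LCU selection register has dimension $\mathcal{O}(M)$ and the inhomogeneous time register dimension $\mathcal{O}(M_t)$, and both are polynomial, the ancilla count is $\mathcal{O}(\log(\Gamma_1\norm{b}_{C^2}T/\epsilon))$ together with the $\mathcal{O}(1)$ success flag and outer-LCU qubit; this is item~3.

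After truncation and discretization each surviving term is a time-dependent simulation of $H(s)+k_jL(s)$ over an interval of length $\le T$, whose norm and higher-derivative scales are at most $(1+\abs{k_j})\le (1+K)$ times the $\Gamma_p$-type quantities, so by the $p$-th order time-dependent product-formula error bound — whose nested-commutator and derivative content is exactly what the quantity $\Gamma_p$ in the statement packages — a uniform $r=\widetilde{\mathcal{O}}\bigl((TK\Gamma_p)^{1+1/p}\epsilon_{\mathrm{PF}}^{-1/p}\bigr)$ makes every per-propagator error at most $\epsilon_{\mathrm{PF}}$. Because $\norm{c}_1=\mathcal{O}(1)$ (a Riemann sum of a probability density, and likewise the inhomogeneous weights sum to $\mathcal{O}(\norm{b}_{L^1})$), the aggregate Trotter error is $\mathcal{O}((\norm{u_0}+\norm{b}_{L^1})\epsilon_{\mathrm{PF}})$, so $\epsilon_{\mathrm{PF}}=\widetilde{\mathcal{O}}(\epsilon\norm{u(T)}/(\norm{u_0}+\norm{b}_{L^1}))$ suffices. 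The circuit of~\cref{eqn:LCU_Trotter_homo} and its inhomogeneous analogue then prepare $\frac{1}{\norm{c}_1(\norm{u_0}+\norm{b}_{L^1})}\ket{0}_a\otimes u(T)+\ket{\perp}$ within $\mathcal{O}(\epsilon\norm{u(T)})$, giving success amplitude $\Theta(1/g)$, so $\mathcal{O}(g)$ rounds of (fixed-point) amplitude amplification yield $\Omega(1)$ success with a flag. Each round makes $\mathcal{O}(1)$ queries to $O_{\mathrm{prep}}$ and $O_b$ and uses one extra single-qubit rotation for the outer homogeneous/inhomogeneous combination (items~2 and~4), and it makes $\mathcal{O}(r\,\Xi_p)$ calls to Hamiltonian-simulation exponentials, each compiled into $\mathcal{O}(\log(MM_t))$ queries to the $\mathrm{SEL}$ oracles built from $O_L$, $O_H$; multiplying the per-round $H,L$-query count $\widetilde{\mathcal{O}}(r)$ by the $\mathcal{O}(g)$ rounds and substituting $K=\widetilde{\mathcal{O}}(g/\epsilon)$ and $\epsilon_{\mathrm{PF}}=\widetilde{\mathcal{O}}(\epsilon/g)$ collapses to $\widetilde{\mathcal{O}}\bigl(g^{2+2/p}\,\Gamma_p^{1+1/p}T^{1+1/p}\epsilon^{-(1+2/p)}\bigr)$, which is item~1.

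The step I expect to be the main obstacle is the interplay between the truncation radius $K$ and the two parameters it feeds: the quadrature count $M$ and, above all, the uniform product-formula step count $r$. Getting the final exponents $g^{2+2/p}$ and $\epsilon^{-(1+2/p)}$ — rather than something weaker — requires carefully pairing the (merely polynomial) decay of the Cauchy kernel against the $(T\Gamma)^m$-type growth of the propagator's $k$-derivatives in the quadrature bound, and tracking precisely how the factor $k_j$, ranging up to $K$, multiplies the Hermitian component $L$ inside the nested commutators and higher time-derivatives that govern the time-dependent product-formula error, so that the $K$-dependence of $r$ is exactly $K^{1+1/p}$. The remaining ingredients — the $\mathcal{O}(1)$ value of $\norm{c}_1$, the logarithmic overhead of the $\mathrm{SEL}$ construction, the amplitude-amplification count (including the standard handling of an unknown lower bound on $\norm{u(T)}$), and the Duhamel treatment of the source term — are routine once this error budget is in place.
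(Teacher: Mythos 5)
Your proposal is correct and follows essentially the same route as the paper: truncation of the Cauchy kernel at $K=\widetilde{\mathcal{O}}(g/\epsilon)$, trapezoidal quadrature in $k$ and $s$ with derivative bounds on the propagator and Duhamel integrand, a uniform $p$-th order product formula whose step count picks up the $K^{1+1/p}$ and relative-error $\epsilon_{\mathrm{PF}}^{-1/p}$ factors, logarithmic-cost select oracles, an outer single-qubit LCU to merge the homogeneous and inhomogeneous parts, and $\mathcal{O}(g)$ rounds of amplitude amplification, yielding exactly the stated exponents. The paper merely packages the same steps into modular block-encoding lemmas before assembling them.
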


The proof of~\cref{thm:td_inhomo} can be found in the Supplemental Materials. 
Our algorithm uses a small number of queries to the state preparation oracle. 
This is because each run of the LCU procedure only requires $\mathcal{O}(1)$ uses of such oracles and the overall complexity only relates to the success probability, which contributes to a $(\|u_0\|+\|b\|_{L^1})/\|u(T)\|$ factor. 
The query complexity to the matrix input in each run depends linearly on the number of Trotter steps, which contributes to the scaling $\Gamma_p^{1+1/p} T^{1+1/p}/\epsilon^{1/p}$ according to the $p$-th order Trotter error bound. 
The extra $1/\epsilon^{1/p}$ scaling is due to the relative error scaling. 
Notice that in the matrix query complexity there are still extra terms including $1/\epsilon$ and $((\|u_0\|+\|b\|_{L^1})/\|u(T)\|)^{1+1/p}$. 
The former is because we need to simulate the Hamiltonian up to $K = \Or(\epsilon^{-1})$, while the latter arises from the necessity of bounding the relative Trotter error.
Ancilla qubits are used to implement quadrature.

Our algorithm and~\cref{thm:td_inhomo} can be directly applied to the special case where $A(t) \equiv A$ is time-independent. In this case, we may further simplify the implementation and reduce the computational cost. 
First, the coherent encoding of the time evolution is no longer needed, and the select oracles in the LCU procedure can be efficiently constructed using $O_L(s) = e^{-iLs}$ and $O_H(s) = e^{-iHs}$ with $\mathcal{O}(\log(M)\log(M_t))$ cost. 
Second, the parameter $\Gamma_p$ can be improved to the commutator scalings between $H$ and $L$ thanks to the improved error bound for time-independent product formula~\cite{ChildsSuTranEtAl2020}. 
We refer to the Supplemental Materials for more details. 

\medskip
\prlsection{Optimal state preparation cost and comparison with other methods}
The query complexity of the LCHS approach to the state preparation oracle $O_{\text{prep}}$ is $\mathcal{O}\left( \frac{\|u_0\|+\|b\|_{L^1}}{\norm{u(T)} } \right)$. 
In fact, when $b=0$, this corresponds to the dependence of a quantity denoted by $q = \frac{\|u_0\|}{\norm{u(T)} }$ in quantum differential equation solvers. Such a linear scaling of $q$ cannot be improved. From the perspective of quantum complexity theory, the renormalization from $\|u_0\|$ to $\norm{u(T)}$ could be utilized to implement postselection, and it would imply the unlikely consequence \texttt{BQP = PP} (similar discussions appear in Section 8 of~\cite{BerryChildsOstranderEtAl2017} and Section 7 of~\cite{LiuKoldenKroviEtAl2021}). 
Moreover, as shown in ~\cite[Corollary 16]{an2022theory}, any quantum differential equation algorithm must have worst-case query complexity $\Omega(q)$ to the state preparation oracle, indicating the optimal state preparation cost that our LCHS approach achieves.

Most generic quantum differential equation algorithms~\cite{Berry2014,BerryChildsOstranderEtAl2017,ChildsLiu2020,Krovi2022,berry2022quantum} convert the time-dependent differential equation problem \cref{eqn:inhom_general_diff_eq} into a QLSP. 
The efficiency of these quantum algorithms relies on the efficiency of the quantum linear system algorithms (QLSA), which typically take a large number of queries to the state preparation oracle. 
In particular, the state-of-the-art QLSA takes $\Or(\kappa\log(1/\epsilon))$ queries~\cite{CostaAnYuvalEtAl2022}, where $\kappa$ is the condition number of the QLSP matrix and depends on $T$ and $\|A(t)\|$. 
However, our algorithm directly implements the time evolution operator by LCHS without using QLSA, so the number of the state preparation oracle is significantly reduced. 
For example, when $b=0$, our algorithm takes $\mathcal{O}\left(\frac{\norm{u_0}}{\norm{u(T)}}\right)$ queries to the state preparation, while the state-of-the-art QLSA-based quantum Dyson series method~\cite{berry2022quantum} queries the initial state $\mathcal{O}\left( \frac{\norm{u_0}}{\norm{u(T)} }\|A\| T \log(1/\epsilon)\right)$ times. 
Our algorithm removes the explicit dependence on $\|A\|$, $T$ and $\epsilon$, and matches the lower bound. 
Additionally, in the QLSA-based approaches, the value of $\kappa$ can be difficult to estimate in practice, and it is common that the theoretical bound significantly overestimates the value of $\kappa$~\cite{Krovi2022}. 
We also note that the time marching method~\cite{FangLinTong2022} does not involve QLSP and can query the initial state $\mathcal{O}\left( \frac{\|u_0\|}{\norm{u(T)} } \right)$ times as well. 
In this case, our LCHS approach removes the need of implementing the uniform singular value amplification procedure and can thus be simpler to implement. It also outperforms the time marching method in terms of the number of matrix queries with respect to the evolution time, with an improvement from $\Or(T^2)$ to $\Or(T^{1+o(1)})$. 

\medskip
\prlsection{Applications to open quantum system dynamics with complex absorbing potentials}
 In its simplest form, the complex absorbing potential method~\cite{Child1991,VibokBalint-Kurti1992,MugaPalaoNavarroEtAl2004} replaces the real potential by a complex one, and the time-dependent Schr\"odinger equation becomes
\begin{equation}
i\partial_t u(\vr,t)=\left(-\frac12 \Delta_{\vr} +V_R(\vr,t)-iV_I(\vr)\right)u(\vr,t). %\quad u(\vr,0)=u_0(\vr).
\label{eqn:complex_absorb}
\end{equation}
Here $V_R(\vr,t)$ is the real time-dependent external potential, and $-iV_I(\vr)$ is the absorbing potential and can often be chosen to be time-independent. The minus sign reflects that this is a damping potential, and $V_I$ can be chosen to be bounded and non-negative. 
For simplicity we only discuss \cref{eqn:complex_absorb} in the context of single-particle dynamics, and this formulation can be generalized to accommodate multi-particle dynamics. 
Our method can also be generalized to other more boundary treatment methods such as the perfectly matched layer (PML) method~\cite{Berenger1994,Zheng2007}. In these applications, we are interested in the case before the scattering wave leaves the region of interest, i.e.,  $\norm{u(T)}^2=\int \abs{u(\vr,T)}^2 \ud \vr$ is not too small. Such a condition can also be satisfied if $u_0$ is a near-resonance state. \REV{Another widely used method for modeling open quantum system dynamics is the Gorini--Kossakowski--Sudarshan--Lindblad (GKSL) quantum master equation~\cite{Lindblad1976,GoriniKossakowskiSudarshan1976,LandiPolettiSchaller2022}. The non-Hermitian quantum dynamics \eqref{eqn:complex_absorb} is also related to a class of numerical methods for solving the GKSL equation, known as the quantum jump or the Monte Carlo wavefunction method~\cite{DalibardCastinMolmer1992,DumZollerRitsch1992}.}

To solve~\cref{eqn:complex_absorb} using our LCHS algorithm, we first discretize the spatial variable using $N$ equidistant grid points, and the Laplace operator is discretized by the central difference formula. 
In the context of \cref{eqn:inhom_general_diff_eq}, the Hermitian part $H(t)=-\frac12 \Delta_{\vr} +V_R(t)$ is the standard Hamiltonian with $\|H(t)\| = \mathcal{O}(N^2+\max_t\|V_R(t)\|)$, and the non-Hermitian part $L=V_I$ is a time-independent positive semi-definite matrix. 
Furthermore, the dynamics $e^{-ikLt}$ can be fast-forwarded in the sense that $e^{-ikLt}$ can be performed with cost independent of $k,t$, and $\|L\|$~\cite{ahokas2004improved}, so we may perform the Hamiltonian simulation in the interaction picture with the truncated Dyson series method~\cite{LowWiebe2019} to avoid computational overhead brought by $K$. 

In the interaction picture, the LCHS becomes 
\begin{equation}\label{eqn:LCHS_interaction_picture}
    \begin{split}
        \mathcal{T} e^{-\int_0^t A(s) \ud s } \approx \sum_{j=0}^{M} c_j e^{-i L k_j t} \left(\mc{T} e^{- i \int_0^t H_I(s;k_j) \ud s}\right)  e^{i L k_j t}, 
    \end{split}
\end{equation}
where $H_I(s;k) = e^{iL k s} H(s) e^{-i L k s}$. 
\cref{eqn:LCHS_interaction_picture} can be directly implemented by LCU. 
Notice that the derivative of $H_I$ still scales linearly in $K$, but the cost of the truncated Dyson series method scales only logarithmically with respect to the derivative of $H_I(s)$ and thus $K$~\cite{LowWiebe2019}. 

Our main result is as follows, which achieves near-optimal scaling in \textit{all} parameters. 
The proof is given in the Supplemental Materials. 
We remark that the following result also holds in a more general case where $H(t)$ is an arbitrary time-dependent Hamiltonian and $L$ is a time-independent fast-forwardable Hamiltonian. 

\begin{thm}\label{thm:cap}
    Consider the spatially discretized~\cref{eqn:complex_absorb} using finite difference with $N$ equidistant grid points. 
    Suppose that we are given the oracles $O_{V_I}: \ket{\vr}\ket{0} \rightarrow \ket{\vr}\ket{V_I(\vr)}$,  $O_{V_R}: \ket{\vr}\ket{s}\ket{0} \rightarrow \ket{\vr}\ket{s}\ket{V_R(\vr,s)}$, and the state preparation oracle $O_{\text{prep}}$ for the initial condition. 
    Then there exists a quantum algorithm that prepares an $\epsilon$-approximation of the state $\ket{u(T)}$ with $\Omega(1)$ success probability and a flag indicating success, using 
    \begin{equation}
        \widetilde{\mathcal{O}} \left( \frac{\norm{u_0}}{\norm{u(T)}}  T (\max_t\|H(t)\|)\text{~poly}\log\left(\frac{\max_t\|V_R'(t)\|\|V_I\|}{\epsilon} \right) \right)
    \end{equation}
    queries to $O_{V_I}$ and $O_{V_R}$, and $\mathcal{O}(\norm{u_0}/\norm{u(T)})$ queries to $O_{\text{prep}}$. 
    Here $\max_t\|H(t)\| = \mathcal{O}(N^2+\max_t\|V_R(t)\|)$. 
\end{thm}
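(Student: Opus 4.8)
To prove \cref{thm:cap} the plan is to instantiate the interaction-picture LCHS identity \eqref{eqn:LCHS_interaction_picture} and implement each Hamiltonian simulation with the truncated Dyson series method \cite{LowWiebe2019}, closely paralleling the argument behind \cref{thm:td_inhomo} but replacing the product formula by the Dyson series so that the fast-forwardable part $L=V_I$ incurs no cost depending on $K$, $\|V_I\|$, or the evolution time. Write $q:=\|u_0\|/\|u(T)\|$, $H(t)=-\tfrac12\Delta_{\vr}+V_R(t)$ (Hermitian), $L=V_I\succeq 0$ (Hermitian, time-independent, diagonal in the position basis), so that in the notation of \cref{thm:lchs} we have $A(t)=L+iH(t)$ and $u(T)=\int_{\RR}\hat f(k)\,U(k)\,u_0\,\ud k$ with $\hat f(k)=\tfrac1{\pi(1+k^2)}$ and $U(k)=\mc{T}e^{-i\int_0^T(H(s)+kL)\,\ud s}$. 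The task then reduces to approximating this integral by a short linear combination of unitaries and bounding every error relative to $\|u(T)\|$.

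The first step is the numerical-analysis part. Truncating to $k\in[-K,K]$: since $U(k)$ is unitary for real $k$, the error is at most $\|u_0\|\int_{|k|>K}\hat f(k)\,\ud k=\Or(\|u_0\|/K)$, so $K=\Theta(q/\epsilon)$ makes it $\Or(\epsilon\|u(T)\|)$. For the quadrature error I would use analyticity in a horizontal strip: $\hat f$ is analytic for $|\Im k|<1$, and an elementary energy estimate using $H=H^\dagger$ and $L\succeq 0$ shows $\|U(k)\|\le e^{|\Im k|\,T\|L\|}$ for complex $k$; since also $|1+k^2|\ge 1-(\Im k)^2$ on the strip and $\hat f$ decays like $|k|^{-2}$ along the real axis, the standard exponential error bound for the trapezoidal rule applied to functions analytic in a strip gives a discretization error $\lesssim e^{-\pi a M/K}\,e^{aT\|V_I\|}/(1-a^2)$ for any fixed $a\in(0,1)$. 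Picking $a=\tfrac12$ and $M=\widetilde{\Theta}\!\big(K\,T\|V_I\|+K\log(q/\epsilon)\big)=\mathrm{poly}(q,1/\epsilon,T,\|V_I\|)$ drives this below $\Or(\epsilon\|u(T)\|)$ as well. The point to emphasize is that although $M$ is polynomially large in $\|V_I\|$, $T$, $q$, and $1/\epsilon$, it enters the eventual query complexity only through $\log M$, because the nodes $k_j$ form an arithmetic progression and hence the select oracle is assembled by binary decomposition.

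Next I would build and cost the circuit from \eqref{eqn:LCHS_interaction_picture}. The node-$k_j$ propagator splits as the conjugation $e^{\mp iLk_j t}$, which is diagonal and hence implementable with $\Or(\log M)$ queries to $O_{V_I}$ and phase kickback at a cost independent of $k_j$ and $\|V_I\|$, wrapped around $\mc{T}e^{-i\int_0^T H_I(s;k_j)\,\ud s}$ with $H_I(s;k)=e^{iLks}H(s)e^{-iLks}$. For the outer LCU I would prepare $\tfrac1{\sqrt{\|c\|_1}}\sum_j\sqrt{c_j}\ket{j}\ket{u_0}$ with $O_{\text{prep}}$ and a coefficient oracle, apply the select oracle, uncompute, and amplitude-amplify; since $\|c\|_1=1+\Or(1/K)=\Or(1)$ and the amplitude in the all-zero ancilla subspace is $\Theta(\|u(T)\|/(\|c\|_1\|u_0\|))=\Theta(1/q)$, the success probability is $\Theta(1/q^2)$, so $\Or(q)$ rounds suffice, giving $\Or(q)$ queries to $O_{\text{prep}}$. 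Inside the select oracle, each controlled time-ordered exponential is produced by the truncated Dyson series using $\widetilde{\Or}(\max_s\|H_I(s)\|\,T)=\widetilde{\Or}(\max_s\|H(s)\|\,T)$ queries to a block-encoding of the controlled $H_I(s;\cdot)$, and the latter is just one block-encoding of $H(s)$ ($\Or(1)$ queries to $O_{V_R}$ together with the fixed sparse finite-difference Laplacian) conjugated by $e^{\pm iLk_j s}$ ($\Or(\log M)$ queries to $O_{V_I}$). The number of interaction-picture time samples and the Dyson truncation order depend on $\max_s\|\partial_s H_I(s;k_j)\|\le 2K\|V_I\|\max_s\|H(s)\|+\max_s\|V_R'(s)\|$ only logarithmically, and with $K=\Theta(q/\epsilon)$ this yields the factor $\mathrm{poly}\log\!\big(\max_t\|V_R'(t)\|\,\|V_I\|/\epsilon\big)$. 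Multiplying the $\Or(q)$ amplification rounds by the $\widetilde{\Or}(T\max_t\|H(t)\|)$ matrix-query cost of one select oracle gives the claimed complexity, and $\max_t\|H(t)\|=\Or(N^2+\max_t\|V_R(t)\|)$ because $\|\Delta_{\vr}\|=\Or(N^2)$ on the $N$-point grid.

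The step I expect to be the main obstacle is the quadrature analysis: one has to show that only a poly-logarithmically relevant number of nodes is needed even though $U(k)$ oscillates in $k$ with frequency as large as $\|V_I\|T$. This is exactly where the strip-analyticity bound is essential --- a crude bound on $\partial_k$-derivatives would leave $M$, and therefore $\|V_I\|$, appearing polynomially --- and it is also where the interaction-picture reformulation must be carefully reconciled with the quadrature so that $K$ and $\|V_I\|$ never escape a logarithm in the final count. A secondary technicality is propagating all error contributions through the LCU and amplitude-amplification steps in \emph{relative} norm, so that $\|u(T)\|$ enters only through the benign factor $q$; adapting the time-dependent truncated Dyson series cost bounds to a Hamiltonian $H_I$ whose derivative grows linearly in $K$ is routine but must be done with care to keep that growth inside the logarithm.
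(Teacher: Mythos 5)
Your proposal is correct and follows essentially the same route as the paper's proof: the interaction-picture LCHS of \cref{eqn:LCHS_interaction_picture} with the diagonal $e^{\mp iV_Ik_jt}$ fast-forwarded through $O_{V_I}$, select oracles assembled by binary decomposition at $\Or(\log M)$ query cost, a single coherent truncated-Dyson-series implementation of the $k_j$-controlled interaction-picture evolutions with block-encoding normalization $\alpha_H=\Or(N^2+\max_t\|V_R(t)\|)$ (so that $K$, $\|V_I\|$ and $\max_t\|V_R'(t)\|$ enter only through the logarithmic dependence on the number of time samples), relative-error bookkeeping against $\|u(T)\|$, and $\Or(\|u_0\|/\|u(T)\|)$ amplitude-amplification rounds. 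The one divergence is your quadrature step: the strip-analyticity trapezoidal bound is sharper than needed and your claim that it is essential is not accurate --- the paper simply uses the crude second-derivative estimate with $M=\Or(\|V_I\|T/\epsilon^2)$, which suffices precisely because, as you yourself note, $M$ enters the query count only through $\log M$.
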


\medskip
\prlsection{Discussion}
Linear combination of Hamiltonian simulation (LCHS) provides a simple way for solving  the non-unitary dynamics in the form of \cref{eqn:inhom_general_diff_eq}. Compared to existing approaches based on QLSP, LCHS is simple, flexible, and achieves optimal state preparation cost. The linear combination procedure can be implemented in a hybrid quantum-classical fashion to facilitate the computation of observables on early fault-tolerant quantum computers. 
The most significant limitation of LCHS is that the spectral radius of the Hermitian part $L$ needs to be multiplied by a factor up to the frequency cutoff $K$, where $K=\Or(1/\epsilon)$ due to the quadratic decay of the kernel $(1+k^2)^{-1}$. This increases the maximal circuit depth by a factor $\epsilon^{-1}$. The problem can be  ameliorated when the dynamics of $L$ can be fast-forwarded. For instance, in the case of simulating open quantum system dynamics using a complex absorbing potential, our LCHS-based solver achieves near optimal complexity in all parameters. However, Hamiltonian simulation in the interaction picture~\cite{LowWiebe2019} may be difficult to implement.  A more desirable solution to this problem would be to replace the kernel $(1+k^2)^{-1}$ by a fast decaying one, so that the truncation range can be dramatically reduced. It remains an open question how to obtain the best possible solver that is optimal in all parameters for simulating general non-unitary dynamics.

When a task can be solved using either LCU- or QSP-based techniques, the latter are often preferred due to their superior resource efficiency, as well as ease of implementation. However, all QSP-based techniques are based on the spectral mapping theorem. LCHS avoids this spectral mapping argument, and thus significantly expands the application range of LCU. It would be interesting to see if the  mathematical reformulation of LCHS can inspire further generalization of QSP-based approaches, particularly for applications involving non-normal matrices. Additionally, exploring  extensions of LCHS
to other quantum linear algebra problems can be a promising direction for future research.

\begin{acknowledgments}
We thank Dominic Berry, Andrew Childs, Zhiyan Ding, Di Fang and Yu Tong for helpful discussions and suggestions. This material is based upon work supported by the U.S. Department of Energy, Office of Science, National Quantum Information Science Research Centers, Quantum Systems Accelerator (L.L.). D.A. acknowledges the support by the Department of Defense through the Hartree Postdoctoral Fellowship at QuICS. J.-P.~L.~acknowledges the support by the NSF (CCF-1813814, PHY-1818914), the NSF QLCI program (OMA-2016245), and the Simons Quantum Postdoctoral Fellowship. L.L. has received support as a Simons investigator. 

\textbf{Note:} During the final stage of this work, we became aware of a recent work called \textit{Schr\"odingerisation}~\cite{JinLiuYu2022}, which transforms a general class of linear partial differential equations into a dilated Hamiltonian dynamics, and can be used to provide a complementary perspective of our result in \cref{thm:lchs}. 
\end{acknowledgments}

\bibliography{LCH}

\newpage 
%\begin{appendix}
\clearpage
\thispagestyle{empty}
\onecolumngrid
%\clearpage
\begin{center}
\textbf{\large Supplemental Material for \\ Linear combination of Hamiltonian simulation for non-unitary dynamics with optimal state preparation cost }
\end{center}

\begin{center}
Dong An,$^1$ Jin-Peng Liu,$^{2,3,4}$ and Lin Lin$^{2,5,6}$\\
\smallskip
\small{\emph{$^1$\QUICS\\$^2$\DeptMath\\$^3$\Simons\\$^4$\CTP\\$^5$\LBLMath\\$^6$\CIQC}}\\
(Dated: \today)

\end{center}

\setcounter{equation}{0}
\setcounter{figure}{0}
\setcounter{table}{0}
\setcounter{page}{1}
\makeatletter
\renewcommand{\theequation}{S\arabic{equation}}
\renewcommand{\thefigure}{S\arabic{figure}}
\renewcommand{\bibnumfmt}[1]{[S#1]}
\renewcommand{\citenumfont}[1]{#1}

\section{Proof of \MakeLowercase{\texorpdfstring{\cref{thm:lchs}}{} }}\label{eqn:proof_lchs}
%\section{Proof of \cref{thm:lchs}}\label{eqn:proof_lchs}

We first establish a lemma which is a special instance of Cauchy's theorem applied to matrix functions without relying on the spectral mapping argument.

\begin{figure}[H]
\begin{center}
\includegraphics[width=0.3\textwidth]{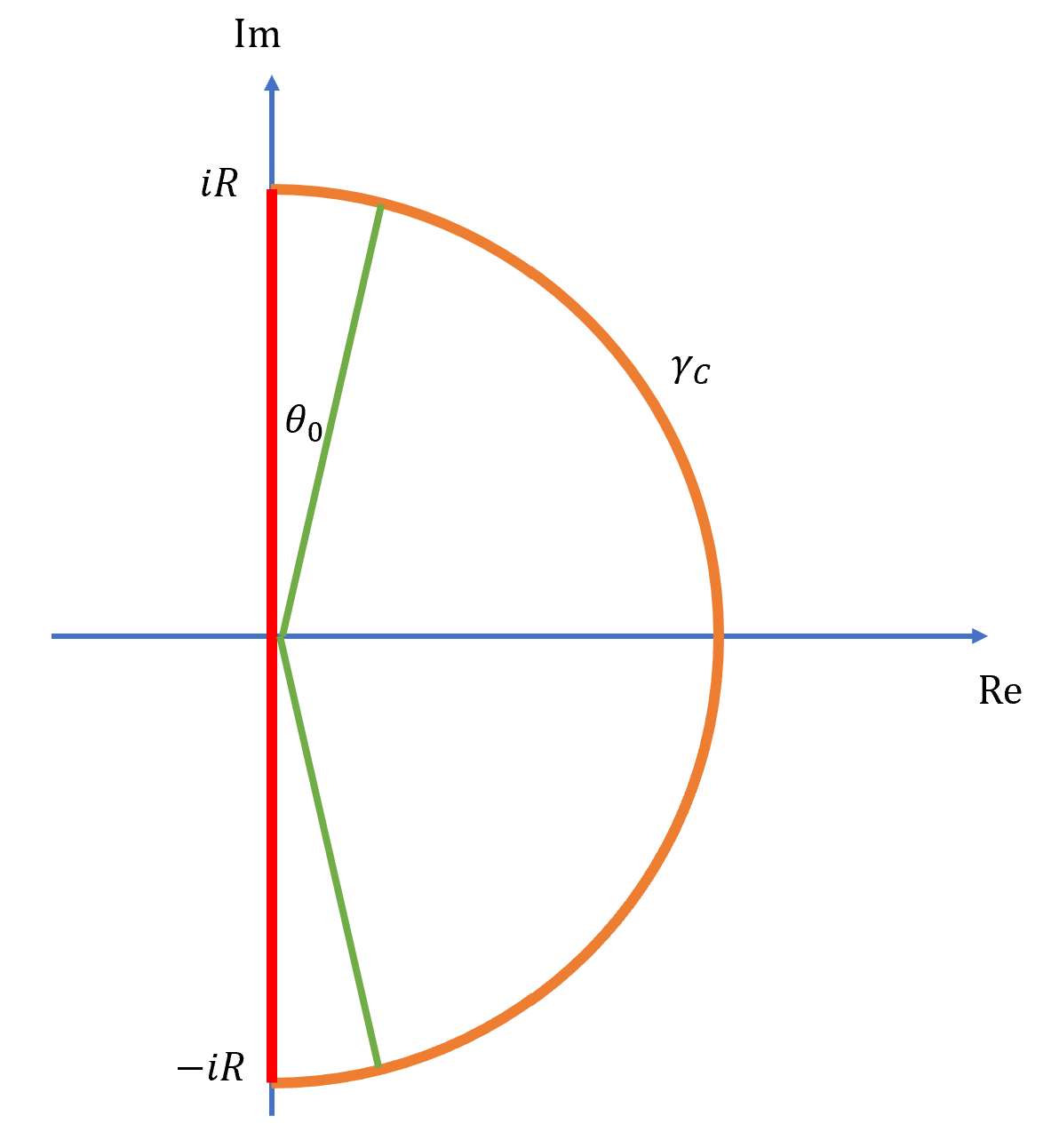}
\end{center}
\caption{Contour used to prove \cref{lem:cauchy_matrix}.}
\label{fig:contour_jordan}
\end{figure}

\begin{lem} Let $H,L\in\CC^{N\times N}$ be Hermitian matrices, and $L\succ 0$. Then
\begin{equation}
\mathcal{P}\int_{\RR}\frac{1}{1+ik} e^{-i(H+kL)} \ud k := \lim_{R\to \infty} \int_{-R}^R\frac{1}{1+ik} e^{-i(H+kL)} \ud k = 0.
\label{eqn:integral_cancel}
\end{equation}
Here $\mathcal{P}$ stands for the Cauchy principal value of the integral.
\label{lem:cauchy_matrix}
\end{lem}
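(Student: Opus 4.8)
The plan is to recognize the integrand as the restriction to the real axis of a matrix-valued meromorphic function and to annihilate the integral by deforming the contour into the lower half-plane, where the function decays. Concretely, set $F(z):=\frac{1}{1+iz}\,e^{-i(H+zL)}$ for $z\in\CC$. Since $z\mapsto H+zL$ is affine and $M\mapsto e^{M}$ is entire (the exponential series converges locally uniformly, entrywise), $e^{-i(H+zL)}$ is an entire matrix-valued function of $z$, while $\frac{1}{1+iz}$ is holomorphic away from its single pole $z=i$; hence $F$ is holomorphic on $\CC\setminus\{i\}$, in particular on the closed lower half-plane $\{\Im z\le 0\}$, which contains no singularity. (On the real axis $H+kL$ is Hermitian, so $e^{-i(H+kL)}$ is unitary and $\norm{F(k)}=(1+k^2)^{-1/2}$; this is exactly why the integral is only conditionally convergent and why the principal value, i.e.\ the symmetric limit, is the right object.)

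The analytic heart of the argument — and the step I expect to be the main obstacle — is a decay estimate in the lower half-plane: for $z=x+iy$ with $y\le 0$,
\[
\norm{e^{-i(H+zL)}}\le e^{\,y\,\lambda_{\min}(L)}\le 1 .
\]
The difficulty is that $H+zL$ is non-normal for complex $z$, so one cannot simply diagonalize and a crude power-series bound is far too lossy. Instead I would use the logarithmic norm (numerical abscissa): writing $M:=-i(H+zL)=yL-i(H+xL)$, the Hermitian part is $\tfrac12(M+M^\dagger)=yL$, so the numerical abscissa is $\mu(M)=\lambda_{\max}(yL)=y\,\lambda_{\min}(L)$ when $y\le 0$, and the standard inequality $\norm{e^{M}}\le e^{\mu(M)}$ — obtained from $\frac{d}{dt}\norm{e^{tM}v}^2=2\Re\langle Me^{tM}v,e^{tM}v\rangle\le 2\mu(M)\norm{e^{tM}v}^2$ and Grönwall — gives the claim. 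This is precisely where the strict positivity hypothesis $L\succ 0$ enters: it forces the exponent to be strictly negative for $y<0$ (if $L$ had a kernel there would be no decay on that subspace, and the argument below would fail).

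With this in hand I would fix $R>1$ and integrate $F$ over the closed contour $\Gamma_R$ consisting of the segment $[-R,R]$ on the real axis followed by the lower semicircular arc $C_R=\{Re^{i\theta}:\theta\in[-\pi,0]\}$ (the contour depicted in \cref{fig:contour_jordan}). Since $F$ is holomorphic on and inside $\Gamma_R$, Cauchy's theorem (applied entrywise) gives $\oint_{\Gamma_R}F(z)\,\ud z=0$, hence $\int_{-R}^{R}F(k)\,\ud k=-\int_{C_R}F(z)\,\ud z$. On $C_R$ one has $|1+iz|=|z-i|\ge R-1$ and $\Im z=R\sin\theta\le 0$, so $\norm{F(z)}\le\frac{1}{R-1}e^{R\lambda\sin\theta}$ with $\lambda:=\lambda_{\min}(L)>0$, and therefore
\[
\norm{\int_{C_R}F(z)\,\ud z}\le\frac{R}{R-1}\int_{0}^{\pi}e^{-R\lambda\sin\phi}\,\ud\phi\le\frac{R}{R-1}\cdot\frac{\pi}{R\lambda}\xrightarrow[R\to\infty]{}0 ,
\]
where the middle step is the familiar Jordan's-lemma estimate (from $\sin\phi\ge\frac{2\phi}{\pi}$ on $[0,\frac\pi2]$). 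Letting $R\to\infty$ yields $\mathcal{P}\int_{\RR}F(k)\,\ud k=0$, i.e.\ \eqref{eqn:integral_cancel}.

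A final remark on why the argument cannot be shortened to a naive "close the contour and use the $ML$-bound": the prefactor $\frac{1}{1+iz}$ only supplies $O(1/|z|)$ decay on the arc, which against arclength $\pi R$ would leave a nonzero constant (tending to $\pi$); it is the exponential factor coming from $L\succ 0$, combined with Jordan's lemma, that actually forces the arc contribution to vanish. Closing into the upper half-plane is not an option, both because of the pole at $z=i$ and because there $\norm{e^{-i(H+zL)}}$ grows like $e^{y\lambda_{\max}(L)}$.
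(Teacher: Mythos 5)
Your proof is correct and is essentially the paper's own argument: both deform the contour and apply Cauchy's theorem entrywise to the matrix-valued integrand (your lower semicircle in $k$ is exactly the paper's right-half-plane semicircle after its rotation $\omega=ik$), with the decay $\norm{e^{-i(H+zL)}}\le e^{\Im(z)\,\lambda_{\min}(L)}$ supplied by $L\succ 0$ killing the arc contribution. The only differences are cosmetic: you bound the whole arc with the standard Jordan's-lemma estimate where the paper splits the arc into a bulk sector and small end sectors, and you spell out the logarithmic-norm bound that the paper uses implicitly.
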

\begin{proof}

By a change of variable $\omega=ik$, and denote the imaginary axis by $\gamma$, we have 
\begin{equation}
\mathcal{P}\int_{\RR}\frac{1}{1+ik} e^{-i(H+kL)} \ud k=\mathcal{P}\int_{\gamma}\frac{1}{i(1+\omega)} e^{-iH-\omega L} \ud \omega.
\label{eqn:cancel_temp1}
\end{equation}
Let $R$ be any positive number, 
and we choose a closed contour $\mc{C}$ in the complex plane as in \cref{fig:contour_jordan},
where $\gamma_C$ is the half circle with a radius $R$ in the right half plane. We will prove the lemma by deforming the integral along the imaginary axis to the integral on along the half circle.

Note that \textit{each entry} of the integrand on the right hand side of \cref{eqn:cancel_temp1} is analytic with respect to $\omega$ in the region enclosed by $\mathcal{C}$. Therefore Cauchy's integral theorem applies, and 
\begin{equation}\label{eqn:cancel_temp5}
\mathcal{P}\int_{\gamma}\frac{1}{i(1+\omega)} e^{-iH-\omega L} \ud \omega=\lim_{R\to \infty} \int_{-iR}^{iR} \frac{1}{i(1+\omega)} e^{-iH-\omega L} \ud \omega=
\lim_{R\to \infty} \int_{\gamma_C} \frac{1}{i(1+\omega)} e^{-iH-\omega L} \ud \omega.
\end{equation}
We can parameterize $\gamma_C=\set{R e^{i\theta}|\theta\in[-\pi/2,\pi/2]}$, which can be separated into two parts. Choose $\theta_0=\min(1/\sqrt{R},\pi/4)$. The first part is $\theta\in I=[-\pi/2+\theta_0,\pi/2-\theta_0]$, and the second is $\theta\in J=[-\pi/2,\pi/2]\backslash I$.  We also denote by $\lambda_0>0$ the smallest eigenvalue of $L$.

Changing the variable from $\omega$ to $\theta$ in~\cref{eqn:cancel_temp5} gives 
\begin{equation}
    \mathcal{P}\int_{\gamma}\frac{1}{i(1+\omega)} e^{-iH-\omega L} \ud \omega 
    = \lim_{R\to \infty} \int_{-\pi/2}^{\pi/2} \frac{R e^{i\theta}}{1+R e^{i\theta}} e^{-iH-R e^{i\theta} L} \ud \theta = \lim_{R\to \infty} \int_{I\cup J} \frac{R e^{i\theta}}{1+R e^{i\theta}} e^{-iH-R e^{i\theta} L} \ud \theta. 
\end{equation}
On the interval $I$, we have 
\begin{equation}
\norm{\int_{I} \frac{R e^{i\theta}}{1+R e^{i\theta}} e^{-iH-R e^{i\theta} L} \ud \theta}\le \int_{I} \frac{1}{\abs{1+R^{-1}e^{-i\theta}}} e^{- \lambda_0 R \cos\theta } \ud \theta.
\end{equation}
Notice that $R\cos\theta\ge R\sin\theta_0\ge \frac{2\sqrt{R}}{\pi}$. 
Therefore this term vanishes in the limit $R\to \infty$. 

On the interval $J$, for $R \geq 2$, we have
\begin{equation}
\norm{\int_{J} \frac{R e^{i\theta}}{1+R e^{i\theta}} e^{-iH-R e^{i\theta} L} \ud \theta}\le \int_{J} \frac{1}{\abs{1+R^{-1}e^{-i\theta}}} \ud \theta\le 2\theta_0 \le \frac{2}{\sqrt{R}},
\end{equation}
which vanishes as $R\to \infty$. This proves the lemma. 
\end{proof}

Denote the right hand side of \cref{eqn:lchs} by $V(t;ik)=\mc{T} e^{- i \int_0^t (H(s)+kL(s)) \ud s}$. Then $V$ is analytic with respect to $\omega=ik$ in the complex plane for any $t$. Furthermore, if the eigenvalues of $L(t)$ are uniformly bounded from below by $\lambda_0>0$, then for $\theta\in I=[-\pi/2+\theta_0,\pi/2-\theta_0]$, 
\begin{equation}
\norm{V(t,Re^{i\theta})}\le e^{- t\lambda_0 R \cos\theta}  
\end{equation}
still holds. This proves the following statement.

\begin{lem} Let $H(t),L(t)\in\CC^{N\times N}$ be Hermitian matrices on $t\in\mc{I}=(0,T]$. Assume $L(t)\succeq \lambda_0>0$ for all $t\in \mc{I}$. Then for any $0<t\le T$,
\begin{equation}
\mathcal{P}\int_{\RR}\frac{1}{1+ik} \mc{T} e^{- i \int_0^t (H(s)+kL(s)) \ud s} \ud k  = 0.
\label{eqn:integral_cancel_td}
\end{equation}
Here $\mathcal{P}$ stands for the Cauchy principal value of the integral.
\label{lem:cauchy_matrix_td}
\end{lem}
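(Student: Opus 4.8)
The plan is to rerun the contour argument of \cref{lem:cauchy_matrix} essentially verbatim, observing that its two structural inputs survive when the single exponential is replaced by the time-ordered propagator $V(t;\omega):=\mc{T}e^{\int_0^t(-iH(s)-\omega L(s))\ud s}$, regarded as a matrix-valued function of the complex variable $\omega=ik$. After the substitution $\omega=ik$ (with $\ud k=\ud\omega/i$), the integral in \cref{eqn:integral_cancel_td} becomes $\mathcal P\int_{\gamma}\frac{1}{i(1+\omega)}V(t;\omega)\ud\omega$ over the imaginary axis $\gamma$. The two inputs are: (i) every entry of $\omega\mapsto V(t;\omega)$ is entire; (ii) $\norm{V(t;\omega)}\le e^{-t\lambda_0\Re\omega}$ on the closed right half-plane $\Re\omega\ge0$. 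Granting these, I would fix $R>2$, apply Cauchy's integral theorem entry-wise to $\frac{1}{i(1+\omega)}V(t;\omega)$ on the closed contour $\mc{C}$ of \cref{fig:contour_jordan} — the segment $[-iR,iR]$ closed by the right half-circle $\gamma_C=\set{Re^{i\theta}|\theta\in[-\pi/2,\pi/2]}$, whose interior avoids the only pole $\omega=-1$ of $\frac{1}{i(1+\omega)}$ — to get that the integral over $[-iR,iR]$ equals the integral over $\gamma_C$; then send $R\to\infty$ and show the latter vanishes. As before, split $\gamma_C$ at $\theta_0=\min(1/\sqrt R,\pi/4)$ into $I=[-\pi/2+\theta_0,\pi/2-\theta_0]$ and the remainder $J$: on $I$, combine $\Re\omega=R\cos\theta\ge R\sin\theta_0\ge\frac{2\sqrt R}{\pi}$ with $\abs{\tfrac{Re^{i\theta}}{1+Re^{i\theta}}}\le\tfrac{R}{R-1}\le2$ to bound the integrand norm by $2e^{-2t\lambda_0\sqrt R/\pi}$ (this is where $t>0$ enters), which integrates to something $\to0$; on $J$, use only $\norm{V(t;\omega)}\le1$ (valid since $\Re\omega\ge0$) together with $\abs J=2\theta_0\le2/\sqrt R\to0$. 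This yields \cref{eqn:integral_cancel_td}.

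It remains to justify (i) and (ii), which is where the time-dependence actually enters. For analyticity I would use the Dyson series $V(t;\omega)=\sum_{n\ge0}\int_{0\le s_1\le\cdots\le s_n\le t}G(s_n)\cdots G(s_1)\,\ud s_1\cdots\ud s_n$ with $G(s)=-iH(s)-\omega L(s)$: each term is a polynomial in $\omega$ of degree $\le n$ (the simplex integration in $s$ does not touch $\omega$), and the $n$-th term has norm at most $\frac{t^n}{n!}\big(\max_s\norm{H(s)}+\abs{\omega}\max_s\norm{L(s)}\big)^n$, so the series converges locally uniformly in $\omega$; a locally uniform limit of entire functions is entire. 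For the decay bound, since $H(s)=H(s)^\dagger$ and $L(s)\succeq\lambda_0 I$, the generator $G(s)$ has Hermitian part $\tfrac12(G(s)+G(s)^\dagger)=-(\Re\omega)L(s)\preceq-\lambda_0(\Re\omega)I$ whenever $\Re\omega\ge0$; the logarithmic-norm estimate $\frac{\ud}{\ud t}\norm{V(t;\omega)x}^2=2\Re\braket{V(t;\omega)x|G(t)|V(t;\omega)x}\le-2\lambda_0(\Re\omega)\norm{V(t;\omega)x}^2$ together with Grönwall's inequality gives $\norm{V(t;\omega)x}\le e^{-t\lambda_0\Re\omega}\norm{x}$.

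I do not expect a genuine obstacle: the contour machinery is identical to \cref{lem:cauchy_matrix}, and the only new ingredients are the two elementary facts above about the time-ordered exponential. The point requiring care is that the decay rate along the arc $I$ must not degenerate, which is precisely why the hypothesis is the \emph{uniform} lower bound $L(t)\succeq\lambda_0>0$ rather than mere pointwise positive definiteness; implicitly one also wants $H,L$ bounded and, say, piecewise continuous on $\mc{I}$ so that the Dyson series manipulations are legitimate. (Relaxing the strict positivity to the hypothesis $L(t)\succeq0$ used in \cref{thm:lchs} is a separate limiting argument, not part of this lemma.)
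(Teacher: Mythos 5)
Your proposal is correct and follows essentially the same route as the paper: the paper also proves this lemma by observing that the contour argument of \cref{lem:cauchy_matrix} carries over once one knows that $V(t;\omega)$ is entire in $\omega=ik$ and satisfies $\norm{V(t;Re^{i\theta})}\le e^{-t\lambda_0 R\cos\theta}$ on the arc. Your explicit justifications of these two facts (Dyson series for analyticity, a logarithmic-norm/Gr\"onwall estimate for the decay) merely fill in details the paper asserts without proof.
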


Now we are ready to establish the proof of~\cref{thm:lchs}. 

\begin{proof}[Proof of \cref{thm:lchs}]
We first prove the statement when $L(t)\succeq \lambda_0>0$ for all $t\in \mc{I}$.
The right hand of \cref{eqn:lchs} converges absolutely. Hence
\begin{equation}
 W(t):=\int_{\RR} \frac{1}{\pi(1+k^2)} \mc{T} e^{- i \int_0^t (H(s)+kL(s)) \ud s} \ud k=\mc{P}\int_{\RR} \frac{1}{\pi(1+k^2)} \mc{T} e^{- i \int_0^t (H(s)+kL(s)) \ud s} \ud k.
\end{equation}

When $t=0$, both sides of \cref{eqn:lchs} are the identity matrix due to the normalization condition of the Cauchy distribution. For $t>0$, the definition of the time-ordered operator gives\begin{equation}
\frac{\ud}{\ud t}V(t;ik)=-i (H(t)+kL(t))V(t;ik).
\end{equation}
We will show that $W(t)$ satisfies the same differential equation for $t > 0$. 
To this end, we choose a fixed $\delta > 0$ and consider $t \in [\delta,T]$. 
Differentiating $W(t)$ with respect to $t$ gives  
\begin{equation}\label{eqn:lchs_proof_intermediate}
\begin{split}
\frac{\ud W(t)}{\ud t}=&\mc{P}\int_{\RR} \frac{-i}{\pi(1+k^2)} (H(t)+kL(t))V(t;ik) \ud k\\
=&\mc{P}\int_{\RR} \left(\frac{(1-ik)L(t)}{2\pi(1+k^2)}-\frac{(1+ik)L(t)}{2\pi(1+k^2)} -\frac{iH(t)}{\pi(1+k^2)}\right) V(t;ik) \ud k\\
=&\mc{P}\int_{\RR} \left(\frac{L(t)}{2\pi(1+ik)}-\frac{L(t)}{2\pi(1-ik)} -\frac{iH(t)}{\pi(1+k^2)} \right)V(t;ik) \ud k.
\end{split}
\end{equation}
Here in~\cref{eqn:lchs_proof_intermediate}, since the right hand side of the first line uniformly converges in $t\in[\delta,T]$, we can exchange the order of limitation (\emph{i.e.}, the principal value) and differentiation, so the calculations in~\cref{eqn:lchs_proof_intermediate} rigorously hold. 
Now use \cref{lem:cauchy_matrix_td}, the first term vanishes and we can flip its sign as
\begin{equation}
\mc{P}\int_{\RR} \frac{L(t)}{2\pi(1+ik)}V(t;ik) \ud k=-\mc{P}\int_{\RR} \frac{L(t)}{2\pi(1+ik)}V(t;ik) \ud k=0.
\end{equation} 
Then, for $t \in [\delta,T]$, we have 
\begin{equation}\label{eqn:lchs_proof_intermediate_de}
\begin{split}
\frac{\ud W(t)}{\ud t}=&\mc{P}\int_{\RR} \left(\frac{-L(t)}{2\pi(1+ik)}-\frac{L(t)}{2\pi(1-ik)} -\frac{iH(t)}{\pi(1+k^2)} \right)V(t) \ud k\\
=&-(L(t)+iH(t))\int_{\RR} \frac{1}{\pi(1+k^2)} V(t;ik) \ud k.
\end{split}
\end{equation}
Since $\delta$ can be chosen arbitrarily close to $0$,~\cref{eqn:lchs_proof_intermediate_de} also holds for all $t \in (0,T]$. 
This proves that $W(t)$ satisfies the same differential equation as the left hand side of \cref{eqn:lchs}
\begin{equation}
\frac{\ud W(t)}{\ud t}=-(L(t)+iH(t))W(t)=-A(t)W(t), \quad W(0)=I.
\end{equation} 
By the uniqueness of the solution, we prove that $W(t)=\mc{T}e^{-\int_0^t A(s)\ud s}$ under for positive definite $L(t)$. 

Finally, each matrix entry of both the left and right hand sides of \cref{eqn:lchs} are continuous functions with respect to $L(t)$. This allows us to take the limit $\lambda_0\to 0$ and finishes the proof.
\end{proof}

\section{Numerical integration}
\label{app:quadrature}
We introduce the notation $\|H\| \coloneqq \max_{\tau \in [0,T]} \|H(\tau)\|_2$, $\|L\| \coloneqq \max_{\tau \in [0,T]} \|L(\tau)\|_2$, and $\|b\| \coloneqq \max_{\tau \in [0,T]} \|b(\tau)\|_2$. 
We further denote 
\begin{equation}
    \|b\|_{L^1} \coloneqq \int_{0}^T \|b(\tau)\|_{2} \ud \tau, \qquad \|b\|_{C^p} \coloneqq \sum_{q=0}^p \max_{\tau \in [0,T]}  \|b^{(q)}(\tau)\|_{2} ,
\end{equation}
\begin{equation}
    \|H\|_{C^p} \coloneqq \sum_{q=0}^p \max_{\tau \in [0,T]}  \|H^{(q)}(\tau)\|_{2} , \qquad \|L\|_{C^p} \coloneqq \sum_{q=0}^p \max_{\tau \in [0,T]}  \|L^{(q)}(\tau)\|_{2} . 
\end{equation}
We assume $H(t)$ and $L(t)$ are $C^1$-smooth, and $b(t)$ is $C^2$-smooth in the following analysis.

Firstly, we consider numerical integration formula to estimate
\begin{equation}
\int_{-K}^K F(k) \ud k = \int_{-K}^K \frac{1}{\pi(1+k^2)} \mathcal{T} e^{- i \int_0^t (H(s)+kL(s)) \ud s} \ud k.
\label{eqn:general_A_truncate}
\end{equation}
in which $F $ is a \textit{matrix-valued} function (but not a matrix function). We can choose the truncation parameter $K=c/\epsilon$ for some constant $c$, and the remainder is then bounded by 
\begin{equation}
2\int_K^\infty \frac{1}{1+k^2}\ud k=\pi-2\arctan K=\Or(K^{-1})=\Or(\epsilon).
\end{equation}

We can simply apply a trapezoidal rule on \cref{eqn:general_A_truncate}
\begin{equation}
\sum_{j=0}^M w_jF(k_j) = \sum_{j=0}^M c_j \mathcal{T} e^{-i\int_0^t (H(s) + k_jL(s)) \ud s}.
\end{equation}
Here $c_j = \frac{w_j}{\pi(1+k_j^2)}$, $w_j = \frac{(2-\mathbf{1}_{j=0,M})K}{M}$ and $k_j = -K+\frac{2jK}{M}$ are the weights and nodes of the trapezoidal rule. 

The global error of the trapezoidal rule is proportional to
\begin{equation}
\max_{\xi \in [-K,K]}\frac{(2K)^3}{M^2}|F''(\xi)| ,
\end{equation}
where 
\begin{equation}
|F''(k)| \le \frac{\|L\|^2T^2}{\pi(1+k^2)} + \frac{4k\|L\|T}{\pi(1+k^2)^2} + \frac{2(3k^2-1)}{\pi(1+k^2)^3}.
\end{equation}
Thus, $\max_{\xi} |F''(\xi)| = \Or(\|L\|^2T^2)$. In order to reach precision $\epsilon$, the total number of points is
\begin{equation}
M = \Or(\|L\|T/\epsilon^2).
\label{eqn:quadrature_homogeneous}
\end{equation}

Secondly, we consider numerical integration formula to estimate
\begin{equation}
\int_{-K}^K\int_0^T \widetilde F(k,s) \ud s \ud k = \int_{-K}^K\int_0^T \mathcal{T} \frac{1}{\pi(1+k^2)} e^{- i \int_s^t (H(s')+kL(s')) \ud s' } b(s) \ud s \ud k
\label{eqn:general_A_inhom_truncate}
\end{equation}
in the inhomogeneous case, spatially truncated on $[-K,K]$ with $K=\|b\|_{L^1}/\epsilon$, 
and the remainder is then bounded by 
\begin{equation}
2\int_0^T \int_K^\infty \frac{1}{1+k^2} \|b(s)\|_2\ud k \ud s = 2\int_0^T(\pi-2\arctan K)  \|b(s)\|_2 \ud s =\Or(\|b\|_{L^1}/K)=\Or(\epsilon).
\end{equation}

We can simply apply a trapezoidal rule on \cref{eqn:general_A_inhom_truncate}.
\begin{equation}
\sum_{j'=0}^{M_t} \sum_{j=0}^M w_{j,j'} \widetilde F(k_j, s_{j'}) = \sum_{j'=0}^{M_t} \sum_{j=0}^M \widetilde{c}_{j,j'} \mathcal{T} e^{- i \int_{s_{j'}}^t (H(s')+k_j L(s')) \ud s' } b(s_{j'})
\end{equation}
Here $\widetilde{c}_{j,j'} = \frac{w_{j,j'}}{\pi(1+k_j^2)}$, $w_{j,j'} = \frac{(2-\mathbf{1}_{j=0,M})(2-\mathbf{1}_{j'=0,M_t})KT}{2MM_t}$ and $k_j = -K+\frac{2jK}{M}$, $s_{j'} = \frac{j'T}{M}$ are the weights and nodes of the two-dimensional trapezoidal rule. 

As above, the global error of the trapezoidal rule is proportional to
\begin{equation}
2KT \max_{\xi \in [-K,K], \zeta \in [0,T]} \Bigl[ \frac{K^2}{M^2}|\widetilde F_{kk}(\xi,\zeta)| + \frac{T^2}{M_t^2}|\widetilde F_{ss}(\xi,\zeta)| \Bigr],
\end{equation}
where
\begin{equation}
|\widetilde F_{kk}(k,s)| \le \Bigl[ \frac{\|L\|^2T^2}{\pi(1+k^2)} + \frac{4k\|L\|T}{\pi(1+k^2)^2} + \frac{2(3k^2-1)}{\pi(1+k^2)^3} \Bigr] \cdot \|b\|
\end{equation}
as above, and 
\begin{equation}
|\widetilde F_{ss}(k,s)| \le \frac{\|b''\|}{\pi(1+k^2)} + \frac{2\|b'\|}{\pi(1+k^2)} (\|H\| + k\|L\|) + \frac{\|b\|}{\pi(1+k^2)} (\|H\| + k\|L\|)^2  + \frac{\|b\|}{\pi(1+k^2)} (\|H'\| + k\|L'\|).
\end{equation}

Thus, $\max_{\xi, \zeta} |\widetilde F_{kk}(\xi,\zeta)| = \Or(\|L\|^2T^2\|b\|)$, $\max_{\xi, \zeta} |\widetilde F_{ss}(\xi,\zeta)| = \Or((\|H\|_{C^1} + K\|L\|_{C^1})^2\|b\|_{C^2})$. Using $\|b\|_{L^1} \le \|b\|T$, to bound the error of the trapezoidal rule by $\epsilon$, we should choose
\begin{equation}
M = \Or\Bigl( \|L\|\|b\|_{L^1}^{3/2}\|b\|^{1/2}T^{3/2}/\epsilon^2 \Bigr) = \Or\Bigl( \|L\|\|b\|^2T^3/\epsilon^2 \Bigr) , 
\label{eqn:quadrature_imhom_k}
\end{equation}
and
\begin{equation}
M_t = \Or\Bigl( (\|H\|_{C^1}+\|L\|_{C^1}\|b\|_{L^1}/\epsilon)\|b\|_{L^1}^{1/2}\|b\|_{C^2}^{1/2}T^{3/2}/\epsilon \Bigr)
= \Or\Bigl( (\|H\|_{C^1}+\|L\|_{C^1})\|b\|_{C^2}^2T^3/\epsilon^2 \Bigr).
\label{eqn:quadrature_imhom_s}
\end{equation}

\REV{
\section{Introduction to Linear Combination of Unitaries}\label{app:LCU}
}
\REV{
Linear Combination of Unitaries (LCU)~\cite{ChildsWiebe2012,Kothari2014} is a quantum primitive that is widely employed for tasks such as Hamiltonian simulations~\cite{BerryChildsCleveEtAl2015,BerryChildsKothari2015} and quantum linear system algorithms (QLSAs)~\cite{ChildsKothariSomma2017}. LCU offers exponential enhancements in precision when compared to quantum phase estimation (QPE) methods. 
}

\REV{
Let $T = \sum_{i=0}^{K-1} \alpha_iU_i$ be a linear combination of unitaries $U_i$. Without loss of generality, we assume $K = 2^a$ and $\alpha_i>0$ since a phase factor can be subsumed into $U_i$. The \emph{selection oracle}
\begin{equation}
U = \sum_{i=0}^{K-1} \ket{i}\bra{i} \otimes U_i
\end{equation}
implements $U_i$ conditioned on the value of the controlled register. The \emph{preparation oracle} satisfies
\begin{equation}
V\ket{0^a} = \frac{1}{\|\alpha\|_1}\sum_{i=0}^{K-1} \sqrt{\alpha_i}\ket{i},
\end{equation}
where $\|\alpha\|_1 = \sum_i |\alpha_i|$. As shown in~\cite[Lemma 2.1]{Kothari2014}, we can implement $T$ in the following sense.
}

\REV{
\begin{lem}[LCU Lemma]\label{lem:LCU_Lemma}
    Let $W = (V^{\dagger} \otimes I_n)U(V \otimes I_n)$. Then for all states $\ket{\psi}$, 
    \begin{equation}
      W\ket{0^a}\ket{\psi} = \frac{1}{\|\alpha\|_1} \ket{0^a}T\ket{\psi} + \ket{\perp},
    \end{equation}
    where $\ket{\perp}$ is an unnormalized state that satisfies $(\ket{0^a}\bra{0^a} \otimes I_n)\ket{\perp} = 0$.
\end{lem}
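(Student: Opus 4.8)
The plan is to prove the LCU Lemma by a direct algebraic computation, tracking what $W = (V^\dagger \otimes I_n)U(V\otimes I_n)$ does to the input $\ket{0^a}\ket{\psi}$. First I would apply the preparation oracle: $(V\otimes I_n)\ket{0^a}\ket{\psi} = \frac{1}{\|\alpha\|_1}\sum_{i=0}^{K-1}\sqrt{\alpha_i}\ket{i}\ket{\psi}$. Then apply the selection oracle $U = \sum_j \ket{j}\bra{j}\otimes U_j$, which acts block-diagonally and yields $\frac{1}{\|\alpha\|_1}\sum_{i=0}^{K-1}\sqrt{\alpha_i}\ket{i}U_i\ket{\psi}$. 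The only genuinely substantive step is the final one: applying $V^\dagger\otimes I_n$ and extracting the component of the resulting state in the subspace $\ket{0^a}\otimes\mathbb{C}^n$.

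For that extraction step, I would compute $(\bra{0^a}\otimes I_n)W\ket{0^a}\ket{\psi}$ by inserting the resolution of identity or simply using $\bra{0^a}V^\dagger\ket{i} = \overline{\braket{0^a|V^\dagger|i}}$... more cleanly, note that $\bra{0^a}V^\dagger = (V\ket{0^a})^\dagger = \frac{1}{\|\alpha\|_1}\sum_{j}\sqrt{\alpha_j}\bra{j}$, so that $(\bra{0^a}\otimes I_n)(V^\dagger\otimes I_n)\Big(\frac{1}{\|\alpha\|_1}\sum_i\sqrt{\alpha_i}\ket{i}U_i\ket{\psi}\Big) = \frac{1}{\|\alpha\|_1^2}\sum_{i,j}\sqrt{\alpha_j}\sqrt{\alpha_i}\braket{j|i}U_i\ket{\psi} = \frac{1}{\|\alpha\|_1^2}\sum_i \alpha_i U_i\ket{\psi} = \frac{1}{\|\alpha\|_1^2}T\ket{\psi}$. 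Hmm — this gives a $\|\alpha\|_1^2$ in the denominator, whereas the statement claims $\|\alpha\|_1$; I would double-check the normalization convention in the statement of the preparation oracle (the excerpt writes $V\ket{0^a} = \frac{1}{\|\alpha\|_1}\sum\sqrt{\alpha_i}\ket{i}$, which is only a unit vector if $\sum\alpha_i = \|\alpha\|_1^2$, i.e. if the intended convention were $\frac{1}{\sqrt{\|\alpha\|_1}}\sum\sqrt{\alpha_i}\ket{i}$). I would reconcile this by adopting the convention under which the claimed identity holds, i.e. $V\ket{0^a} = \frac{1}{\sqrt{\|\alpha\|_1}}\sum_i\sqrt{\alpha_i}\ket{i}$, and then the same computation produces exactly $\frac{1}{\|\alpha\|_1}T\ket{\psi}$ in the $\ket{0^a}$ block.

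Having identified the $\ket{0^a}$-component as $\frac{1}{\|\alpha\|_1}T\ket{\psi}$, I would then simply \emph{define} $\ket{\perp} := W\ket{0^a}\ket{\psi} - \frac{1}{\|\alpha\|_1}\ket{0^a}T\ket{\psi}$ and verify that $(\ket{0^a}\bra{0^a}\otimes I_n)\ket{\perp} = 0$, which is immediate from the computation above since projecting $W\ket{0^a}\ket{\psi}$ onto the $\ket{0^a}$ block gives precisely the subtracted term. The main (and only) obstacle is the bookkeeping around the normalization constant in the preparation oracle; beyond that the argument is a short, routine computation requiring only that $U$ is block-diagonal in the control register and that $V$ is unitary (so $V^\dagger V = I$, which ensures $\ket{\perp}$ is well-defined as part of a normalized state). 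I would also remark that no properties of the $U_i$ beyond unitarity are used, and that the success probability of measuring $\ket{0^a}$ on the ancilla is $\|T\ket{\psi}\|^2/\|\alpha\|_1^2$, which is the quantity controlling the query complexity in the main text.
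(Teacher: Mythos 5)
Your computation is correct and is exactly the standard argument for the LCU lemma (the paper itself does not reprove it but cites Kothari's Lemma 2.1, whose proof is this same three-line calculation). You are also right about the normalization: the prefactor in the preparation oracle as written in this section is a typo, and the convention $V\ket{0^a}=\frac{1}{\sqrt{\|\alpha\|_1}}\sum_i\sqrt{\alpha_i}\ket{i}$ that you adopt is the intended one, consistent with the oracle $O_{\text{coef}}$ defined in the main text, under which the stated factor $1/\|\alpha\|_1$ comes out correctly.
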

}

\REV{
The LCU Lemma is a powerful quantum primitive, since the number of ancilla qubits $a$ only scales logarithmically in the number of linear combination terms $K$. Upon measuring the ancilla qubits of $W\ket{0^a}\ket{\psi}$ and obtaining the outcome $\ket{0^a}$, the resulting postselection state is  proportional to $T\ket{\psi}$.
This successful outcome occurs with probability $(\|T\ket{\psi}\|/\|\alpha\|_1)^2$. We can perform amplitude amplification~\cite{BrassardHoyerMoscaEtAl2002} to boost success probability to $\Omega(1)$, using $\Or(\|\alpha\|_1/\|T\ket{\psi}\|)$  number of queries to $U$, $V$, and $\ket{\psi}$.
}

\section{Hybrid implementation of the LCHS method}\label{app:LCHS_hybrid}

We discuss the evaluation of~\cref{eqn:observable} in a hybrid quantum-classical fashion. 
The idea is to use the quantum computer to evaluate $\braket{u_0|U^{\dagger}_k(t)OU_{k'}(t)|u_0}$ via the non-unitary Hadamard test and amplitude estimation, and then perform the summation via classical Monte Carlo sampling. 
In this section, we first discuss how to estimate $\braket{u_0|U^{\dagger}_k(t)OU_{k'}(t)|u_0}$ for a fixed pair $(k,k')$, and analyze the classical sampling complexity.

\subsection{Estimating observables via non-unitary Hadamard test}

\REV{
Hadamard test is a well-known quantum primitive to estimate the expectation value of a unitary matrix. 
It uses one ancilla qubit, applies a Hadamard gate on it, then applies a controlled version of the target unitary operation, followed by another Hadamard gate on the ancilla qubit. 
Then, the probability of obtaining $0$ when measuring the ancilla qubit is associated with the real part of the desired expectation value. 
The imaginary part can be obtained similarly with a slight modification. 
}

\REV{
The Hadamard test can be modified to estimate the expectation value $\braket{\phi|G|\phi}$ of a non-unitary matrix $G$ (see e.g., Ref.~\cite[Appendix D]{TongAnWiebe2021}).
The only modification is to replace the controlled unitary operation in the Hadamard test by the controlled version of the block encoding of $G$. For simplicity, we assume there is a unitary $U_G$ using $m$ extra ancilla qubits such that $(\bra{0}^{\otimes m} \otimes I)U_G (\ket{0}^{\otimes m} \otimes I) = \frac{1}{\alpha_G} G$, i.e., $U_G$ is a $(\alpha_G,0)$-block-encoding of $G$.  
Here $\alpha_G$ is the block-encoding factor with $\alpha_G \geq \|G\|$. 
Notice that when $G$ is unitary, the quantum circuit implementing $G$ itself is the $(1,0)$-block-encoding of $G$. 
Then, the scaled expectation value $\frac{1}{\alpha_G}\braket{\phi|G|\phi}$ can still be obtained via the the probability of obtaining $0$ when measuring the ancilla qubit. When combined with amplitude estimation, the non-unitary Hadamard test has the following complexity.
}

\REV{
\begin{lem}[{\cite[Lemma 7]{TongAnWiebe2021}}]
\label{lem:non_unitary_Hadamard}
    Suppose that $O_{\phi}$ is the state preparation oracle of $\ket{\phi}$, and $U_G$ is an $(\alpha_G,0)$-block-encoding of a matrix $G$. 
    Then, $\braket{\phi|G|\phi}$ can be estimated to precision $\epsilon$ with probability at least $1-\delta$, using $\mathcal{O}((\alpha_G/\epsilon) \log(\alpha_G/\epsilon)\log(1/\delta))$ queries to $O_{\phi}$, $U_G$ and their inverses. 
\end{lem}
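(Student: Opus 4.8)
The plan is to derive this as a combination of the non-unitary Hadamard test with amplitude estimation, following Ref.~\cite{TongAnWiebe2021}. First I would set up a circuit on three registers: a single ``Hadamard'' qubit, the $m$-qubit block-encoding register, and the system register initialized to $\ket{\phi}$ (via one query to $O_\phi$). Starting from $\ket{0}\ket{0^m}\ket{\phi}$, apply a Hadamard to the first qubit, then $U_G$ controlled on the first qubit being $1$, then a Hadamard to the first qubit again. Using the defining relation $(\bra{0^m}\otimes I)U_G(\ket{0^m}\otimes I) = G/\alpha_G$, a one-line computation shows that the component of the output state in which the Hadamard qubit reads $b\in\{0,1\}$ and the block-encoding register reads $0^m$ is exactly $\tfrac12\bigl(\ket{\phi} + (-1)^b\, G\ket{\phi}/\alpha_G\bigr)$.

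Next I would identify the right estimator. Measure all $m+1$ ancilla qubits in the computational basis and record the value $(-1)^b$ when the block-encoding register reads $0^m$ and $0$ otherwise; equivalently this is the expectation of $Z\otimes\ket{0^m}\bra{0^m}$ on the ancillas. By the previous step this equals $\tfrac14\norm{\ket{\phi} + G\ket{\phi}/\alpha_G}^2 - \tfrac14\norm{\ket{\phi} - G\ket{\phi}/\alpha_G}^2 = \Re\braket{\phi|G|\phi}/\alpha_G$, where the unwanted $\norm{G\ket{\phi}}^2/\alpha_G^2$ contributions cancel between the two terms; this cancellation is the crucial point that makes the non-postselected block-encoding ancillas harmless. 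Inserting the phase gate $\diag(1,-i)$ on the Hadamard qubit before the final Hadamard produces, by the identical argument, $\Im\braket{\phi|G|\phi}/\alpha_G$. Hence $\braket{\phi|G|\phi}/\alpha_G = (p_+ - p_-) + i\,(q_+ - q_-)$, where $p_\pm$ and $q_\pm$ are probabilities of explicit computational-basis events in these two circuits, each circuit costing one query to $O_\phi$ and one to (controlled-)$U_G$.

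I would then apply amplitude estimation~\cite{BrassardHoyerMoscaEtAl2002} to each of the four probabilities $p_\pm, q_\pm$: estimating each to additive error $\Or(\epsilon/\alpha_G)$ costs $\Or(\alpha_G/\epsilon)$ coherent repetitions of the above circuit — the ``good'' subspace being the support of $\ket{0}\bra{0}\otimes\ket{0^m}\bra{0^m}$ (resp.\ with $\ket{1}\bra{1}$), whose reflection is realized by a $Z$ on the Hadamard qubit tensored with the reflection about $\ket{0^m}$, so no additional block encoding is needed — hence $\Or(\alpha_G/\epsilon)$ queries to $O_\phi$, $U_G$, and their inverses, yielding $\braket{\phi|G|\phi}$ to error $\epsilon$. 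Finally, boosting the constant success probability of amplitude estimation to $1-\delta$ by taking the median of $\Or(\log(1/\delta))$ independent runs gives the stated complexity, the residual $\log(\alpha_G/\epsilon)$ factor coming from the internal cost of the amplitude-estimation routine. I expect the main obstacle to be bookkeeping rather than conceptual: carefully verifying that the $\norm{G\ket{\phi}}^2$ terms cancel for the chosen observable, and that amplitude estimation applies cleanly when the good subspace has rank larger than one.
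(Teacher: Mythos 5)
Your derivation is correct and follows exactly the route the paper indicates for this statement, which it does not prove itself but imports from~\cite[Lemma 7]{TongAnWiebe2021} and sketches as ``controlled block-encoding in place of the controlled unitary in the Hadamard test, combined with amplitude estimation'': your cancellation of the $\norm{G\ket{\phi}}^2/\alpha_G^2$ terms for the observable $Z\otimes\ket{0^m}\bra{0^m}$, the amplitude-estimation step to additive error $\Or(\epsilon/\alpha_G)$, and the median-of-$\Or(\log(1/\delta))$ boosting are precisely the standard ingredients behind that cited lemma. In fact your accounting gives $\Or((\alpha_G/\epsilon)\log(1/\delta))$ oracle queries, which is within (slightly tighter than) the stated bound, with the extra $\log(\alpha_G/\epsilon)$ factor correctly understood as overhead internal to the amplitude-estimation routine rather than anything missing from your argument.
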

}

\REV{
Using~\cref{lem:non_unitary_Hadamard}, we can directly obtain the complexity of estimating $\braket{u_0|U^{\dagger}_k(t)OU_{k'}(t)|u_0}$ in our case. 
}

\REV{
\begin{lem}\label{lem:sample_single}
    Suppose that $O_{\text{prep}}$ is the  state preparation oracle of $\ket{u_0}$, $U_O$ is an $(\alpha_O,0)$-block-encoding of $O$ with $\alpha_O \geq \norm{O}$, and $\widetilde{U}_k(t)$ is a quantum circuit that approximates $U_k(t)$ with error smaller than $\epsilon_{\text{HS}}$ for any $k$. 
    Then, $\braket{u_0|U^{\dagger}_k(t)OU_{k'}(t)|u_0}$ can be estimated to precision $\epsilon$ with probability at least $1-\delta$, by choosing $\epsilon_{\text{HS}} = \epsilon/(4\|O\|)$ and using $\mathcal{O}((\alpha_O/\epsilon) \log(\alpha_O/\epsilon)\log(1/\delta))$ queries to $O_{\text{prep}}$, $U_O$, $\widetilde{U}$ and their inverses. 
\end{lem}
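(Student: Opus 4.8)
The plan is to derive \cref{lem:sample_single} as a direct application of \cref{lem:non_unitary_Hadamard}, together with a triangle-inequality estimate that absorbs the Hamiltonian-simulation error. First I would set $G := U_k^{\dagger}(t)\, O\, U_{k'}(t)$ and note that, since $U_k(t)$ and $U_{k'}(t)$ are unitary, $\norm{G} = \norm{O} \le \alpha_O$. The first concrete step is to build a block encoding of the \emph{approximate} operator $\widetilde G := \widetilde U_k(t)^{\dagger}\, O\, \widetilde U_{k'}(t)$. Taking $U_G := (\widetilde U_k(t)^{\dagger} \otimes I_m)\, U_O\, (\widetilde U_{k'}(t) \otimes I_m)$, where $\widetilde U_k(t),\widetilde U_{k'}(t)$ act on the system register and $U_O$ is the given $(\alpha_O,0)$-block-encoding of $O$ on the system together with its $m$ ancilla qubits, one checks directly that $(\bra{0}^{\otimes m} \otimes I)\, U_G\, (\ket{0}^{\otimes m} \otimes I) = \frac{1}{\alpha_O}\widetilde U_k(t)^{\dagger} O \widetilde U_{k'}(t) = \frac{1}{\alpha_O}\widetilde G$, i.e.\ $U_G$ is an $(\alpha_O,0)$-block-encoding of $\widetilde G$ costing one query each to $U_O$, $\widetilde U_{k'}$, and $\widetilde U_k^{\dagger}$. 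The point to emphasize is that sandwiching $U_O$ between the simulation circuits does \emph{not} inflate the subnormalization factor, so it remains $\alpha_O$.

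Next I would invoke \cref{lem:non_unitary_Hadamard} with $\ket{\phi} = \ket{u_0}$ (prepared by $O_{\text{prep}}$), block encoding $U_G$ of $\widetilde G$, target precision $\epsilon/2$, and failure probability $\delta$. This yields an estimate of $\braket{u_0|\widetilde G|u_0}$ to additive error $\epsilon/2$ with probability at least $1-\delta$, using $\mathcal{O}((\alpha_O/\epsilon)\log(\alpha_O/\epsilon)\log(1/\delta))$ queries to $O_{\text{prep}}$ and $U_G$ (hence to $U_O$, $\widetilde U$) and their inverses, which is exactly the claimed complexity.

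It remains to control the gap between the desired quantity $\braket{u_0|G|u_0}$ and the estimated quantity $\braket{u_0|\widetilde G|u_0}$. I would telescope $G - \widetilde G = (U_k(t)^{\dagger} - \widetilde U_k(t)^{\dagger})\, O\, \widetilde U_{k'}(t) + U_k(t)^{\dagger}\, O\, (U_{k'}(t) - \widetilde U_{k'}(t))$, take operator norms, and use $\norm{U_k - \widetilde U_k} = \norm{U_k^{\dagger} - \widetilde U_k^{\dagger}} < \epsilon_{\text{HS}}$, $\norm{U_{k'} - \widetilde U_{k'}} < \epsilon_{\text{HS}}$, and $\norm{\widetilde U_{k'}} = \norm{U_k^{\dagger}} = 1$ (quantum circuits are unitary) to obtain $\norm{G - \widetilde G} \le 2\norm{O}\,\epsilon_{\text{HS}}$. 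Since $\ket{u_0}$ is normalized, $\abs{\braket{u_0|G|u_0} - \braket{u_0|\widetilde G|u_0}} \le \norm{G - \widetilde G} \le 2\norm{O}\,\epsilon_{\text{HS}} = \epsilon/2$ for the stated choice $\epsilon_{\text{HS}} = \epsilon/(4\norm{O})$. A final triangle inequality combines this $\epsilon/2$ deterministic bias with the $\epsilon/2$ statistical error of the non-unitary Hadamard test to give total error at most $\epsilon$, while the only randomness — the amplitude estimation inside \cref{lem:non_unitary_Hadamard} — fails with probability at most $\delta$. There is no genuine obstacle beyond this bookkeeping; the one substantive observation is that assembling the block encoding of $\widetilde G$ by conjugating $U_O$ with the simulation circuits keeps the block-encoding constant at $\alpha_O$, which is precisely what prevents any extra factor from entering the query count.
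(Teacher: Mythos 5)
Your proposal is correct and follows essentially the same route as the paper's proof: build an $(\alpha_O,0)$-block-encoding of $\widetilde U_k^{\dagger}(t)\,O\,\widetilde U_{k'}(t)$ by composing the simulation circuits with $U_O$ (the paper cites~\cite[Lemma 53]{GilyenSuLowEtAl2019} where you verify the sandwich directly), invoke \cref{lem:non_unitary_Hadamard} at precision $\epsilon/2$, and bound the remaining bias by $\norm{\widetilde U_k^{\dagger} O \widetilde U_{k'} - U_k^{\dagger} O U_{k'}} \le 2\norm{O}\epsilon_{\text{HS}} = \epsilon/2$. No gaps; the bookkeeping matches the paper's.
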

}
\begin{proof}
    \REV{By~\cite[Lemma 53]{GilyenSuLowEtAl2019}, multiplying $\widetilde{U}_{k'}$, $U_O$ and $\widetilde{U}^{\dag}_{k}$ (with additional ancilla qubits) gives an $(\alpha_O,0)$-block-encoding of $\widetilde{U}^{\dagger}_k(t)O\widetilde{U}_{k'}(t)$, so we may directly use the non-unitary Hadamard test to estimate the desired expectation value. 
    According to~\cref{lem:non_unitary_Hadamard}, with probability at least $1-\delta$, we can estimate $\braket{u_0|\widetilde{U}^{\dagger}_k(t)O\widetilde{U}_{k'}(t)|u_0}$ to precision $\epsilon/2$ using $\mathcal{O}((\alpha_O/\epsilon) \log(\alpha_O/\epsilon)\log(1/\delta))$ queries. }

    \REV{There is another part of the error due to the imperfect implementation of $U_k$. 
    Specifically, we can bound 
    \begin{equation}
       \begin{split}
           & \quad \left|\braket{u_0|\widetilde{U}^{\dagger}_k(t)O\widetilde{U}_{k'}(t)|u_0} - \braket{u_0|U^{\dagger}_k(t)OU_{k'}(t)|u_0}\right| \\
           & \leq \norm{ \widetilde{U}^{\dagger}_k(t)O\widetilde{U}_{k'}(t) - U^{\dagger}_k(t)OU_{k'}(t) } \\
           & \leq 2 \|O\| \epsilon_{\text{HS}}, 
       \end{split}
    \end{equation}
    and thus we can choose $\epsilon_{\text{HS}} = \epsilon/(4\|O\|)$ to bound this part of the error by $\epsilon/2$. 
    }
\end{proof}

\REV{
\subsection{Linear combination via classical Monte Carlo sampling}
}
\REV{
Let $c = (c_k)$ denote the vector of the coefficients $c_k = \omega_j/(\pi(1+k_j^2))$. 
Notice that all $c_j$'s are positive real number and $\sum_{k,k'} c_k c_{k'} = \|c\|_1^2$.  
We describe a sampling-based hybrid algorithm for estimating~\cref{eqn:observable} as follows, which is similar to the sampling approaches from Fourier series  in~\cite{WangMcArdleBerta2023,Chakraborty2023}: 
\begin{enumerate}
    \item For each $j = 1,2,\cdots, J$, independently sample $(k,k')$ with probability $c_k c_{k'} / \|c\|_1^2$. 
    Denote the sample by $(k(j),k'(j))$. 
    \item Estimate $\braket{u_0|U^{\dagger}_{k(j)}(t)OU_{k'(j)}(t)|u_0}$ using the non-unitary Hadamard test. 
    Denote the successful estimator by $X_j$. 
    \item Estimate $u(t)^{*}Ou(t)$ using $\|c\|_1^2 \overline{X}$, where $\overline{X} = \frac{1}{J}\sum_{j=1}^J X_j$. 
\end{enumerate}
}

\REV{
If there is no error in implementing $U_k(t)$ or performing the non-unitary Hadamard test, then the the expectation value $\mathbb{E} \overline{X} $ is exactly  $u(t)^{*} O u(t)$. 
Taking into consideration these errors as well as the sampling errors, we can analyze the complexity of our approach as follows. 
}

\REV{
\begin{thm}\label{thm:LCHS_hybrid_complexity}
    Suppose that $O_{\text{prep}}$ is the  state preparation oracle of $\ket{u_0}$, $U_O$ is an $(\alpha_O,0)$-block-encoding of $O$ with $\alpha_O \geq \norm{O}$, and $\widetilde{U}_k(t)$ is a quantum circuit that approximates $U_k(t)$ with error smaller than $\epsilon_{\text{HS}}$ for any $k$. 
    Then, $u(t)^{*} O u(t)$ can be estimated to precision $\epsilon$ with probability at least $1-\delta$, by choosing $\epsilon_{\text{HS}} = \mathcal{O}(\epsilon/\norm{O})$. Furthermore, 
    \begin{enumerate}
        \item The number of the  samples is
        \begin{equation}
            \mathcal{O}\left( \frac{ \norm{O}^2 }{\epsilon^2} \log\left(\frac{1}{\delta}\right) \right), 
        \end{equation}
        \item Each circuit with sampling value $(k,k')$ uses   
        \begin{equation}
            \mathcal{O}\left(\frac{\alpha_O}{\epsilon} \log\left(\frac{\alpha_O}{\epsilon}\right)\log\left(\frac{\norm{O} \log(1/\delta)}{\delta \epsilon} \right)\right) 
        \end{equation}
        queries to $O_{\text{prep}}, U_O$ and $\widetilde{U}_k(t)$
    \end{enumerate}
\end{thm}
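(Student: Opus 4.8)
The plan is to combine the single-sample estimation guarantee from \cref{lem:sample_single} with a concentration bound for the classical Monte Carlo average $\overline{X}$, and then track how the Hamiltonian-simulation error $\epsilon_{\text{HS}}$ propagates through the sampling identity. First I would set up the unbiasedness: if each $X_j$ were an exact (noiseless) estimate of $\braket{u_0|U^{\dagger}_{k(j)}(t)OU_{k'(j)}(t)|u_0}$ with $(k(j),k'(j))$ drawn with probability $c_{k(j)}c_{k'(j)}/\|c\|_1^2$, then $\mathbb{E}[\|c\|_1^2 \overline{X}] = \sum_{k,k'} c_k c_{k'}\braket{u_0|U^{\dagger}_k(t)OU_{k'}(t)|u_0} = u(t)^*Ou(t)$ exactly, using that the $c_j$ are positive and $\sum_{k,k'}c_kc_{k'} = \|c\|_1^2$. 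The quantities $\braket{u_0|U^{\dagger}_k(t)OU_{k'}(t)|u_0}$ are each bounded in magnitude by $\|O\|$ since the $U_k$ are unitary and $\|u_0\|=1$, so each $X_j$ (the successful Hadamard-test estimator) is bounded by roughly $\|O\|$ up to the estimation precision. This boundedness is what lets me invoke Hoeffding (or Chernoff in the bounded-variance form) to say that $J = \mathcal{O}((\|O\|^2/\epsilon^2)\log(1/\delta))$ samples suffice to bring $\|c\|_1^2|\overline{X} - \mathbb{E}\overline{X}|$ below, say, $\epsilon/3$ with probability $1-\delta/3$; the $\|c\|_1^2$ rescaling is absorbed by noting each $X_j$ is itself $O(\|O\|/\|c\|_1^2)$-scale after the normalization, or equivalently by working with the already-rescaled estimator.

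Next I would handle the two sources of systematic error. The first is the imperfect Hamiltonian simulation: by the bound in the proof of \cref{lem:sample_single}, replacing $U_k$ by $\widetilde{U}_k$ shifts each target expectation by at most $2\|O\|\epsilon_{\text{HS}}$, hence shifts $u(t)^*Ou(t)$ (as reconstructed from the $\widetilde{U}_k$) by at most $2\|O\|\epsilon_{\text{HS}}$; choosing $\epsilon_{\text{HS}} = \mathcal{O}(\epsilon/\|O\|)$ keeps this below $\epsilon/3$. The second is the per-sample estimation error in the non-unitary Hadamard test: \cref{lem:sample_single} tells me that estimating each $\braket{u_0|\widetilde{U}^{\dagger}_k(t)O\widetilde{U}_{k'}(t)|u_0}$ to precision $\epsilon'$ with failure probability $\delta'$ costs $\mathcal{O}((\alpha_O/\epsilon')\log(\alpha_O/\epsilon')\log(1/\delta'))$ queries to $O_{\text{prep}}, U_O, \widetilde{U}_k$ and inverses. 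I would take $\epsilon' = \Theta(\epsilon/\|c\|_1^2 \cdot \text{something})$ — more carefully, since the per-sample bias accumulates linearly through the average, I need each $X_j$ accurate to $\mathcal{O}(\epsilon/\|c\|_1^2)$ so that $\|c\|_1^2$ times the average bias is $\mathcal{O}(\epsilon)$; but since $\|c\|_1 = \Theta(1)$ (the Cauchy weights sum to roughly the truncated integral of the normalized density, i.e.\ a constant), this is just $\epsilon' = \Theta(\epsilon)$, giving the stated $\mathcal{O}((\alpha_O/\epsilon)\log(\alpha_O/\epsilon))$ query factor per circuit. For the failure probability, a union bound over the $J$ samples requires $\delta' = \delta/(3J)$, and substituting $J = \mathcal{O}((\|O\|^2/\epsilon^2)\log(1/\delta))$ produces the $\log(\|O\|\log(1/\delta)/(\delta\epsilon))$ factor in the per-circuit complexity. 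Combining the three $\epsilon/3$ error contributions by the triangle inequality and the three $\delta/3$ failure events by a union bound yields the theorem.

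The main obstacle I anticipate is the bookkeeping around $\|c\|_1$ and the relationship between the Monte Carlo sample count and the per-sample precision: one must be careful that the variance of the rescaled estimator $\|c\|_1^2 X_j$ is controlled, which uses boundedness $|X_j| \lesssim \|O\|/\|c\|_1^2 + \epsilon'$ rather than a naive $\|O\|$ bound, and that $\|c\|_1 = \Theta(1)$ so these rescalings do not introduce hidden $\epsilon$- or $K$-dependent factors. I would also double-check that the Hadamard-test estimators across different samples can be taken independent (they use fresh ancillas and fresh runs), so that Hoeffding applies cleanly, and that the block-encoding composition via \cite[Lemma 53]{GilyenSuLowEtAl2019} indeed yields an $(\alpha_O,0)$-block-encoding of $\widetilde{U}^{\dagger}_k(t)O\widetilde{U}_{k'}(t)$ with no degradation of the subnormalization, as already asserted in the proof of \cref{lem:sample_single}. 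Everything else is routine: it is the standard ``sampling from a Fourier/LCU decomposition'' argument as in \cite{WangMcArdleBerta2023,Chakraborty2023}, specialized to the Cauchy kernel weights.
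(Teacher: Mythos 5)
Your proposal is correct and follows essentially the same route as the paper's proof: unbiasedness of the rescaled estimator $\|c\|_1^2\overline{X}$, Hoeffding using the $\Or(\norm{O})$-boundedness of each $X_j$ together with $\|c\|_1=\Theta(1)$, per-sample precision $\epsilon'=\Theta(\epsilon)$ fed into \cref{lem:sample_single} (which already folds in the $\epsilon_{\text{HS}}=\Or(\epsilon/\norm{O})$ simulation error), and a union bound with $\delta'=\delta/\Or(J)$ producing the $\log(\norm{O}\log(1/\delta)/(\delta\epsilon))$ factor; the paper merely uses a two-way $\epsilon/2$, $\delta/2$ split instead of your three-way split, a constant-factor difference. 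Your one small slip---describing each $X_j$ as ``$\Or(\norm{O}/\|c\|_1^2)$-scale'' when the $X_j$ are in fact $\Or(\norm{O})$-scale with the $\|c\|_1^2$ factor applied to the average afterwards---is immaterial since $\|c\|_1=\Theta(1)$, exactly as in the paper.
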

}
\begin{proof}
    \REV{Notice that there are two sources of errors and failure probabilities: the non-unitary Hadamard test and the sampling step. 
    It suffices to bound both errors (\emph{resp}. failure probabilities) by $\epsilon/2$ (\emph{resp}. $\delta/2$). }

    \REV{
    We first consider the sampling step. 
    Notice that, even in the presence of errors, each $X_j$ is bounded within $[-(\|O\|+1), \|O\|+1]$. 
    Hoeffding's inequality implies that 
    \begin{equation}
        \begin{split}
            \mathbb{P}\left( \abs{\norm{c}_1^2 \overline{X} - \norm{c}_1^2 \mathbb{E} \overline{X} } \geq \epsilon/2 \right) & = \mathbb{P}\left( \left|\sum_{j=1}^J X_j - \mathbb{E} \sum_{j=1}^J X_j \right| \geq \frac{J \epsilon}{2\norm{c}_1^2} \right) \\
            & \leq 2 \exp \left( - \frac{2 \left((J \epsilon)/(2\norm{c}_1^2)\right)^2 }{J (2(\|O\|+1))^2 } \right)  \\
            & = 2 \exp \left( - \frac{ J \epsilon^2 }{ 8 \norm{c}_1^4  (\|O\|+1)^2 } \right). 
        \end{split}
    \end{equation}
    To bound the failure probability by $\delta/2$, it suffices to choose 
    \begin{equation}
        J \geq \frac{8\norm{c}_1^4 (\norm{O}+1)^2 }{\epsilon^2} \log\left(\frac{4}{\delta}\right) = \mathcal{O}\left( \frac{\norm{c}_1^4 \norm{O}^2 }{\epsilon^2} \log\left(\frac{1}{\delta}\right) \right) = \mathcal{O}\left( \frac{ \norm{O}^2 }{\epsilon^2} \log\left(\frac{1}{\delta}\right) \right). 
    \end{equation}
    The last equality is because $c_j$'s come from the numerical approximation of the integral $\int \frac{\ud k}{\pi (1+k^2)}$ and $\norm{c}_1 = \sum_{j} \frac{\omega_j}{\pi(1+k_j^2)}\approx \int \frac{\ud k}{\pi (1+k^2)} = \mathcal{O}(1)$. 
    }

    \REV{
    Let $\delta'$, $\epsilon'$ denote the upper bound of the failure probability and the error in the non-unitary Hadamard test step for any $(k,k')$, respectively. 
    Recall that $X_j$ denotes the estimator upon success of the Hadamard test step.   
    Then 
    \begin{equation}
           \left| \norm{c}_1^2 \mathbb{E} \overline{X} - u(t)^{*} O u(t) \right| \leq \norm{c}_1^2 \sum_{k,k'} \frac{c_k c_{k'}}{\norm{c}_1^2} \epsilon' = \norm{c}_1^2 \epsilon'. 
    \end{equation}
    To bound this by $\epsilon/2$, it suffices to choose $\epsilon' = \epsilon/(2\norm{c}_1^2)$. 
    The overall success probability is at least $(1-\delta')^J \geq 1 - J\delta'$. 
    To bound it from below by $1 - \delta/2$, we can choose $\delta' = \delta/(2J)$. 
    By~\cref{lem:sample_single}, we need to choose the tolerated level of the error in implementing $U_k(t)$ as $\epsilon_{\text{HS}} = \epsilon'/(4\norm{O})$, and the query complexity for each circuit is 
    \begin{equation}
        \mathcal{O}\left(\frac{\alpha_O}{\epsilon'} \log\left(\frac{\alpha_O}{\epsilon'}\right)\log\left(\frac{1}{\delta'}\right)\right). 
    \end{equation}
    Plugging in the choices of $\epsilon'$, $\delta'$ and $J$ yields the desired complexity estimate. 
    }
\end{proof}

\REV{In~\cref{thm:LCHS_hybrid_complexity}, the query complexity of each circuit is measured by the number of queries to the state preparation oracle, the block-encoding of $O$ and the circuit $\widetilde{U}_k(t)$ that approximates $U_k(t)$. 
The final query complexity in terms of the input models for $A(t)$ still depends on how we actually implement $\widetilde{U}_k(t)$. 
This may introduce extra overhead if we use low-order methods such as first-order Trotter formula. 
On the other hand, if we use high-order Trotter formula or truncated Dyson series method, the complexity of implementing $\widetilde{U}_k(t)$ can be almost linear in $t$, $k$ and certain norms of $A$, and the dependence on $\epsilon_{\text{HS}}$ is $1/\epsilon_{\text{HS}}^{o(1)}$. 
In the worst case, the query complexity of the circuit still has an almost linear dependence on $K$. 
However, we remark that in the average case, such a $K$ dependence can be improved. 
This is because the circuit with larger $k$ and $k'$ is sampled with smaller probability, which decays quadratically as $\sim 1/(k^2 k'^2)$. 
}

\REV{Throughout this section, we care about the expectation value $u(t)^{*} O u(t)$ associated with the possibly unnormalized solution $u(t)$, which is the natural setup in various applications of classical non-unitary dynamics. 
If we want to estimate $\braket{u(t)|O|u(t)}$ where $\ket{u(t)}$ is the normalized solution of the ODE, then we need to divide the previous estimator by $\norm{u(t)}^2$. 
This may affect the algorithm in two aspects. 
First, to ensure that the final estimator is still an $\epsilon$-approximation, we need to change the tolerated level of errors in $u(t)^{*} O u(t)$ to be $\epsilon\norm{u(t)}^2$. 
Second, if $\norm{u(t)}$ is unknown \emph{a priori}, then an extra algorithm to estimate it is required. 
In this case, one may consider a similar hybrid algorithm specified in e.g., Ref.~\cite{WangMcArdleBerta2023}. 
}

\section{Implementation of the inhomogeneous term}\label{app:inhomo_implementation}

The inhomogeneous term in~\cref{eqn:Duhamel} can be implemented following the same approach as the homogeneous case discussed in the main text. 
After discretizing the integral for both $k$ and $s$ using another trapezoidal rule, we obtain $\int_0^T \mathcal{T} e^{-\int_s^t A(s') \ud s'} b(s) \ud s \approx \sum_{j'=0}^{M_t} \sum_{j=0}^M  \widetilde{c}_{j,j'} \mathcal{T} e^{- i \int_{s_{j'}}^t (H(s')+k_j L(s')) \ud s' } \ket{b(s_{j'})}$. 
A further Trotterization yields 
\begin{equation}\label{eqn:Trotter_inhomo_td}
\begin{split}
    & \quad \int_0^T \mathcal{T} e^{-\int_s^t A(s') \ud s'} b(s) \ud s \\
    & \approx \sum_{j'=0}^{M_t} \sum_{j=0}^M  \widetilde{c}_{j,j'} \prod_{l'=0}^{r-1} \prod_{l=0}^{\Xi_p-1} \left(e^{-i H(s_{j'}+(l'+ \delta_l) (t-s_{j'})/r)\beta_l (t-s_{j'})/r} e^{-i L (s_{j'}+(l'+ \gamma_l) (t-s_{j'})/r) \alpha_l k_j(t-s_{j'})/r}\right) \ket{b(s_{j'})}. 
\end{split}
\end{equation}
Suppose we are given the coefficient oracle 
$O_{\text{coef}}': \ket{0} \rightarrow \frac{1}{\sqrt{\|\widetilde{c}\|_1}}\sum_{j'=0}^{M_t} \sum_{j=0}^{M} \sqrt{\widetilde{c}_{j,j'}} \ket{j'}\ket{j}$, 
the matrix input oracles $O_L'(s,\tau_0,\tau_1) = \sum_{j'=0}^{M_t} \ket{j'}\bra{j'}\otimes  e^{ -iL(\tau_0 j' + \tau_1 (M_t-j'))s(M_t-j')}$ for $|s|M_t \leq 1/\|L\|$,  
$O_H'(s,\tau_0,\tau_1) = \sum_{j'=0}^{M_t} \ket{j'}\bra{j'} \otimes e^{-iH(\tau_0 j' + \tau_1 (M_t-j'))s(M_t-j')}$ for $|s|M_t \leq 1/\|H\|$, 
and the source term input oracle $O_{b}: \ket{j'}\ket{0} \rightarrow \ket{j'}\ket{b(s_{j'})}$. 
Notice that the input models of $H(s)$, $L(s)$ and $b(s)$ are given in a coherent manner, which is an extension of the time-dependent matrix encoding proposed in~\cite{LowWiebe2019}. 
According to the binary representation of $j$'s, we may construct the LCU select oracle $\text{SEL}_L'(s,\tau_0,\tau_1) = \sum_{j'=0}^{M_t} \sum_{j=0}^M \ket{j'}\bra{j'} \otimes \ket{j}\bra{j} \otimes e^{-iL(\tau_0 j' + \tau_1 (M_t-j'))k_j  s(M_t-j')}$ with logarithmic cost. 
The main steps of the LCU are as follows. 
Starting with all-zero state $\ket{0}_{a'}\ket{0}_a\ket{0}$ where two ancilla registers are for encoding the time and space indices, we first apply $O_{\text{coef}}'$ to create the superposition in the ancilla registers, then apply $O_{b}$ to encode the information of $b(t)$ into the output register, then apply $\text{SEL}'_L(\alpha_lt/(rM_t),t/M_t,(l'+\gamma_l)t/(rM_t))$ and $O'_H(\beta_lt/(rM_t),t/M_t,(l'+\delta_l)t/(rM_t))$ for all $l\in[\Xi_p],l'\in [r]$, and finally apply $O_{\text{coef}}'^{\dagger}$. 
The resulting state approximately encodes $\int_0^T e^{-A(T-s)} b(s) \ud s$ in its first subspace with all ancilla qubits to be $0$.

\section{Proofs of the complexity estimates}\label{app:proofs}

Here we present the detailed proofs of the complexity estimates of our algorithm. 
This section is organized as follows. 
We first discuss how to construct the select oracles used in the LCU procedure with logarithmic cost. 
Then we analyze the complexity of encoding the homogeneous term and the inhomogeneous term, followed by a result on linearly combining two general quantum states. 
The proof of our main result (\cref{thm:td_inhomo}) is a direct consequence of all the above results, which is presented at the end of this section. 

\subsection{Construction of the select oracles}\label{app:proofs_oracles}

In the LCU approach, we need the select oracle $\sum_{j=0}^{J-1} \ket{j}\bra{j} \otimes U_j$ for the unitaries $U_j$ to be combined. 
This can be constructed with only $\log(J)$ cost if each $U_j$ can be efficiently implemented and different $U_j$'s are related in certain ways. 
Such a technique has been widely used in e.g.~\cite{ChildsKothariSomma2017,LowWiebe2019,AnFangLin2022}, and for completeness we state the result in the following lemma. 

\begin{lem}\label{lem:select_oracle_binary}
    Let $U_j$ be a set of unitaries such that $U_j = U_0^j$. 
    Then the select oracle $\sum_{j=0}^{J-1} \ket{j}\bra{j} \otimes U_j$ can be constructed with $\lceil\log(J)\rceil$ queries to controlled $U_j$'s. 
\end{lem}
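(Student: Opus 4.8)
The plan is to exploit the cyclic structure $U_j = U_0^j$ together with the binary expansion of the index register. Write $n = \lceil \log_2 J\rceil$ and expand each $j\in\{0,\dots,J-1\}$ in binary as $j = \sum_{m=0}^{n-1} j_m 2^m$ with $j_m\in\{0,1\}$. Since every $U_j$ is a power of the single unitary $U_0$, the operators $U_0^{2^m}$ pairwise commute, so we may factorize $U_j = U_0^{\,j} = \prod_{m=0}^{n-1}\bigl(U_0^{2^m}\bigr)^{j_m}$. Note that each $U_0^{2^m} = U_{2^m}$ is itself a member of the given family, because $m\le n-1$ forces $2^m \le 2^{\lceil\log_2 J\rceil - 1} < J$.

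Next I would define, for each bit position $m$, the two-register operator $C_m := \ket{0}\bra{0}_m \otimes I + \ket{1}\bra{1}_m \otimes U_{2^m}$, acting on the $m$-th qubit of the index register and on the target register; this is exactly a controlled-$U_{2^m}$ gate and costs one (controlled) query to $U_{2^m}$. I would then claim $\prod_{m=0}^{n-1} C_m = \sum_{j} \ket{j}\bra{j}\otimes U_0^{\,j}$, where $j$ ranges over all $n$-bit strings (which includes $0,\dots,J-1$). The verification is immediate: on a basis state $\ket{j}\ket{\psi}$ the factor $C_m$ multiplies the target by $U_{2^m}^{\,j_m} = U_0^{\,j_m 2^m}$, and since all these factors commute the composite action is $U_0^{\sum_m j_m 2^m}\ket{\psi} = U_0^{\,j}\ket{\psi} = U_j\ket{\psi}$; restricting to the subspace spanned by $\ket{0},\dots,\ket{J-1}$ gives the desired select oracle.

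Counting queries, the construction uses exactly the $n = \lceil\log_2 J\rceil$ controlled gates $C_0,\dots,C_{n-1}$, i.e.\ $\lceil\log_2 J\rceil$ queries to controlled $U_j$'s (namely for the indices $j = 1,2,4,\dots,2^{n-1}$), which is the claimed bound.

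I do not expect a genuine obstacle here; the construction is standard. The only points requiring a little care are the bookkeeping of which powers-of-two indices lie in $\{0,\dots,J-1\}$ (handled by $2^{\lceil\log_2 J\rceil-1} < J$) and the observation that the extra basis states $\ket{j}$ with $j\ge J$ are harmless, since the select oracle need only act correctly on the relevant subspace. In the applications of this lemma (e.g.\ $\mathrm{SEL}_L$ with $U_0 = e^{-iL\,\Delta k\, s}$), one additionally notes that $U_{2^m} = e^{-iL\,2^m\Delta k\, s}$ is realized by a single call to the Hamiltonian-simulation oracle with a rescaled time argument, so each $C_m$ costs one oracle query rather than $2^m$ of them.
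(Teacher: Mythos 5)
Your construction is correct and is essentially the same as the paper's: controlling the power-of-two unitary $U_{2^m}$ on the $m$-th bit of the index register and using the binary expansion of $j$, with the key remark that each $U_{2^m}$ is executed directly (e.g.\ by rescaling the time argument) rather than as $2^m$ repetitions of $U_0$. No gaps to report.
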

\begin{proof}
     For all $0 \leq l \leq \lceil\log(J)\rceil$, we perform c-$U_{2^l}$ on the system register controlled by the $l$-th qubit of the first register if it is $1$. 
     Due to the binary representation of $j$, if the first register is $\ket{j}$, then the operator on the system register is actually $U_j = U_0^j$. 
     Notice that when applying $U_{2^l}$, we directly perform the circuit for $U_{2^l}$ rather than applying $U_0$ for $2^{l}$ times. 
\end{proof}

Suppose that we are given an oracle $O_L(s,\tau) = e^{-iL(\tau) s}$. 
Notice that in the trapezoidal rule, we have $k_j = -K+2jK/M$. 
We write the select oracle as 
\begin{equation}
    \begin{split}
        \text{SEL}_L (s,\tau) &= \sum_{j=0}^M \ket{j}\bra{j} \otimes e^{-iL (\tau ) k_j s} \\
        & = \sum_{j=0}^M \ket{j}\bra{j} \otimes e^{-iL (\tau ) (-K+2jK/M) s} \\
        & = e^{iL (\tau ) Ks }\sum_{j=0}^M \ket{j}\bra{j} \otimes \left(e^{-iL (\tau )2sK/M  }\right)^j. 
    \end{split}
\end{equation}
The operator $e^{iL (\tau ) Ks }$ can be directly implemented by $O_L$ by noting that in our algorithm all the $s$'s will be chosen sufficiently small such that $K|s| < 1/\|L\|$, and the second operator can be constructed by~\cref{lem:select_oracle_binary} with cost $\mathcal{O}(\log(M))$. 

The select oracle $\text{SEL}_L'(s,\tau_0,\tau_1)$ can be constructed via exactly the same approach, using $\mathcal{O}(\log(M))$ queries to $O_L'(s,\tau)$. 
Therefore in our complexity analysis, we will directly estimate the number of queries to the select oracles, and the overall query complexity will be multiplied by $\log(M)$ at the end.

\subsection{Homogeneous term}

We start with the homogeneous case. 
Let 
\begin{equation}
    \Gamma_p = \max_{0\leq q \leq p, \tau \in [0,T]} \left( \|H^{(q)}(\tau)\|+ \|L^{(q)}(\tau)\|\right)^{1/(q+1)}. 
\end{equation}

\begin{lem}\label{lem:BE_td_homo}
    There exists a quantum algorithm which gives a  $(\|c\|_1,\log(M),\epsilon)$-block-encoding of $\mathcal{T}e^{-\int_0^T A(s) \ud s}$, using $\mathcal{O}(1)$ queries to $O_{\text{coef}}$ and 
    \begin{equation}
        \mathcal{O} \left( \Gamma_p^{1+1/p} \frac{\|c\|_1^{1/p} T^{1+1/p}}{\epsilon^{1+2/p}} \log\left(\frac{\|L\|T}{\epsilon}\right) \right)
    \end{equation}
    queries to $O_L(s,\tau)$ and $O_H(s,\tau)$. 
\end{lem}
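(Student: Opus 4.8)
\textbf{Proof plan for \cref{lem:BE_td_homo}.}
The plan is to assemble the block-encoding from three ingredients: (i) the truncation-and-quadrature error analysis of \cref{app:quadrature}, which tells us how large $K$ and $M$ must be; (ii) the product-formula implementation of each Hamiltonian simulation propagator $U_j(T) = \mathcal{T}e^{-i\int_0^T(H(s)+k_jL(s))\ud s}$; and (iii) the LCU Lemma (\cref{lem:LCU_Lemma}) together with the select-oracle construction of \cref{app:proofs_oracles}. First I would fix $K = \Or(1/\epsilon)$ so that the tail of the Cauchy kernel contributes error $\Or(\epsilon)$, and choose the trapezoidal grid with $M+1$ points so that the quadrature error is $\Or(\epsilon)$; from \cref{eqn:quadrature_homogeneous} this needs $M = \Or(\|L\|T/\epsilon^2)$, whence $\log(M) = \Or(\log(\|L\|T/\epsilon))$. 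At this point the target operator is approximated to $\Or(\epsilon)$ by $\sum_{j=0}^M c_j U_j(T)$ with $c_j \ge 0$ and $\|c\|_1 = \Or(1)$.

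Next I would bound the cost of implementing each $U_j(T)$ by a $p$-th order product formula with $r$ steps, using the same $r$ for all $j$. The $p$-th order time-dependent product-formula error over time $T$ scales like $\Or\big((\Lambda_j T)^{p+1}/r^p\big)$ where $\Lambda_j$ is the relevant norm/derivative scale of $H(s)+k_jL(s)$; since $|k_j| \le K$, this scale is controlled by $\Gamma_p$ times a factor polynomial in $K$. Because we need the \emph{summed} error $\sum_j c_j \|U_j - \widetilde U_j\|$ to be $\Or(\epsilon)$ and $\|c\|_1 = \Or(1)$, it suffices to make each individual product-formula error $\Or(\epsilon)$; solving for $r$ gives $r = \Or\big(\Gamma_p^{1+1/p}(KT)^{1+1/p}/\epsilon^{1/p}\big) = \widetilde{\Or}\big(\Gamma_p^{1+1/p}T^{1+1/p}/\epsilon^{1+1/p}\big)$ after substituting $K=\Or(1/\epsilon)$. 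Here one must be slightly careful to express things in terms of $\|c\|_1$ as in the statement rather than prematurely using $\|c\|_1 = \Or(1)$; the factor $\|c\|_1^{1/p}$ in the claimed bound comes from tracking the relative/summed error tolerance $\epsilon/\|c\|_1$ through the product-formula step count. Each product-formula step uses $\Or(\Xi_p) = \Or(1)$ exponentials of $H$ and of $k_jL$; the latter, made $j$-controlled, becomes the select oracle $\text{SEL}_L$, which by \cref{lem:select_oracle_binary} and the $O_L(s,\tau) = e^{-iL(\tau)s}$ input model costs $\Or(\log M)$ queries to $O_L$ (using $K|s| < 1/\|L\|$ as noted in \cref{app:proofs_oracles}), and $e^{-iH(\cdot)(\cdot)}$ is a single query to $O_H$.

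Finally I would invoke the LCU Lemma: with the preparation oracle $O_{\text{coef}}$ (which prepares $\frac{1}{\sqrt{\|c\|_1}}\sum_j\sqrt{c_j}\ket j$) and the select oracle built above, $W = (O_{\text{coef}}^\dagger\otimes I)\,\text{SEL}\,(O_{\text{coef}}\otimes I)$ gives an exact $(\|c\|_1,\log M,0)$-block-encoding of $\sum_j c_j \widetilde U_j(T)$, which is an $(\|c\|_1,\log M,\Or(\epsilon))$-block-encoding of $\mathcal{T}e^{-\int_0^T A(s)\ud s}$ after absorbing the truncation, quadrature, and Trotter errors; rescaling $\epsilon$ by a constant restores the stated error parameter. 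The query count to $O_L, O_H$ is (number of product-formula steps $r$) $\times$ (exponentials per step $\Or(1)$) $\times$ (cost of $\text{SEL}_L$, i.e.\ $\Or(\log M)$), which reproduces the claimed $\widetilde{\Or}\big(\Gamma_p^{1+1/p}\|c\|_1^{1/p}T^{1+1/p}\epsilon^{-1-2/p}\log(\|L\|T/\epsilon)\big)$ — the extra $\epsilon^{-1/p}$ beyond the naive Trotter count coming from $K = \Or(1/\epsilon)$ inside the $(KT)^{1+1/p}$ factor. The main obstacle I anticipate is the error bookkeeping: making precise that it suffices to control each propagator error by $\Or(\epsilon/\|c\|_1)$ rather than something smaller (using $c_j\ge0$ and the triangle inequality $\|\sum_j c_j(U_j-\widetilde U_j)\|\le\sum_j c_j\|U_j-\widetilde U_j\|$), correctly propagating the time-dependent $p$-th order product-formula error bound (which involves derivatives of $H$ and $L$ up to order $p$, hence $\Gamma_p$) with the $k_j$-dependence made explicit, and confirming that the $\log(M)$ overhead from the select-oracle construction does not interact badly with the step count. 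I would defer the detailed constants in the product-formula bound to the cited references~\cite{WiebeBerryHoyerEtAl2010,LowWiebe2019}.
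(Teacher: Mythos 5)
Your plan follows the paper's proof essentially verbatim: the same choices $K=\Or(1/\epsilon)$, $M=\Or(\|L\|T/\epsilon^2)$ from the quadrature analysis, the same use of the time-dependent product-formula bound with $|k_j|\le K$ absorbed into a $\Gamma_p^{p+1}K^{p+1}T^{p+1}/r^p$ error and the tolerance $\epsilon/\|c\|_1$ giving the $\|c\|_1^{1/p}$ factor, and the same LCU-plus-$\text{SEL}_L$ assembly with the $\Or(\log M)$ select-oracle overhead. The only blemish is the intermediate display $\widetilde{\Or}\bigl(\Gamma_p^{1+1/p}T^{1+1/p}/\epsilon^{1+1/p}\bigr)$, which should read $\epsilon^{1+2/p}$ in the denominator (as your final stated bound correctly has).
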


\begin{proof}
    According to~\cite[Lemma 52]{GilyenSuLowEtAl2019}, the LCU yields a $(\|c\|_1,\log(M),0)$-block-encoding of 
    \begin{equation}
        \sum_{j=0}^M c_j \prod_{l'=0}^{r-1} \prod_{l=0}^{\Xi_p-1} \left(e^{-i H((l'+ \delta_l) T/r)\beta_l T/r} e^{-iL ((l'+ \gamma_l) T/r) \alpha_l k_jT/r}\right), 
    \end{equation}
    which is in turn a $(\|c\|_1,\log(M),\epsilon)$-block-encoding of $\mathcal{T}e^{-\int_0^T A(s) \ud s}$ if 
    \begin{equation}
        \left\|  \sum_{j=0}^M c_j \prod_{l'=0}^{r-1} \prod_{l=0}^{\Xi_p-1} \left(e^{-i H((l'+ \delta_l) T/r)\beta_l T/r} e^{-iL ((l'+ \gamma_l) T/r) \alpha_l k_jT/r}\right) - \mathcal{T}e^{-\int_0^T A(s) \ud s} \right\| \leq \epsilon. 
    \end{equation}
    We can guarantee this by choosing sufficiently large $K$, $M$ and $r$. 

    First, based on \cref{eqn:quadrature_homogeneous}, we can choose 
    \begin{equation}
        M = \Or\left(\frac{\|L\|T}{\epsilon^2}\right), \quad K = \mathcal{O}\left(\frac{1}{\epsilon}\right) 
    \end{equation}
    so that 
    \begin{equation}
        \left\|\mathcal{T}e^{-\int_0^T A(s) \ud s} - \sum_{j=0}^M c_j \mathcal{T}e^{-i\int_0^T (H(s)+k_jL(s)) \ud s}\right\| \leq \frac{\epsilon}{2}. 
    \end{equation}
    Next, according to~\cite{WiebeBerryHoyerEtAl2010}, we have 
    \begin{equation}\label{eqn:Trotter_error_td}
    \begin{split}
        & \quad \norm{ \prod_{l'=0}^{r-1} \prod_{l=0}^{\Xi_p-1} \left(e^{-i H((l'+ \delta_l) T/r)\beta_l T/r} e^{-iL ((l'+ \gamma_l) T/r) \alpha_l k_jT/r}\right) - \mathcal{T}e^{-i\int_0^T (H(s)+k_jL(s)) \ud s}} \\
        &\leq \mathcal{O}\left( \left(\max_{0\leq q \leq p, \tau \in [0,T]} \left( \|H^{(q)}(\tau)\|+ k_j \|L^{(q)}(\tau)\|\right)^{1/(q+1)}\right)^{p+1} \frac{T^{p+1}}{r^p}\right) \\
        & \leq \mathcal{O}\left( \Gamma_p^{p+1} \frac{K^{p+1} T^{p+1}}{r^p}\right), 
    \end{split}
    \end{equation}
    and 
    \begin{equation}
    \begin{split}
        & \quad \left\| \sum_{j=0}^M c_j \mathcal{T}e^{-i\int_0^T (H(s)+k_jL(s)) \ud s} - \sum_{j=0}^M c_j \prod_{l'=0}^{r-1} \prod_{l=0}^{\Xi_p-1} \left(e^{-i H((l'+ \delta_l) T/r)\beta_l T/r} e^{-iL ((l'+ \gamma_l) T/r) \alpha_l k_jT/r}\right)\right\| \\
        & \leq \|c\|_1 \max_j \norm{ \prod_{l'=0}^{r-1} \prod_{l=0}^{\Xi_p-1} \left(e^{-i H((l'+ \delta_l) T/r)\beta_l T/r} e^{-iL ((l'+ \gamma_l) T/r) \alpha_l k_jT/r}\right) - \mathcal{T}e^{-i\int_0^T (H(s)+k_jL(s)) \ud s}} \\
        & \leq \mathcal{O} \left(  \Gamma_p^{p+1} \frac{\|c\|_1 K^{p+1} T^{p+1}}{r^p}  \right). 
    \end{split}
    \end{equation}
    To bound this by $\epsilon/2$, it suffices to choose 
    \begin{equation}
    \begin{split}
        r &= \mathcal{O} \left( \Gamma_p^{1+1/p} \frac{\|c\|_1^{1/p} K^{1+1/p} T^{1+1/p}}{\epsilon^{1/p}} \right) \\
        & = \mathcal{O} \left( \Gamma_p^{1+1/p} \frac{\|c\|_1^{1/p} T^{1+1/p}}{\epsilon^{1+2/p}} \right)
    \end{split}
    \end{equation}

    Query complexity in $O_{\text{coef}}$ is straightforward, and that in the select oracles are $\mathcal{O}(r)$. 
    Using our estimate of $r$ (and noticing that the select oracle can be constructed with $\mathcal{O}(\log(M))$ cost), we complete the proof. 
\end{proof}

\begin{thm}\label{thm:td_homo}
    Consider the homogeneous ODE~\cref{eqn:inhom_general_diff_eq} with $b(t) \equiv 0$. 
    Then, there exists a quantum algorithm that prepares an $\epsilon$-approximation of the state $\ket{\mathcal{T}e^{-\int_0^T A(s) \ud s}u_0}$ with $\Omega(1)$ success probability and a flag indicating success, using queries to $O_L(s,\tau)$ and $O_H(s,\tau)$ a total number of times 
    \begin{equation}
        \mathcal{O} \left( \Gamma_p^{1+1/p} \left(\frac{\|u_0\|}{\|u(T)\|}\right)^{2+2/p} \frac{T^{1+1/p}}{\epsilon^{1+2/p}} \log\left( \frac{\|u_0\|\|L\|T}{\|u(T)\|\epsilon}\right) \right), 
    \end{equation}
    queries to $O_{\text{coef}}$ and $O_{\text{prep}}$ for $\mathcal{O}\left(\frac{\|u_0\|}{\|u(T)\|}\right)$ times, and $\mathcal{O}\left(\log\left(\frac{\|u_0\|\|L\|T}{\|u(T)\|\epsilon}\right)\right)$ ancilla qubits. 
\end{thm}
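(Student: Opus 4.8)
The plan is to combine the block-encoding of~\cref{lem:BE_td_homo} with amplitude amplification, carefully tightening the internal block-encoding precision so that it survives the amplification.

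First I would invoke~\cref{lem:BE_td_homo} with an internal target precision $\epsilon'$, to be fixed later, obtaining a unitary $U_{\mathrm{BE}}$ that is a $(\|c\|_1,\log M,\epsilon')$-block-encoding of $\mc{T}e^{-\int_0^T A(s)\ud s}$ at the stated cost in $O_L$, $O_H$, $O_{\mathrm{coef}}$. The key quantitative remark is that $\|c\|_1=\Or(1)$: the $c_j$'s are the trapezoidal weights for $\int_{\RR}\frac{\ud k}{\pi(1+k^2)}=1$, so $\|c\|_1\to 1$ as the grid is refined and stays bounded by a constant. Applying $U_{\mathrm{BE}}$ to $O_{\mathrm{prep}}\ket{0}=\ket{u_0}/\|u_0\|$ and reading off the $\ket{0}_{\mathrm{anc}}$ block yields, up to an $\Or(\epsilon'/\|u_0\|)$ error in the output vector, the subnormalized state $\tfrac{1}{\|c\|_1\|u_0\|}\mc{T}e^{-\int_0^T A(s)\ud s}u_0$, i.e.\ the target $\ket{u(T)}$ with success amplitude $\|u(T)\|/(\|c\|_1\|u_0\|)$.

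Second I would apply amplitude amplification (as in the LCU discussion around~\cref{lem:LCU_Lemma}) to boost this amplitude to $\Omega(1)$. This takes $\Or(\|c\|_1\|u_0\|/\|u(T)\|)=\Or(\|u_0\|/\|u(T)\|)$ rounds, each using $\Or(1)$ queries to $U_{\mathrm{BE}}$, $U_{\mathrm{BE}}^\dagger$ and $O_{\mathrm{prep}}$, and the reflection about $\ket{0}_{\mathrm{anc}}$ supplies the success flag. Since imperfect amplitude amplification magnifies state errors by a factor proportional to the number of rounds, the per-application error $\Or(\epsilon'/\|u_0\|)$ accumulates to $\Or(\epsilon'/\|u(T)\|)$, so choosing $\epsilon'=\Theta(\epsilon\,\|u(T)\|/\|u_0\|)$ makes the output an $\epsilon$-approximation of $\ket{u(T)}$. (If only a lower bound on $\|u(T)\|$ is available, one uses the standard exponential-search variant of amplitude amplification, which changes nothing up to constants.)

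Finally I would substitute. The per-application cost from~\cref{lem:BE_td_homo} at precision $\epsilon'$ is $\Or\!\big(\Gamma_p^{1+1/p}\|c\|_1^{1/p}T^{1+1/p}\epsilon'^{-1-2/p}\log(\|L\|T/\epsilon')\big)$; multiplying by the $\Or(\|u_0\|/\|u(T)\|)$ amplification rounds and inserting $\epsilon'=\Theta(\epsilon\|u(T)\|/\|u_0\|)$ gives exactly
\[
\Or\!\left(\Gamma_p^{1+1/p}\Big(\tfrac{\|u_0\|}{\|u(T)\|}\Big)^{2+2/p}\tfrac{T^{1+1/p}}{\epsilon^{1+2/p}}\log\!\Big(\tfrac{\|u_0\|\|L\|T}{\|u(T)\|\epsilon}\Big)\right),
\]
together with $\Or(\|u_0\|/\|u(T)\|)$ queries to $O_{\mathrm{coef}}$ and $O_{\mathrm{prep}}$. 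The ancilla count is $\log M$ with $M=\Or(\|L\|T/\epsilon'^2)=\Or(\|L\|T\|u_0\|^2/(\|u(T)\|^2\epsilon^2))$, hence $\Or(\log(\|u_0\|\|L\|T/(\|u(T)\|\epsilon)))$, plus $\Or(1)$ for amplitude amplification. The main obstacle is the bookkeeping in the amplification step: quantifying the error blow-up of robust amplitude amplification and confirming it is genuinely linear in the round count — so that the relative precision $\epsilon'=\Theta(\epsilon\|u(T)\|/\|u_0\|)$ suffices without an extra polynomial overhead — and dealing with the mild point that the number of rounds depends on the a priori unknown norm $\|u(T)\|$.
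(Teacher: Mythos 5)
Your proposal is correct and follows essentially the same route as the paper's proof: invoke \cref{lem:BE_td_homo} at internal precision $\epsilon'=\Theta(\epsilon\norm{u(T)}/\norm{u_0})$, apply it to $O_{\text{prep}}\ket{0}$, amplitude-amplify for $\mathcal{O}(\norm{u_0}/\norm{u(T)})$ rounds using $\norm{c}_1=\mathcal{O}(1)$, and substitute into the lemma's cost. The only cosmetic difference is the error bookkeeping: the paper bounds the error of the post-selected normalized state directly by $2\epsilon'\norm{u_0}/\norm{u(T)}$ (amplitude amplification amplifies exactly that state, so no per-round error accumulation is needed), which lands at the same choice of $\epsilon'$ as your accumulation heuristic.
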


\begin{proof}
    Let $V$ denote the $(\|c\|_1,\log(M),\epsilon')$-block-encoding of $\mc{T} e^{-\int_0^T A(s) \ud s}$ as constructed in~\cref{lem:BE_td_homo}. 
    We further write 
    \begin{equation}
        \|c\|_1 \bra{0}_a V \ket{0}_a = \mc{T} e^{-\int_0^T A(s) \ud s} + E
    \end{equation}
    where $\|E\| \leq \epsilon'$. 
    We start with the state $\ket{0}_a\ket{0}$, where the ancilla register contains $\log(M)$ qubits. 
    After applying $O_{\text{prep}}$ on the system register and $V$, we obtain the state 
    \begin{equation}
        V (I_a \otimes O_{\text{prep}}) \ket{0}_a\ket{0} = \frac{1}{\|c\|_1} \ket{0}_a (\mc{T} e^{-\int_0^T A(s) \ud s})\ket{u_0} + \ket{\perp}. 
    \end{equation}

    Using the inequality $\|x/\|x\|-y/\|y\|\|\leq 2\|x-y\|/\|x\|$ for two vectors $x,y$, we can bound the error in the quantum state after a successful measurement as 
    \begin{equation}
        \left\|\ket{(\mc{T} e^{-\int_0^T A(s) \ud s}+E)u_0} - \ket{u(T)}\right\| \leq \frac{2\|Eu_0\|}{\|u(T)\|} \leq \frac{2\epsilon'\|u_0\|}{\|u(T)\|}. 
    \end{equation}
    In order to bound this error by $\epsilon$, it suffices to choose 
    \begin{equation}
        \epsilon' = \frac{\epsilon \|u(T)\|}{2\|u_0\|}. 
    \end{equation}
    Therefore the overall complexity of a single run of our algorithm can be obtained by~\cref{lem:BE_td_homo} with $\epsilon'$, which becomes 
    \begin{equation}
        \mathcal{O} \left( \Gamma_p^{1+1/p} \left(\frac{\|u_0\|}{\|u(T)\|}\right)^{1+2/p} \frac{\|c\|_1^{1/p} T^{1+1/p}}{\epsilon^{1+2/p}} \log\left(\frac{\|u_0\|\|L\|T}{\|u(T)\|\epsilon}\right) \right)
    \end{equation}
    queries to the matrix input, and $\mathcal{O}(1)$ to the state preparation oracle. 

    The expected number of repeats to get a success, after amplitude amplification, is
    \begin{equation}
        \mathcal{O}\left(\frac{\|c\|_1\|u_0\|}{\|(\mc{T} e^{-\int_0^T A(s) \ud s}+E)u_0\| }\right) \leq \mathcal{O}\left(\frac{\|c\|_1\|u_0\|}{\|u(T)\| - \epsilon' \|u_0\| }\right) = \mathcal{O}\left(\frac{\|c\|_1\|u_0\|}{\|u(T)\|} \right). 
    \end{equation}
    Since $\|c\|_1$ is the trapezoidal rule of the integral $\int_{-K}^K \frac{\ud k}{\pi (1+k^2)}$, we have $\|c\|_1 = \mathcal{O}(1)$. 
    This completes the proof of~\cref{thm:td_homo}. 
\end{proof}

\subsection{Inhomogeneous term}

Here we analyze the complexity of encoding the inhomogeneous term $\int_0^T \mc{T}  e^{ -\int_s^T A(s')\ud s'} b(s) \ud s$. 

\begin{lem}\label{lem:td_inhomo}
    There exists a quantum algorithm which maps $\ket{0}_{a'}\ket{0}_a\ket{0}$ to the state $\frac{1}{\eta} \ket{0}_{a'}\ket{0}_a \widetilde{u} + \ket{\perp}$ such that $\widetilde{u}$ is an $\epsilon$-approximation of $\int_0^T \mc{T}  e^{ -\int_s^T A(s')\ud s'} b(s) \ud s$ and $\eta = \norm{\widetilde{c}}_1$, using queries to the input models of $H$ and $L$ a total number of times 
    \begin{equation}
    \mathcal{O}\left( \Gamma_p^{1+1/p} \frac{\|\widetilde{c}\|_1^{1/p} \|b\|_{L^1}^{1+1/p} T^{1+1/p}}{\epsilon^{1+2/p}} \log\left(\frac{\|L\|\|b\|T}{\epsilon}\right) \right). 
    \end{equation}
    queries to $O_{\text{coef}}'$ and $O_b$ for $\mathcal{O}(1)$ times, and $\log(\Gamma_1 \|b\|_{C^2}T/\epsilon)$ ancilla qubits. 
\end{lem}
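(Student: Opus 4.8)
The plan is to mirror the structure of the homogeneous case (\cref{lem:BE_td_homo} and \cref{thm:td_homo}), replacing the single-index LCU by the two-index LCU over $(j,j')$ built from the discretization in~\cref{eqn:Trotter_inhomo_td}. First I would invoke~\cite[Lemma 52]{GilyenSuLowEtAl2019}: starting from the oracles $O_{\text{coef}}'$, $O_b$, $\text{SEL}_L'$ and $O_H'$, the standard LCU construction (apply $O_{\text{coef}}'$, then $O_b$ to load $\ket{b(s_{j'})}$, then alternate $\text{SEL}_L'(\alpha_l t/(rM_t),\ldots)$ and $O_H'(\beta_l t/(rM_t),\ldots)$ across $l\in[\Xi_p],l'\in[r]$, then $O_{\text{coef}}'^\dagger$) produces a $(\|\widetilde c\|_1,\log(M)+\log(M_t),0)$-block-encoding-like map whose $\ket{0}_{a'}\ket{0}_a$ block is exactly the finite sum $\sum_{j',j}\widetilde c_{j,j'}\prod_{l',l}(\cdots)\ket{b(s_{j'})}$. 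Since we only need the action on the fixed input $\ket{0}$, this yields the claimed form $\frac1\eta\ket{0}_{a'}\ket{0}_a\widetilde u+\ket{\perp}$ with $\eta=\|\widetilde c\|_1$.

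The error analysis then splits into two pieces, exactly as in \cref{lem:BE_td_homo}. The quadrature error: by~\cref{eqn:quadrature_imhom_k,eqn:quadrature_imhom_s}, choosing $K=\|b\|_{L^1}/\epsilon$, $M=\Or(\|L\|\|b\|_{L^1}^{3/2}\|b\|^{1/2}T^{3/2}/\epsilon^2)$ and $M_t=\Or((\|H\|_{C^1}+\|L\|_{C^1}\|b\|_{L^1}/\epsilon)\|b\|_{L^1}^{1/2}\|b\|_{C^2}^{1/2}T^{3/2}/\epsilon)$ bounds $\bigl\|\int_0^T\mc{T}e^{-\int_s^T A}b(s)\ud s-\sum_{j',j}\widetilde c_{j,j'}\mc{T}e^{-i\int_{s_{j'}}^T(H+k_jL)}b(s_{j'})\bigr\|$ by $\epsilon/2$. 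The Trotter error: applying the time-dependent product-formula bound of~\cite{WiebeBerryHoyerEtAl2010} to each propagator $\mc{T}e^{-i\int_{s_{j'}}^T(H(s')+k_jL(s'))\ud s'}$ over the interval $[s_{j'},T]$ of length $\le T$, the per-term error is $\Or(\Gamma_p^{p+1}K^{p+1}T^{p+1}/r^p)$ since $|k_j|\le K$; multiplying by the total weight $\|\widetilde c\|_1$ (recall $b(s_{j'})$ contributes only through its norm, which is absorbed into $\widetilde c_{j,j'}=\tfrac{v_{j'}w_j\|b(s_{j'})\|}{\pi(1+k_j^2)}$, so $\sum_{j,j'}\widetilde c_{j,j'}=\|\widetilde c\|_1$) gives total Trotter error $\Or(\Gamma_p^{p+1}\|\widetilde c\|_1K^{p+1}T^{p+1}/r^p)$. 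Setting this to $\epsilon/2$ forces
\begin{equation}
r=\Or\!\left(\Gamma_p^{1+1/p}\frac{\|\widetilde c\|_1^{1/p}K^{1+1/p}T^{1+1/p}}{\epsilon^{1/p}}\right)=\Or\!\left(\Gamma_p^{1+1/p}\frac{\|\widetilde c\|_1^{1/p}\|b\|_{L^1}^{1+1/p}T^{1+1/p}}{\epsilon^{1+2/p}}\right),
\end{equation}
using $K=\|b\|_{L^1}/\epsilon$. The number of queries to $\text{SEL}_L'$ and $O_H'$ is $\Or(r\Xi_p)=\Or(r)$; since by~\cref{lem:select_oracle_binary} and \cref{app:proofs_oracles} each $\text{SEL}_L'$ costs $\Or(\log M)$ queries to $O_L'$, the total query count to the input models of $H,L$ matches the stated bound up to the $\log(\|L\|\|b\|T/\epsilon)$ factor. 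The queries to $O_{\text{coef}}'$ and $O_b$ are $\Or(1)$, and the ancilla count is $\Or(\log M+\log M_t)=\Or(\log(\Gamma_1\|b\|_{C^2}T/\epsilon))$ after simplifying the expressions for $M,M_t$.

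I expect the main obstacle to be bookkeeping rather than any single hard estimate: carefully justifying that the time-dependent Trotter bound of~\cite{WiebeBerryHoyerEtAl2010} applies uniformly over all starting times $s_{j'}$ with the worst-case constant $\Gamma_p$ (the Hamiltonians $H(s)+k_jL(s)$ have derivatives controlled by $\Gamma_p$ times powers of $K$, uniformly in $j'$), and tracking how the weight $\|\widetilde c\|_1$ — which also equals a Riemann sum of $\int\int\frac{\|b(s)\|}{\pi(1+k^2)}\ud k\ud s=\Or(\|b\|_{L^1})$, though here it is kept symbolic — propagates through the triangle inequality across the double sum. A secondary subtlety is checking that the coherent time-dependent input models $O_L',O_H'$ as defined (which encode evolution for the $j'$-dependent duration $s(M_t-j')$) compose correctly with~\cref{lem:select_oracle_binary} to yield $\text{SEL}_L'$ with only logarithmic overhead; this is asserted in \cref{app:proofs_oracles} and I would simply cite it.
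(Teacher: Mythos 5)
Your proposal is correct and follows essentially the same route as the paper's proof: the LCU construction of the appendix gives the block-encoded double sum with $\eta=\|\widetilde c\|_1$, the error is split by the triangle inequality into the quadrature part (handled with $K=\|b\|_{L^1}/\epsilon$ and the $M,M_t$ choices from the quadrature appendix) and the Trotter part (the time-dependent product-formula bound with worst-case $|k_j|\le K$, weighted by $\|\widetilde c\|_1$), leading to the same choice of $r$ and the $\Or(r\log M)$ query count. Your explicit substitution $K=\|b\|_{L^1}/\epsilon$ and the uniform-in-$s_{j'}$ Trotter bookkeeping are exactly what the paper does, only written out in slightly more detail.
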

\begin{proof}
    According to~\cref{app:inhomo_implementation}, the LCU procedure before measurement gives the state $\frac{1}{\eta} \ket{0}_{a'}\ket{0}_a \widetilde{u} + \ket{\perp}$ such that $\eta = \|\widetilde{c}\|_1$ and 
    \begin{equation}
    \begin{split}
        \widetilde{u} &= \sum_{j'=0}^{M_t} \sum_{j=0}^M  \widetilde{c}_{j,j'} \prod_{l'=0}^{r-1} \prod_{l=0}^{\Xi_p-1} \left(e^{-i H(s_{j'}+(l'+ \delta_l) (T-s_{j'})/r)\beta_l (T-s_{j'})/r} e^{-i L (s_{j'}+(l'+ \gamma_l) (T-s_{j'})/r) \alpha_l k_j(T-s_{j'})/r}\right) \ket{b(s_{j'})}. 
    \end{split}
    \end{equation}
    To bound the error by $\epsilon$, we use \cref{eqn:general_A_inhom_truncate} and~\cite{WiebeBerryHoyerEtAl2010} to obtain 
    \begin{equation}
    \begin{split}
        & \quad \norm{\widetilde{u} - \int_0^T \mathcal{T} e^{-\int_s^T A(s') \ud s'} b(s) \ud s } \\
        & \leq \norm{ \int_0^T \mathcal{T} e^{-\int_s^T A(s') \ud s'} b(s) \ud s  - \sum_{j'=0}^{M_t} \sum_{j=0}^M  \widetilde{c}_{j,j'} \mathcal{T} e^{- i \int_s^T (H(s')+k_jL(s') ) \ud s'} \ket{b(s_{j'})}  } \\
        & \quad + \norm{ \sum_{j'=0}^{M_t} \sum_{j=0}^M  \widetilde{c}_{j,j'} \mathcal{T} e^{- i \int_s^T (H(s')+k_jL(s') ) \ud s'} \ket{b(s_{j'})}  - \widetilde{u} }. 
    \end{split}
    \end{equation}
    Therefore, to bound the error by $\epsilon$, it suffices to choose~\cref{app:quadrature}
    \begin{equation}
        K = \mathcal{O}\left( \frac{\|b\|_{L^1}}{\epsilon}\right), \quad M = \Or\left(\text{poly}\left(\frac{\|L\|\|b\|T}{\epsilon}\right)\right), \quad M_t = \mathcal{O}\left( \text{poly}\left(\frac{\Gamma_1 \|b\|_{C^2} T}{\epsilon}\right)\right), 
    \end{equation}
    and 
    \begin{equation}
    r = \mathcal{O}\left( \Gamma_p^{1+1/p} \frac{\|\widetilde{c}\|_1^{1/p} K^{1+1/p} T^{1+1/p}}{\epsilon^{1/p}} \right). 
    \end{equation}
    The overall query complexity to the matrix input models is $\mathcal{O}(r\log(M))$. 
\end{proof}

\subsection{Linear combination of homogeneous and inhomogeneous terms}

We state a more general result as follows on how to linearly combine two (``block-encoded'') quantum states. 
The algorithm is a special case of LCU, but the errors in the quantum states need to be carefully controlled. 

\begin{lem}\label{lem:LCS}
    Let $x_0$ and $x_1$ denote two (possibly unnormalized) vectors. 
    Suppose that we are given two unitaries $U_0$ and $U_1$ such that $U_j \ket{0}_a\ket{0} = \frac{1}{\eta_j} \ket{0}_a \widetilde{x}_j + \ket{\perp}$ where $\norm{ x_j - \widetilde{x}_j } \leq \epsilon_j$. 
    Then, for any real positive parameters $(\theta_0,\theta_1)$, there exists a quantum algorithm which outputs an $\epsilon$-approximation of the quantum state $\ket{\theta_0 x_0 + \theta_1 x_1}$ with $\Omega(1)$ success probability and a flag indicating success, using $1$ extra ancilla qubit and $\mathcal{O}\left( \frac{\eta_0\theta_0+\eta_1\theta_1}{\norm{\theta_0 x_0 + \theta_1 x_1} } \right)$ queries to $U_0$, $U_1$ and additional one-qubit gate, where the tolerated errors are chosen as 
    \begin{equation}
        \epsilon_0 = \frac{\norm{\theta_0 x_0 + \theta_1 x_1}\epsilon}{4\theta_0}, \quad \epsilon_1 = \frac{\norm{\theta_0 x_0 + \theta_1 x_1}\epsilon}{4\theta_1}. 
    \end{equation}
\end{lem}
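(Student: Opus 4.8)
```latex
\textbf{Proof proposal for \cref{lem:LCS}.}
The plan is to treat this as a two-term instance of the LCU primitive (\cref{lem:LCU_Lemma}), with the twist that the two ``unitaries'' $U_0,U_1$ already produce only approximate block-encoded states, so the error bookkeeping has to be done by hand. First I would set up the outer single-qubit LCU: append one ancilla qubit initialized to $\ket{0}$, and apply a one-qubit rotation $R$ with $R\ket{0} = \frac{1}{\sqrt{\eta_0\theta_0+\eta_1\theta_1}}\bigl(\sqrt{\eta_0\theta_0}\ket{0}+\sqrt{\eta_1\theta_1}\ket{1}\bigr)$. Then apply $U_0$ controlled on the new ancilla being $\ket{0}$ and $U_1$ controlled on it being $\ket{1}$ (each acting on its own $\ket{0}_a\ket{0}$ block), and finally apply $R^\dagger$ on the outer ancilla. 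By the LCU Lemma applied to the coefficient vector $(\eta_0\theta_0,\eta_1\theta_1)$ and the ``unitaries'' whose $\ket{0}_a$-block output is $\tfrac{1}{\eta_j}\widetilde x_j$, projecting all ancilla registers (the outer qubit together with the two $a$-registers) onto all-zeros yields the unnormalized state $\frac{1}{\eta_0\theta_0+\eta_1\theta_1}\bigl(\theta_0\widetilde x_0 + \theta_1\widetilde x_1\bigr)$.

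Next I would control the error. The exact target is $\theta_0 x_0 + \theta_1 x_1$; what we actually produce (before normalization) is proportional to $\theta_0\widetilde x_0+\theta_1\widetilde x_1$, and by the triangle inequality $\norm{(\theta_0\widetilde x_0+\theta_1\widetilde x_1)-(\theta_0 x_0+\theta_1 x_1)} \le \theta_0\epsilon_0 + \theta_1\epsilon_1$. Then I invoke the normalization stability bound $\norm{x/\norm{x} - y/\norm{y}} \le 2\norm{x-y}/\norm{x}$ (already used in the proof of \cref{thm:td_homo}) with $x = \theta_0 x_0+\theta_1 x_1$, giving a state error of at most $2(\theta_0\epsilon_0+\theta_1\epsilon_1)/\norm{\theta_0 x_0+\theta_1 x_1}$. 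Plugging in the stated choices $\epsilon_0 = \norm{\theta_0 x_0+\theta_1 x_1}\epsilon/(4\theta_0)$ and $\epsilon_1 = \norm{\theta_0 x_0+\theta_1 x_1}\epsilon/(4\theta_1)$ makes each of the two contributions $\epsilon/4$, so the total is $\le \epsilon$, as claimed.

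Finally I would account for the success probability and the query count. Without amplitude amplification, the all-zeros measurement succeeds with probability $\bigl(\norm{\theta_0\widetilde x_0+\theta_1\widetilde x_1}/(\eta_0\theta_0+\eta_1\theta_1)\bigr)^2$, and since $\norm{\theta_0\widetilde x_0+\theta_1\widetilde x_1} \ge \norm{\theta_0 x_0+\theta_1 x_1} - (\theta_0\epsilon_0+\theta_1\epsilon_1) = \Omega(\norm{\theta_0 x_0+\theta_1 x_1})$ (the error term is $\le \epsilon/2 \cdot \norm{\theta_0 x_0+\theta_1 x_1}$, negligible for small $\epsilon$), amplitude amplification boosts this to $\Omega(1)$ at the cost of $\mathcal{O}\bigl((\eta_0\theta_0+\eta_1\theta_1)/\norm{\theta_0 x_0+\theta_1 x_1}\bigr)$ rounds, each using $\mathcal{O}(1)$ calls to $U_0$, $U_1$, $R$, $R^\dagger$. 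This gives the claimed query complexity and the single extra ancilla qubit.

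The main obstacle is not any single hard estimate but the care needed in the error bookkeeping: one must make sure the approximation errors $\widetilde x_j$ propagate correctly through the (unnormalized) LCU output and the subsequent renormalization, and separately check that these errors do not spoil the $\Omega(1)$ success probability after amplification — i.e., that $\norm{\theta_0\widetilde x_0+\theta_1\widetilde x_1}$ stays within a constant factor of $\norm{\theta_0 x_0+\theta_1 x_1}$. Everything else is a routine specialization of \cref{lem:LCU_Lemma}.
```
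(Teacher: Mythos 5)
Your proposal is correct and follows essentially the same route as the paper's proof: the same single-qubit rotation $R$ encoding the weights $(\eta_0\theta_0,\eta_1\theta_1)$, controlled applications of $U_0,U_1$, the triangle-inequality plus normalization-stability error bookkeeping leading to the same choices of $\epsilon_0,\epsilon_1$, and amplitude amplification giving the $\mathcal{O}\bigl((\eta_0\theta_0+\eta_1\theta_1)/\norm{\theta_0 x_0+\theta_1 x_1}\bigr)$ query count. Your explicit check that $\norm{\theta_0\widetilde{x}_0+\theta_1\widetilde{x}_1}$ stays within a constant factor of $\norm{\theta_0 x_0+\theta_1 x_1}$ is a small detail the paper leaves implicit, but it is the same argument.
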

\begin{proof}
    Let $R$ be a single-qubit rotation such that 
    \begin{equation}
        R\ket{0} = \frac{1}{\sqrt{\eta_0 \theta_0 + \eta_1 \theta_1}} \left( \sqrt{\eta_0 \theta_0} \ket{0} + \sqrt{\eta_1 \theta_1} \ket{1} \right). 
    \end{equation}
    Then 
    \begin{equation}
        \begin{split}
            & \quad (R^{\dagger} \otimes I \otimes I)\left(\ket{0}\bra{0}\otimes U_0 + \ket{0}\bra{0}\otimes U_1 \right)(R\otimes I \otimes I) \ket{0}_c\ket{0}_a\ket{0} \\
            & = \frac{1}{\eta_0\theta_0+\eta_1\theta_1} \ket{0}_c\ket{0}_a (\theta_0 \widetilde{x}_0 + \theta_1 \widetilde{x}_1) + \ket{\perp}. 
        \end{split}
    \end{equation}
    In order to bound the error in the quantum state by $\epsilon$, it suffices to bound $\norm{(\theta_0 \widetilde{x}_0 + \theta_1 \widetilde{x}_1) - (\theta_0 x_0 + \theta_1 x_1)} $ by $\norm{\theta_0 x_0 + \theta_1 x_1}\epsilon/2$, so we may choose 
    \begin{equation}
        \epsilon_0 = \frac{\norm{\theta_0 x_0 + \theta_1 x_1}\epsilon}{4\theta_0}, \quad \epsilon_1 = \frac{\norm{\theta_0 x_0 + \theta_1 x_1}\epsilon}{4\theta_1}. 
    \end{equation}
    Since the complexity of each run is $\mathcal{O}(1)$, the overall complexity is the number of repeats to get a success, which, after amplitude amplification, becomes 
    \begin{equation}
        \mathcal{O}\left( \frac{\eta_0\theta_0+\eta_1\theta_1}{\norm{\theta_0 \widetilde{x}_0 + \theta_1 \widetilde{x}_1} } \right) = \mathcal{O}\left( \frac{\eta_0\theta_0+\eta_1\theta_1}{\norm{\theta_0 x_0 + \theta_1 x_1} } \right). 
    \end{equation}
\end{proof}

\subsection{Proof of \texorpdfstring{\cref{thm:td_inhomo}}{} }

\begin{proof}[Proof of~\cref{thm:td_inhomo}]
    According to~\cref{lem:BE_td_homo}, we can apply $O_{\text{prep}}$ and the $(\norm{c}_1,\log(M),\epsilon_0)$-block-encoding of $\mathcal{T} e^{-\int_0^T A(s) \ud s}$ to obtain the state 
    \begin{equation}
        \frac{1}{\norm{c}_1 \|u_0\|} \ket{0}_a \widetilde{u}_0 + \ket{\perp}, 
    \end{equation}
    where $\norm{\widetilde{u}_0 - \mathcal{T} e^{-\int_0^T A(s) \ud s} u_0 } \leq \epsilon_0 \|u_0\|$. 
    The algorithm in~\cref{lem:td_inhomo} gives the state 
    \begin{equation}
        \frac{1}{\norm{\widetilde{c}}_1} \ket{0}_{a'}\ket{0}_a \widetilde{u}_1 + \ket{\perp}, 
    \end{equation}
    where $\norm{\widetilde{u}_1 - \int_0^T \mathcal{T}e^{-\int_s^T A(s') \ud s'} b(s) \ud s} \leq \epsilon_1$. 
    So an $\epsilon$-approximation of $\ket{u(T)}$ can be directly constructed using~\cref{lem:LCS} with $\theta_0=\theta_1 = 1$, $\eta_0 = \norm{c}_1 \|u_0\|$, $\eta_1 = \norm{\widetilde{c}}_1$, and choose 
    \begin{equation}
        \epsilon_0 = \frac{\norm{u(T)}\epsilon}{4\|u_0\|}, \quad \epsilon_1 = \frac{\norm{u(T)}\epsilon}{4}. 
    \end{equation}
    
    The overall complexity can be estimated using~\cref{lem:BE_td_homo} and~\cref{lem:td_inhomo}. 
    Specifically, notice that $\|c\|_1 = \mathcal{O}(1)$ as bounded in the proof of~\cref{thm:td_homo} and 
    $\|\widetilde{c}\|_1 = \mathcal{O}(\|b\|_{L^1})$ since $\|\widetilde{c}\|_1$ is the discretized integral $\int_0^T \int_{\RR} \frac{1}{\pi (1+k^2)} |b(s)| \ud k \ud s $ via trapezoidal rule. 
    The number of queries to the matrix input oracles becomes 
    \begin{equation}
        \begin{split}
            & \quad \mathcal{O}\left(  \frac{\eta_0\theta_0+\eta_1\theta_1}{\norm{\theta_0 x_0 + \theta_1 x_1} } \Gamma_p^{1+1/p} \left( \frac{\|c\|_1^{1/p} T^{1+1/p}}{\epsilon_0^{1+2/p}} + \frac{\|\widetilde{c}\|_1^{1/p} \|b\|_{L^1}^{1+1/p} T^{1+1/p}}{\epsilon_1^{1+2/p}} \right) \log(M) \right)\\
            &= 
            \mathcal{O}\left( \left( \frac{\|u_0\|+\|b\|_{L^1}}{\norm{u(T)} } \right)^{2+2/p} \Gamma_p^{1+1/p} \frac{  T^{1+1/p}}{\epsilon^{1+2/p}}  \log\left(\frac{\|u_0\|+\|b\|}{\|u(T)\|} \frac{\|L\|T}{\epsilon}\right) \right),  
        \end{split}
    \end{equation}
    and the number of queries to $O_{\text{prep}}$, $O_b$ and additional one-qubit gate is $\mathcal{O}\left( \frac{\eta_0\theta_0+\eta_1\theta_1}{\norm{\theta_0 x_0 + \theta_1 x_1} } \right) = \mathcal{O}\left( \frac{\|u_0\|+\|b\|_{L^1}}{\norm{u(T)} } \right)$. 
\end{proof}

\REV{\cref{thm:td_inhomo} requires $A(t)$ to be $p$-th order continuously differentiable due to the dependence on $\Gamma_p$. 
This assumption is solely due to the usage of the $p$-th order product formula and may be weakened using other methods for solving the time-dependent Hamiltonian simulation problems. 
For example,  the LCHS with the truncated Dyson series method~\cite{LowWiebe2019} only requires $A(t)$ to be first-order continuously differentiable to achieve high order accuracy. }

\section{ODEs with time-independent matrix \texorpdfstring{$A$}{} }

When $A(t) \equiv A$ is time-independent, we will only assume the Hamiltonian simulation oracles $O_L(s) = e^{-i L s}$ and $O_H(s) = e^{-i H s}$ for fixed $s$. 
Then all the coherent encoding of the time evolution and the select oracles in LCU procedure can be constructed using $\mathcal{O}(\log(M)\log(M_t))$ queries to $O_L$ and $O_H$. 
In particular, $\text{SEL}_L$ can be constructed in the same way as in~\cref{app:proofs_oracles}. 
The oracle 
\begin{equation}
    O_L' = \sum_{j'=0}^{M_t} \ket{j'}\bra{j'} \otimes e^{-i L s (M_t-j')} = \sum_{j'=0}^{M_t} \ket{M_t-j'}\bra{M_t-j'} \otimes \left( {e^{-i L s}}\right)^{j'} , 
\end{equation}
which can be implemented by applying Pauli-X gates on each qubit of the ancilla register then using~\cref{lem:select_oracle_binary} with $\mathcal{O}(\log(M_t))$ queries to $O_L$. 
The oracle 
\begin{equation}
    \text{SEL}_L' = \sum_{j'=0}^{M_t} \sum_{j=0}^M \ket{j'}\bra{j'} \otimes \ket{j} \bra{j} \otimes e^{- i L k_j s (M_t-j')} = \sum_{j=0}^M   \ket{j} \bra{j} \otimes \left( \sum_{j'=0}^{M_t} \ket{j'}\bra{j'} \otimes  e^{- i L s (M_t-j')}  \right)^{k_j}. 
\end{equation}
So it can be constructed using $O_L'$ for $\mathcal{O}(\log(M))$ times and thus $O_L$ for $\mathcal{O}(\log(M)\log(M_t))$ times. 
The oracles for $H$ can be constructed in a similar (and even simpler) manner. 

The algorithm described in the main text can be directly applied to this special case, and the product formula degenerates to its time-independent version. 
For the time-independent product formula, we may use a better error bound proved in~\cite{ChildsSuTranEtAl2020} that the short time error (\emph{i.e.},~\cref{eqn:Trotter_error_td}) now becomes 
\begin{equation}
    \mathcal{O}\left( \sum_{H_q \in\left\{H,k_j L \right\} }\norm{[H_p,\cdots,[H_1,H_0]]} \frac{T^{p+1}}{r^p}  \right) = \mathcal{O} \left( \Lambda_p^{p+1} \frac{K^p T^{p+1}}{r^p}  \right), 
\end{equation}
where 
\begin{equation}
    \Lambda_p = \left(\sum_{H_q \in\left\{H, L \right\} }\norm{[H_p,\cdots,[H_1,H_0]]}\right)^{1/(p+1)}. 
\end{equation}
Compared to~\cref{eqn:Trotter_error_td}, we can replace the parameter $\Gamma_p$ by $\Lambda_p$ and reduce the order of $K$ by $1$ in the Trotter error. 
This will yield an improved complexity estimates since $\Gamma_p \geq \Lambda_p$. 

For completeness, we state the complexity of our algorithm applied to ODEs with time-independent $A$ in the following two theorems.

\begin{thm}\label{thm:ti_homo}
    Consider the ODE~\cref{eqn:inhom_general_diff_eq} with time-independent $A(t)\equiv A$ and $b(t) \equiv 0$. 
    Then, there exists a quantum algorithm that prepares an $\epsilon$-approximation of the state $\ket{e^{-AT}u_0}$ with $\Omega(1)$ success probability and a flag indicating success, using queries to $O_L(s)$ and $O_H(s)$ a total number of times 
    \begin{equation}
        \widetilde{\mathcal{O}} \left( \Lambda_p^{1+1/p} \left(\frac{\|u_0\|}{\|u(T)\|}\right)^{2+1/p} \frac{T^{1+1/p}}{\epsilon^{1+1/p}}  \right)
    \end{equation}
    queries to $O_{\text{coef}}$ and $O_{\text{prep}}$ for $\mathcal{O}\left(\frac{\|u_0\|}{\|u(T)\|}\right)$ times, and $\mathcal{O}\left(\log\left(\frac{\|u_0\|\|L\|T}{\|u(T)\|\epsilon}\right)\right)$  ancilla qubits. 
\end{thm}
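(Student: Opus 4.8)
The plan is to follow the same route as the proof of \cref{thm:td_homo}, but replacing the generic $p$-th order product formula error bound by the commutator bound of~\cite{ChildsSuTranEtAl2020} available in the time-independent setting. First I would truncate the LCHS integral~\cref{eqn:lchs} to $[-K,K]$ with $K=\mathcal{O}(1/\epsilon')$, incurring a tail error $\mathcal{O}(\epsilon')$, and discretize it with the trapezoidal rule using $M=\mathcal{O}(\|L\|T/\epsilon'^{2})$ nodes so that the quadrature error is at most $\epsilon'/2$ (using \cref{eqn:quadrature_homogeneous}). This yields $e^{-AT}\approx\sum_{j=0}^{M}c_j\,e^{-i(H+k_jL)T}$ with $\|c\|_1=\mathcal{O}(1)$.

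Next I would implement each $U_j=e^{-i(H+k_jL)T}$ by a common $p$-th order product formula with $r$ steps and invoke the key estimate: since $[L,L]=0$, any nonzero nested commutator $[B_p,\dots,[B_1,B_0]]$ with $B_q\in\{H,k_jL\}$ contains at most $p$ factors equal to $k_jL$, so its norm is bounded by $k_j^{p}\le K^{p}$ times the corresponding commutator with $k_jL$ replaced by $L$. Hence the full Trotter error of $U_j$ is $\mathcal{O}(\Lambda_p^{p+1}K^{p}T^{p+1}/r^{p})$, and summing against $c_j$ (with $\|c\|_1=\mathcal{O}(1)$) gives the same bound for $\|\sum_j c_j v_j-\sum_j c_j U_j\|$; setting it $\le\epsilon'/2$ forces $r=\mathcal{O}(\Lambda_p^{1+1/p}KT^{1+1/p}/\epsilon'^{1/p})=\mathcal{O}(\Lambda_p^{1+1/p}T^{1+1/p}/\epsilon'^{1+1/p})$. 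As in \cref{app:proofs_oracles}, the select oracle $\mathrm{SEL}_L(s)=\sum_j\ket{j}\bra{j}\otimes e^{-iLk_js}$ is built from $O_L(s)=e^{-iLs}$ with only $\mathcal{O}(\log M)$ queries via the factorization $e^{iLKs}(e^{-2iLKs/M})^{j}$ and \cref{lem:select_oracle_binary} (and similarly, more simply, for $H$). The LCU circuit $O_{\mathrm{coef}}^{\dagger}(\cdots)O_{\mathrm{coef}}$ then produces a $(\|c\|_1,\log M,\epsilon')$-block-encoding of $e^{-AT}$, the exact analog of \cref{lem:BE_td_homo} with $\Gamma_p\to\Lambda_p$ and one power of $K$ removed.

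Finally I would apply this block-encoding to $O_{\mathrm{prep}}\ket{0}=\ket{u_0}$ and amplitude-amplify. As in \cref{thm:td_homo}, the inequality $\|x/\|x\|-y/\|y\|\|\le 2\|x-y\|/\|x\|$ shows that to obtain an $\epsilon$-approximation of $\ket{u(T)}=\ket{e^{-AT}u_0}$ it suffices to take $\epsilon'=\epsilon\|u(T)\|/(2\|u_0\|)$; the success amplitude is $\Theta(\|u(T)\|/(\|c\|_1\|u_0\|))$, so $\mathcal{O}(\|u_0\|/\|u(T)\|)$ rounds of amplitude amplification suffice, each using $\mathcal{O}(1)$ queries to $O_{\mathrm{prep}}$ and $O_{\mathrm{coef}}$. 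Multiplying the per-run matrix-query count $\mathcal{O}(r\log M)$ by the number of rounds and substituting $\epsilon'=\epsilon\|u(T)\|/(2\|u_0\|)$ and $K=\mathcal{O}(\|u_0\|/(\epsilon\|u(T)\|))$ gives the stated $\widetilde{\mathcal{O}}(\Lambda_p^{1+1/p}(\|u_0\|/\|u(T)\|)^{2+1/p}T^{1+1/p}/\epsilon^{1+1/p})$, while $\log M$ becomes $\mathcal{O}(\log(\|u_0\|\|L\|T/(\|u(T)\|\epsilon)))$ ancilla qubits. I expect the only non-routine point to be verifying that the $[L,L]=0$ cancellation genuinely removes one power of $K$ from the commutator bound, so that the exponent of $1/\epsilon$ drops from $1+2/p$ to $1+1/p$; since the improved short-time error is already recorded in the preceding discussion, this step can simply cite~\cite{ChildsSuTranEtAl2020}, and everything else is a direct transcription of the time-dependent argument.
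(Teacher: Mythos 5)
Your proposal is correct and matches the paper's own (largely implicit) argument: the paper proves \cref{thm:ti_homo} by rerunning the analysis of \cref{lem:BE_td_homo} and \cref{thm:td_homo} with the time-independent commutator bound of~\cite{ChildsSuTranEtAl2020}, which replaces $\Gamma_p$ by $\Lambda_p$ and drops one power of $K$ exactly as you argue via the vanishing of nested commutators containing only $L$'s, together with the same $\mathcal{O}(\log M)$ select-oracle construction, choice $\epsilon'=\Theta(\epsilon\|u(T)\|/\|u_0\|)$, and amplitude amplification. Your bookkeeping of $K$, $M$, $r$, the ancilla count, and the state-preparation query count reproduces the stated complexities, so no gaps remain.
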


\begin{thm}\label{thm:ti_inhomo}
    Consider the ODE~\cref{eqn:inhom_general_diff_eq} with time-independent $A(t)\equiv A$. 
    Then, there exists a quantum algorithm that prepares an $\epsilon$-approximation of the state $\ket{u(T)}$ with $\Omega(1)$ success probability and a flag indicating success, using 
    \begin{enumerate}
        \item queries to $O_L(s)$ and $O_H(s)$ a total number of times 
        \begin{equation}
            \widetilde{\mathcal{O}}\left( \left( \frac{\|u_0\|+\|b\|_{L^1}}{\norm{u(T)} } \right)^{2+1/p} \Lambda_p^{1+1/p} \frac{  T^{1+1/p}}{\epsilon^{1+1/p}} \log^2(\Gamma_1 \|b\|_{C^2}) \right), 
        \end{equation} 
        \item queries to $O_{\text{prep}}$, $O_b$, $O_{\text{coef}}$ and  $O_{\text{coef}}'$ for $\mathcal{O}\left( \frac{ \|u_0\|+ \|b\|_{L^1}}{\norm{u(T)} } \right)$ times,  
        \item $\mathcal{O}(\log(\Gamma_1 \|b\|_{C^2} T/\epsilon))$ ancilla qubits, 
        \item $\mathcal{O}\left( \frac{ \|u_0\|+ \|b\|_{L^1}}{\norm{u(T)} } \right)$ additional one-qubit gates. 
    \end{enumerate}
\end{thm}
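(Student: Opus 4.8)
The plan is to transcribe the proof of \cref{thm:td_inhomo} into the time-independent setting, exploiting the two simplifications already isolated above: the coherent select oracles are now assembled from the fixed-time Hamiltonian-simulation oracles $O_L(s)=e^{-iLs}$ and $O_H(s)=e^{-iHs}$ at cost $\mathcal{O}(\log(M)\log(M_t))$ apiece (exactly as in the construction of $\text{SEL}_L'$ given above), and the short-time product-formula error is now controlled by the commutator bound of~\cite{ChildsSuTranEtAl2020} rather than the generic bound of~\cite{WiebeBerryHoyerEtAl2010}. Concretely, I would first re-prove the analogues of \cref{lem:BE_td_homo} and \cref{lem:td_inhomo} with $\Gamma_p$ replaced by $\Lambda_p$ and the $p$-th order short-time error of the formula for $e^{-i(H+k_jL)T/r}$ replaced by $\mathcal{O}(\Lambda_p^{p+1}K^pT^{p+1}/r^p)$. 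The only point that requires care here is that, after absorbing the scalar $k_j$ into the generator $k_jL$, the nested-commutator norms $\norm{[H_p,\dots,[H_1,H_0]]}$ over $H_q\in\{H,k_jL\}$ are bounded uniformly in $j$ by $K^p$ times the corresponding norms over $H_q\in\{H,L\}$, which is exactly what drops the power of $K$ from $p+1$ to $p$. Summing the per-term Trotter error against $\|c\|_1=\mathcal{O}(1)$ (homogeneous block) and $\|\widetilde{c}\|_1=\mathcal{O}(\|b\|_{L^1})$ (inhomogeneous block), and choosing $K=\mathcal{O}(1/\epsilon')$ for the homogeneous block (resp.\ $K=\mathcal{O}(\|b\|_{L^1}/\epsilon')$ for the inhomogeneous one), then gives $r=\widetilde{\mathcal{O}}(\Lambda_p^{1+1/p}T^{1+1/p}/{\epsilon'}^{1+1/p})$ per block, which is precisely the place where the $\epsilon$-exponent improves from $1+2/p$ to $1+1/p$.

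I would then assemble the state exactly as in the proof of \cref{thm:td_inhomo}: apply $O_{\text{prep}}$ together with the $(\|c\|_1,\log M,\epsilon_0)$-block-encoding of $e^{-AT}$ to obtain $\tfrac{1}{\|c\|_1\|u_0\|}\ket{0}_a\widetilde{u}_0+\ket{\perp}$ with $\norm{\widetilde{u}_0-e^{-AT}u_0}\le\epsilon_0\|u_0\|$, run the inhomogeneous subroutine to obtain $\tfrac{1}{\|\widetilde{c}\|_1}\ket{0}_{a'}\ket{0}_a\widetilde{u}_1+\ket{\perp}$ with $\norm{\widetilde{u}_1-\int_0^T e^{-A(T-s)}b(s)\ud s}\le\epsilon_1$, and then invoke \cref{lem:LCS} with $\theta_0=\theta_1=1$, $\eta_0=\|c\|_1\|u_0\|$, $\eta_1=\|\widetilde{c}\|_1$. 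This forces $\epsilon_0=\|u(T)\|\epsilon/(4\|u_0\|)$ and $\epsilon_1=\|u(T)\|\epsilon/4$, and yields an outer repetition count of $\mathcal{O}((\eta_0+\eta_1)/\|u(T)\|)=\mathcal{O}((\|u_0\|+\|b\|_{L^1})/\|u(T)\|)$ after amplitude amplification (using $\|c\|_1=\mathcal{O}(1)$ and $\|\widetilde{c}\|_1=\mathcal{O}(\|b\|_{L^1})$), which is simultaneously the query count to $O_{\text{prep}},O_b,O_{\text{coef}},O_{\text{coef}}'$ and to the extra one-qubit gates. Substituting $\epsilon_0,\epsilon_1$ into the per-run matrix-query bounds and multiplying by the repetition count, with $M=\text{poly}(\|L\|\|b\|T/\epsilon)$ and $M_t=\text{poly}(\Gamma_1\|b\|_{C^2}T/\epsilon)$ contributing the $\log(M)\log(M_t)=\log^2(\cdot)$ overhead (the remaining polylog in $T/\epsilon$ being absorbed into $\widetilde{\mathcal{O}}$), produces the stated $\widetilde{\mathcal{O}}(((\|u_0\|+\|b\|_{L^1})/\|u(T)\|)^{2+1/p}\Lambda_p^{1+1/p}T^{1+1/p}\epsilon^{-(1+1/p)}\log^2(\Gamma_1\|b\|_{C^2}))$; the ancilla count $\mathcal{O}(\log(\Gamma_1\|b\|_{C^2}T/\epsilon))$ is just $\log M_t$ (the $\log M$ LCU register and the single combining qubit being subsumed). \cref{thm:ti_homo} is the $b\equiv 0$ restriction of the same argument.

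The main obstacle is the relative-error bookkeeping rather than anything conceptually new. Because postselection renormalizes $\|u_0\|$ and $\|b\|_{L^1}$ down to $\|u(T)\|$, one cannot ask for absolute error $\epsilon$ in the block-encoded operators: each tolerated error $\epsilon_j$ must carry a factor $\|u(T)\|/\|u_0\|$ (or $\|u(T)\|$), and it is this rescaling — compounded with the ${\epsilon'}^{-(1+1/p)}$ dependence of $r$ and the $\mathcal{O}((\|u_0\|+\|b\|_{L^1})/\|u(T)\|)$ repetitions — that produces the $(\cdot)^{2+1/p}$ prefactor, with one power coming from the repetitions and $1+1/p$ powers from the Trotter step count. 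The only genuinely new estimate relative to \cref{thm:td_inhomo} is verifying the uniform-in-$j$ commutator bound that justifies the $K^p$ (rather than $K^{p+1}$) scaling; once that is in place, the remainder is a mechanical substitution.
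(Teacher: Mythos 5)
Your proposal is correct and follows essentially the same route as the paper, which obtains \cref{thm:ti_inhomo} by rerunning the proof of \cref{thm:td_inhomo} with exactly the two modifications you isolate: the select oracles built from $O_L(s),O_H(s)$ at $\mathcal{O}(\log(M)\log(M_t))$ cost, and the commutator-based Trotter bound of~\cite{ChildsSuTranEtAl2020} giving the $\mathcal{O}(\Lambda_p^{p+1}K^pT^{p+1}/r^p)$ error (since a nested commutator consisting solely of $k_jL$ vanishes, leaving at most $p$ powers of $k_j$), followed by the identical assembly via \cref{lem:LCS} and the same relative-error bookkeeping.
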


\section{Proof of \MakeLowercase{\texorpdfstring{\cref{thm:cap}}{}} }
%\section{Proof of \cref{thm:cap}} 

Suppose we are given the oracle $O_{V_I}: \ket{\vr}\ket{0} \rightarrow \ket{\vr}\ket{V_I(\vr)}$. 
Here with a slight abuse of notation, $\ket{\vr}$ represents the binary encoding of some related index of $\vr$ after spatial discretization. 
Since $V_I$ is a diagonal matrix, $O_L(s) = e^{-i L s}$ can be constructed fast-forwardly with a single use of $O_{V_I}$ for any $s$~\cite{ahokas2004improved}. 
Then, the select oracle $\text{SEL}_L(s) = \sum_{j=0}^M \ket{j}\bra{j} \otimes e^{-iLk_j s}$ can be constructed by the same approach as in~\cref{app:proofs_oracles} using $\mathcal{O}(\log(M))$ queries to $O_{V_I}$. 

For the matrix $V_R$, we assume the sparse input oracle $O_{V_R}: \ket{\vr}\ket{s}\ket{0} \rightarrow \ket{\vr}\ket{s}\ket{V_R(\vr,s)}$, where, similar to $O_{V_I}$, $\ket{s}$ represents the encoding of the related discrete index (and may vary on different sub-intervals). 
Note that $V_R$ is a diagonal matrix, and $-\Delta_\vr/2$ after spatial discretization is a tri-diagonal matrix with diagonal entries $1/N^2$ and off-diagonal entries $-1/(2N^2)$. 
For a fixed time step $h$ and any integer $m$ such that $[mh,(m+1)h] \subset [0,T]$, we may construct a HAM-T oracle that block encodes $H(t) = -\frac12 \Delta_\vr + V_R(t)$ for $t\in[mh,(m+1)h]$, namely 
\begin{equation}
    \bra{0}_a \text{HAM-T}_{H,m} \ket{0}_a = \sum_{l=0}^{M_H-1} \ket{l}\bra{l} \otimes \frac{H(mh+lh/M_H)}{\alpha_{H}} 
\end{equation}
with $\mathcal{O}(1)$ uses of $O_{V_R}$~\cite{GilyenSuLowEtAl2019}. 
Here $\alpha_H = \mathcal{O}(N^2+\max_t\|V_R(t)\|)$, and $M_H$ is the number of the grid points used in each step of the truncated Dyson series method. 

Now we construct the HAM-T oracle of interaction picture Hamiltonian $H_I$. 
We first construct the select oracles
\begin{equation}
    \text{SEL}_{L,m} = \sum_{j=0}^M \sum_{l=0}^{M_H-1} \ket{j}\bra{j} \otimes \ket{l}\bra{l} \otimes e^{-i L k_j (mh+lh/M_H)}
\end{equation}
and 
\begin{equation}
    \text{SEL}_{L,m}' = \sum_{j=0}^M \sum_{l=0}^{M_H-1} \ket{j}\bra{j} \otimes \ket{l}\bra{l} \otimes e^{i L k_j (mh+lh/M_H)}
\end{equation}
using the same approach as in~\cref{app:proofs_oracles} with $\mathcal{O}(\log(M_H))$ queries to $\text{SEL}_L(s)$. 
Then 
\begin{equation}
    \text{HAM-T}_{H_I,m} \coloneqq (I_{n_a} \otimes \text{SEL}'_{L,m}) (I_{\log(M)} \otimes \text{HAM-T}_{H,m} ) (I_{n_a} \otimes \text{SEL}_{L,m} ) 
\end{equation}
gives the HAM-T oracle of $H_I(t;k)$ that 
\begin{equation}
    \bra{0}_a \text{HAM-T}_{H_I,m} \ket{0}_a = \sum_{j=0}^M \sum_{l=0}^{M_H-1} \ket{j}\bra{j}\otimes \ket{l}\bra{l} \otimes \frac{H_I (mh+lh/M_H;k_j)}{\alpha_{H}}. 
\end{equation}
Constructing $\text{HAM-T}_{H_I,m}$ requires $\mathcal{O}(\log(M_H)\log(M))$ queries to $O_{V_I}$ and $O_{V_R}$. 

Notice that this $\text{HAM-T}_{H_I,m}$ serves as the input model of the truncated Dyson series method in~\cite{LowWiebe2019} (which is denoted by $\text{HAM-T}_j$ there). 
By~\cite[Corollary 4]{LowWiebe2019}, for $\epsilon'>0$, we may implement a select oracle 
\begin{equation}
    \text{SEL}_{W} = \sum_{j=0}^M  \ket{j}\bra{j}\otimes W_j
\end{equation}
where 
\begin{equation}
    \norm{W_j - \left(\mc{T} e^{- i \int_0^T H_I(s;k_j) \ud s}\right)} \leq \epsilon', 
\end{equation}
with failure probability at most $\mathcal{O}(\epsilon')$. 
The number of the queries to $\text{HAM-T}_{H_I,m}$ is $\mathcal{O}(\alpha_H T \log(\alpha_H T/\epsilon'))$, and $M_H$ should be choose as $\mathcal{O}\left( \frac{T}{\alpha_H \epsilon} (\alpha_H^2 + K + \max_t\|V_R'(t)\|) \right)$. 
Then, the operator
\begin{equation}
    \text{SEL}_U \coloneqq \text{SEL}_L(T) \text{SEL}_{W} \text{SEL}_L(-T) = \sum_{j=0}^M  \ket{j}\bra{j}\otimes U_j
\end{equation}
where 
\begin{equation}
    \norm{U_j - e^{-iLk_j T}\left(\mc{T} e^{- i \int_0^T H_I(s;k_j) \ud s}\right) e^{iLk_j T}} \leq \epsilon'. 
\end{equation}

The operator $\text{SEL}_U$ serves as the select oracle in the LCU step. 
After the LCU as in our general algorithm, we obtain a quantum state $\frac{1}{\|c\|_1 \|u_0\|} \ket{0} \widetilde{u} + \ket{\perp}$, where 
\begin{equation}
    \widetilde{u} = \sum_{j=0}^M c_j U_j u_0. 
\end{equation}
The final error in the quantum state can be bounded as 
\begin{equation}
    \begin{split}
        \norm{\ket{u(T)} - \ket{\widetilde{u}}} & \leq \frac{2}{\|u(T)\|} \norm{u(T)-\widetilde{u}} \\
        & \leq \frac{2}{\|u(T)\|} \norm{u(T) - \sum_{j=0}^M c_j \mc{T} e^{- i \int_0^T (H(s)+k_j L) \ud s} u_0}  + \frac{2}{\norm{u(T)}}\sum_{j=0}^M |c_j| \norm{\mc{T} e^{- i \int_0^T (H(s)+k_j L) \ud s}-U_j} \norm{u_0} \\
        & \leq \frac{2\|u_0\|}{\|u(T)\|} \norm{\mc{T}e^{-\int_0^T A(s) \ud s} - \sum_{j=0}^M c_j \mc{T} e^{- i \int_0^T (H(s)+k_j L) \ud s}} + \frac{2}{\norm{u(T)}}\norm{c}_1 \norm{u_0} \epsilon', 
    \end{split}
\end{equation}
where the first part is the quadrature error, and the second part is the simulation error. 
To bound the overall error by $\epsilon$, we choose 
\begin{equation}
    M = \frac{\|V_I\|T}{\epsilon^2}, \quad \epsilon' = \frac{\epsilon \norm{u(T)}}{ 4\norm{c}_1 \norm{u_0}}. 
\end{equation}
With this choice and by $\norm{c}_1 = \mathcal{O}(1)$, $\alpha_H = \mathcal{O}(N^2+\max_t\|V_R(t)\|)$ and $M_H=\mathcal{O}\left( \frac{T}{\alpha_H \epsilon} (\alpha_H^2 + K + \max_t\|V_R'(t)\|) \right)$, the number of queries to $O_{V_I}$ and $O_{V_R}$ in each run of the LCU step becomes 
\begin{equation}
\begin{split}
    &\quad \mathcal{O}\left( \alpha_H T \log\left(\frac{\alpha_H T}{\epsilon'}\right) \log(M_H)\log(M) \right) \\
    &= \mathcal{O}\left( (N^2+\max_t\|V_R(t)\|) T \log\left(\frac{\norm{u_0}(N^2+\max_t\|V_R(t)\|)T}{\norm{u(T)}\epsilon} \right) \log\left(\frac{T(N+\max_t\|V_R'(t)\|)}{\epsilon}\right)\log\left(\frac{\|V_I\|T}{\epsilon}\right) \right). 
\end{split}
\end{equation}
With amplitude amplification, the number of repeats to get a success is $\mathcal{O}(\norm{c}_1\norm{u_0} / \norm{\widetilde{u}} ) = \mathcal{O}(\norm{u_0} / \norm{u(T)})$. 
This completes the proof.

\end{document}